\apptocmd{\sloppy}{\hbadness 10000\relax}{}{} % chktex 1
\keywords{Decidable Logic, Quantifier Instantiation, EPR, Inductive Invariants, Deductive Verification, Decision Procedures}
\newlist{compactenum}{enumerate}{3}
\setlist[compactenum]{topsep=0pt,partopsep=0pt,itemsep=0pt,parsep=0pt}
\setlist[compactenum,1]{label=\arabic*}
\setlist[compactenum,2]{label=\alph*}
\setlist[compactenum,3]{label=\roman*}
\newif\iflong%
\newif\iflonglong%
\newif\ifcomments%
\newcommand{\para}[1]{\ultpara{#1}}
\newcommand{\sectionette}[1]{\ultpara{#1}}
\newcommand{\ultpara}[1]{\paragraph{\textbf{#1}}}
\newcommand{\til}{,\ldots,}
\newcommand{\Nat}{{\mathbb{N}}}
\newcommand{\ov}{\overline}
\renewcommand{\phi}{\varphi}
\newcommand{\A}{{\mathcal{A}}}        % the logical structure A
\newcommand{\sland}{\;\land\;}
\renewcommand{\implies}{{\Rightarrow}}
\newcommand{\true}{{\textit{true}}}
\newcommand{\false}{{\textit{false}}}
\newcommand{\vocabulary}{{\Sigma}}
\newcommand{\struct}{{\sigma}}
\newcommand{\Dom}{{D}}
\newcommand{\Int}{{\mathcal{I}}}
\newcommand{\assgn}{v}
\newcommand{\Init}{\phi_0}
\newcommand{\PhiProp}{\phi_{\Property}}
\newcommand{\Property}{{P}}
\newcommand{\state}{s}
\newcommand{\Univ}{{\forall^*}}
\newcommand{\gV}[1]{\langle\textit{#1}\rangle}
\newcommand{\btuple}[1]{\overline{#1}}
\newcommand{\bvar}{\textbf{v}}
\newcommand{\brel}{\textbf{r}}
\newcommand{\bfun}{\textbf{f}}
\newcommand{\binsert}{\texttt{insert}}
\newcommand{\babort}{\texttt{abort}}
\newcommand{\bskip}{\texttt{skip}}
\newcommand{\baction}{\texttt{action}}
\newcommand{\bdo}{\texttt{do}}
\newcommand{\bwhile}{\texttt{while}}
\newcommand{\bassume}{\texttt{assume}}
\newcommand{\bif}{\texttt{if}}
\newcommand{\bthen}{\texttt{then}}
\newcommand{\belse}{\texttt{else}}
\newcommand{\bite}{\texttt{ite}}
\newcommand{\cinit}{\Cmd_{\texttt{init}}}
\newcommand{\cbody}{\Cmd_{\texttt{body}}}
\newcommand{\cfinal}{\Cmd_{\texttt{final}}}
\newcommand{\phiqf}{\varphi_{\texttt{QF}}}
\newcommand{\phiea}{\varphi_{\texttt{EA}}}
\newcommand{\phiae}{\varphi_{\texttt{AE}}}
\newcommand{\phiaf}{\varphi_{\texttt{AF}}}
\newcommand{\gVphiqf}{{\langle\phiqf\rangle}}
\newcommand{\gVphiea}{{\langle\phiea\rangle}}
\newcommand{\gVphiae}{{\langle\phiae\rangle}}
\newcommand{\gVphiaf}{{\langle\phiaf\rangle}}
\newcommand{\IF}{\texttt{if}}
\newcommand{\THEN}{\texttt{then}}
\newcommand{\ELSE}{\texttt{else}}
\newcommand{\ACTION}[2]{{\texttt{action #1}}(#2) \; \{} % chktex 36
\newcommand{\ACTIONNAME}[1]{{\texttt{#1}}}
\newcommand{\ENDACTION}{\}}
\newcommand{\INSERTM}[2]{{#1} . \binsert\left({#2} \right)} % chktex 36 chktex 26
\newcommand{\INSERTMU}[3]{{#1} . \binsert \left({#2}~|~{#3}\right)} % chktex 26 chktex 1
\newcommand{\REMOVEM}[2]{#1 := #1 \setminus\{ #2 \}} % chktex 26
\newcommand{\ASSUME}{\texttt{assume}}
\newcommand{\ASSUMEM}[1]{\ASSUME\,#1}
\newcommand{\ABORT}{\texttt{abort}}
\newcommand{\COMMENT}[1]{\texttt{\# #1}}
\newcommand{\INDENT}{\quad}
\newcommand{\CSEP}{;}
\newcommand{\INSTRCOMMENTS}{/@}
\newcommand{\INSTRCOMMENT}{\INSTRCOMMENTS \quad} % chktex 1
\newcommand{\INSTRREPLACE}{\INSTRCOMMENTS \hookrightarrow} % chktex 1
\newcommand{\UPDATECODE}[1]{\INSTRCOMMENT #1} % chktex 1
\newcommand{\DERIVED}[2]{\INSTRCOMMENT #1 \equiv #2} % chktex 1
\newcommand{\INVARIANT}{\mbox{Invariant}\;}
\newcommand{\mPending}{{\textit{pending}}}
\newcommand{\mSent}{{\textit{sent}}}
\newcommand{\mLeader}{{\textit{leader}}}
\newcommand{\mRingNext}{{\textit{ring\_next}}}
\newcommand{\mN}{{\textit{r}}}
\newcommand{\mReq}{{\textit{req}}}
\newcommand{\mResp}{{\textit{resp}}}
\newcommand{\mDBReq}{{\textit{db\_req}}}
\newcommand{\mDB}{{\textit{db}}}
\newcommand{\mDBResp}{{\textit{db\_resp}}}
\newcommand{\mT}{{\textit{t}}}
\newcommand{\mMatch}{{\textit{match}}}
\newcommand{\mvU}{\mathit{u}}
\newcommand{\mvR}{\mathit{q}}
\newcommand{\mvP}{\mathit{p}}
\newcommand{\mvID}{\mathit{id}}
\newcommand{\mvN}{\mathit{n}}
\newcommand{\mvNext}{\mathit{n}_0}
\newcommand{\mvM}{\mathit{m}}
\newcommand{\nstar}{{n^*}}
\newcommand{\EPR}{\textsc{EPR}\xspace}
\newcommand{\sharon}[1]{{\textcolor{blue}{SS\@: \emph{#1}}}}
\newcommand{\oded}[1]{{\textcolor{green}{OP\@: \emph{#1}}}}
\newcommand{\neil}[1]{{\textcolor{cyan}{NI\@: \emph{#1}}}}
\newcommand{\yotam}[1]{{\textcolor{magenta}{YF\@: \emph{#1}}}}
\newcommand{\TODO}[1]{{\textcolor{red}{TODO\@: \emph{#1}}}}
\newcommand{\commentout}[1]{}
\newcommand{\OMIT}[1]{}
\newcommand{\sharon}[1]{\ignorespaces}
\newcommand{\oded}[1]{\ignorespaces}
\newcommand{\neil}[1]{\ignorespaces}
\newcommand{\yotam}[1]{\ignorespaces}
\newcommand{\TODO}[1]{\ignorespaces}
\newcommand{\commentout}[1]{\ignorespaces}
\newcommand{\OMIT}[1]{\ignorespaces}
\newcommand{\remove}[1]{\xspace}
\def\instr#1{\widehat{#1}}
\def\deltainstr{\widehat{\delta}}
\def\Sigmainstr{\widehat{\Sigma}}
\def\sk#1{{#1}_{S}}
\def\bhterms#1{\textnormal{BHT}_{#1}}
\def\closedterms{\bhterms{\infty}}
\def\instantiate#1#2{{#1}[\,#2\,]}
\def\reducemodel#1#2{{#1}_{\rvert_{#2}}}
\def\consts#1{\mathop{\textrm{const}}[#1]}
\def\FOL{\mbox{{\rm FOL}}}
\def\QF{\mbox{{\rm QF}}}
\def\AF{\mbox{{\rm AF}}}
\def\ONEALT{\mbox{{\rm 1-alternation}}}
\def\deltah{\widehat{\delta}}
\def\Ih{\widehat{I}}
\def\consts#1{\mathop{\textrm{const}}[#1]}
\renewcommand{\vec}[1]{\ov{#1}}   % neater than vec
\newcommand{\vecx}{{\vec{x}}}
\newcommand{\vecy}{{\vec{y}}}
\newcommand{\vect}{{\vec{t}}}
\newcommand{\vecc}{{\vec{c}}}
\newcommand{\E}{{\ensuremath{\exists^*}}}
\renewcommand{\AE}{{\ensuremath{\forall^*\exists^*}}}
\newcommand{\EA}{{\ensuremath{\exists^*\forall^*}}}
\newcommand{\bnfdef}{{::=}}
\newcommand{\Cmd}{\mathit{C}}
\newcommand{\Decls}{\mathit{decls}}
\newcommand{\vars}{{\ensuremath{\mathcal{V}}}}
\newcommand{\relations}{{\ensuremath{\mathcal{R}}}}
\newcommand{\axioms}{{\ensuremath{\mathcal{A}}}}
\renewcommand{\wp}{\mathit{wp}}
\newcommand\substitute[3]{{{#1} \left[ #3~/~#2 \right]}}
\newcommand{\nnf}[1]{\mathop{\textnormal{nnf}({#1})}} % chktex 36
\newcommand{\timeout}{60 seconds}
\theoremstyle{plain} %\crefname{satz}{Satz}{S\"atze}
\newtheorem{theorem}[thm]{Theorem}
\newtheorem{definition}[thm]{Definition}
\newtheorem{lemma}[thm]{Lemma}
\theoremstyle{remark}\newtheorem{example}[thm]{Example}
\newtheorem{remark}[thm]{Remark}
\newtheorem{corollary}[thm]{Corollary}
\newtheorem*{rep@theorem}{\rep@title}
\newcommand{\newreptheorem}[2]{%
\newenvironment{rep#1}[1]{%
 \def\rep@title{#2~\ref{##1}}%
 \begin{rep@theorem}}%
 {\end{rep@theorem}}}
\title[Bounded Quantifier Instantiation for Checking Inductive Invariants]{Bounded Quantifier Instantiation \texorpdfstring{\\}{} for Checking Inductive Invariants}
\author[Y.M.Y.~Feldman]{Yotam M. Y. Feldman\rsuper{a}}
\author[O.~Padon]{Oded Padon\rsuper{a}}
\author[N.~Immerman]{Neil Immerman\rsuper{b}}
\author[M.~Sagiv]{\texorpdfstring{\\}{}Mooly Sagiv\rsuper{a}}
\author[S.~Shoham]{Sharon Shoham\rsuper{a}}
\address{\lsuper{a}Tel Aviv University, Tel Aviv, Israel}
\email{yotam.feldman@gmail.com}
\address{\lsuper{b}UMass, Amherst, USA}
\Crefname{defiC}{Definition}{Definitions}
\begin{document}

\begin{abstract}
We consider the problem of checking whether a proposed invariant $\varphi$ expressed in first-order logic with quantifier alternation is \emph{inductive}, i.e.\ preserved by a piece of code.
While the problem is undecidable, modern SMT solvers can sometimes solve it automatically.
However, they employ powerful quantifier instantiation methods that may diverge,
especially when $\varphi$ is not preserved. A notable difficulty arises due to counterexamples of infinite size.
%The difficulty is inherent in the undecidability of the problem and the issue of finding an infinite counterexample.

This paper studies \emph{Bounded-Horizon instantiation}, a natural method for guaranteeing the termination of SMT solvers. % when used to check inductiveness.
The method bounds the depth of terms used in the quantifier instantiation process.
We show that this method is surprisingly powerful for checking quantified invariants in uninterpreted domains.
Furthermore, by producing partial models it can help the user diagnose the case when $\varphi$ is not inductive,
especially when the underlying reason is the existence of infinite counterexamples.

Our main technical result is that Bounded-Horizon is at least as powerful as \emph{instrumentation}, which is a manual method to guarantee convergence of the solver by modifying the program so that it admits a purely universal invariant.
We show that with a bound of 1 we can simulate a natural class of instrumentations, without the need to modify the code and in a fully automatic way.
%	
%Bounded-Horzion captures an intuition used in sophisticated theorem provers that is also comprehensible to end-users.
%An advantage of the method is that when it fails to prove inductiveness with a given bound, the user can diagnose the problem without internal understanding of the SMT technique.
%For example, a series of counterexamples obtained by increasing the bound may point in the direction of an infinite-size counterexample and aid in understanding the problem.
%	
We also report on a prototype implementation on top of Z3, which we used to verify several examples by Bounded-Horizon of bound 1.
\end{abstract}

%% \paragraph{Categories and Subject Descriptors}
%% D.3.2 [Programming Languages]: Language Classifications---Specialized application languages;
%% I.2.2 [Artificial Intelligence]: Automatic Programming---Program synthesis

%% \paragraph{Keywords}
%% Software-defined networking;
%% Network programming languages;
%% Synthesis; Logic programming;
%% Distributed systems

% \category{D.2.4}{Software/Program Verification}{Formal methods}
% \category{F.3.1}{Specifying and Verifying and Reasoning about Programs}{Invariants}

% general terms are not compulsory anymore,
% you may leave them out
%\terms
%term1, term2

% \keywords
% verification, effectively propositional logic

%%% Local Variables:
%%% mode: latex
%%% TeX-master: "main.tex"
%%% End:

\maketitle

\section{Introduction}
%A preliminary version of this work will appear in the International Conference on Tools and Algorithms for the Construction and Analysis of Systems (TACAS), 2017~\cite{originalPaper}.

%Motivation
This paper addresses a fundamental problem in automatic program verification:  how to prove that a piece of code
preserves a given invariant.
In Floyd-Hoare style verification this means that we want to automatically prove the validity of the
Hoare triple $\{P\} C \{P\}$
where $P$ is an assertion and $C$ is a command.
Often this is shown by proving the unsatisfiability of a formula of the form $P(V) \land \delta(V, V') \land \neg P(V')$ (the \emph{verification condition})
where $P(V)$ denotes the assertion $P$ before the command,
$P(V')$ denotes the assertion $P$ after the command, and $\delta(V, V')$ is a two-vocabulary formula expressing the meaning of
the command $C$ as a transition relation between pre- and post-states.
When %$P$ is inside a loop,
$C$ is a loop body, % and $P$ holds at the initial state then
such a $P$ is an inductive invariant and can be used to prove safety properties of the loop
(if it also holds initially and implies the desired property).
%\sharon{should we add: "and can be used to prove safety."
%We never say that but this is what happens in our examples}
%\sharon{I removed the rest of the conditions. Why are they needed? Isn't "loop invariant" the more standard term here? should we at least use both?}

For infinite-state programs, proving the validity of $\{P\}
C \{P\}$ is generally undecidable even when $C$ does not include
loops.
%A common source of undecidability is quantifier-alternation in $P$.
Indeed, existing Satisfiability Modulo Theory (SMT) solvers can diverge even for simple
assertions and simple commands.  %Indeed, r
Recent attempts to apply program
verification to prove the correctness of critical system's design and
code~\cite{IronFleet} identify this as the main hurdle for using
program verification.
%This problem is rooted in the undecidability of
%reasoning about first order logic.
%For
%example, when $P$ is an $\forall^*\exists^*$ formula, this amounts to
%solving the undecidable problem of proving the satisfiability of
%$\forall^{*}\exists^{*}$ formulas.

The difficulty is rooted in powerful constructs %required for
used in the SMT-based verification of interesting programs.
Prominent among these constructs are arithmetic and other program operations modelled using background theories, and logical quantifiers.
In this paper we target the verification of applications in which the problem can be modelled without interpreted theories.
This is in line with recent works that show that although reasoning about arithmetic is crucial for low-level code, in many cases the verification of high-level programs and designs can be performed by reasoning about quantification in uninterpreted theories.
Specifically, the decidable Effectively Propositional logic (EPR) has been successfully applied to application domains such as linked-list manipulation~\cite{CAV:IBINS13}, Software-Defined Networks~\cite{DBLP:conf/pldi/BallBGIKSSV14} and some distributed
protocols~\cite{pldi/PadonMPSS16,paxosEpr}.
Without interpreted theories it remains to address the complications induced by the use of quantifiers, and specifically by the use of alternating universal~($\forall$) and existential~($\exists$) quantifiers.

In the presence of quantifier alternation, the solver's ability to check assertions is hindered by the following issues:
%(1) undecidability of reasoning about arithmetic and other interesting program operations,
\begin{enumerate}
	\item An \emph{infinite search space of proofs} that must be explored for correct assertions.
		  A standard form of proofs with quantified formulas is \emph{instantiation}, in which the solver attempts to replace universal quantifiers by a set of ground terms. The problem of exploring the infinite set of candidates for instantiation is sometimes manifested in matching loops~\cite{simplifyJACM05}.
	\item A difficulty of \emph{finding counterexamples} for invalid assertions, notably when counterexamples may be of infinite size.
		Current SMT techniques %usually
		often fail to produce models of satisfiable quantified formulas~\cite{ge2009complete,DBLP:conf/cade/ReynoldsTGKDB13}. This is somewhat unfortunate since one of the main values of program verification is the early detection of flaws in designs and programs.
		The possibility of infinite counterexamples is a major complication in this task, as they are especially difficult to find.
		In uninterpreted domains, infinite counterexamples usually do not indicate a real violation and are counterintuitive to programmers, yet render assertions invalid in the context of general first-order logic (on which SMT proof techniques are based). Hence infinite counter-models pose a real problem in the verification process.
\end{enumerate}

\noindent
Previous works on EPR-based verification~\cite{CAV:IBINS13,DBLP:conf/pldi/BallBGIKSSV14,pldi/PadonMPSS16} used universally quantified invariants with programs expressed by $\exists^*\forall^*$ formulas (EPR programs)\footnote{
$\exists^{*}\forall^{*}$ transition relations can be extracted from code by existing tools for C code manipulating linked lists~\cite{CAV:IBINS13,DBLP:conf/cav/ItzhakyBRST14,KarbyshevBIRS17} and for the modeling language RML~\cite{pldi/PadonMPSS16} which is Turing-complete.
}. In that setting, checking inductive invariants is decidable, hence problems (1) and (2) do not occur. In particular, EPR enjoys the finite-model property, and so counterexamples are of finite size. EPR programs are in fact Turing-complete~\cite{pldi/PadonMPSS16}, but universal invariants are not always sufficient to express the program properties required for verification.

% as demonstrated next: %\Cref{ex:ae-inv} below.
For example,~\cite{IronFleet} describes a client-server scenario with the invariant that ``For every
reply message sent by the server, there exists a corresponding request
message sent by a client''. (See \Cref{ex:ae-inv} for further details.) This invariant is $\AE$ and thus leads to verification conditions with quantifier alternation. This kind of quantifier alternation may lead to divergence of the solver as problems (1) and (2) re-emerge.

\commentout{
\begin{example}%
\label{ex:ae-inv}
\begin{figure}[t]%
\label{fig:client-server}
%\scriptsize
%\footnotesize
%\small
\begin{tabular}{p{0.44\linewidth} p{0.50\linewidth}}
\begin{math}
   \begin{array}{l}
   \texttt{relation} \ \mReq(u,q)
   \\
   \texttt{relation} \ \mResp(u,p)
   \\
   \texttt{relation} \ \mMatch(q,p)
   \\
   \\
   \ACTION{new\_request}{\mvU} \\
     \INDENT
     \textbf{local} \; \mvR := * \CSEP \ \COMMENT{new request} \\
     \INDENT
     \ASSUMEM{\forall w,s. \; \neg\mReq(w,\mvR) \land {} \\ \hspace{27.5mm} \neg\mMatch(\mvR,s)} \CSEP \\
     \INDENT
     \INSERTM{\mReq}{(\mvU, \mvR)} \\
     \INDENT
      \UPDATECODE{\INSERTMU{\mN}{(\mvU, y)}{\mMatch(\mvR, y)}} \\
     \ENDACTION \\
   \ACTION{respond}{\mvU, \mvR} \\
     \INDENT
     \ASSUMEM{\mReq(\mvU, \mvR)} \CSEP \\
     \INDENT
     \textbf{local} \; \mvP := * \CSEP \ \COMMENT{new response} \\
     \INDENT
     \ASSUMEM{\forall w,x. \; \neg\mReq(w,\mvP) \land {} \\ \hspace{27.5mm}\neg\mMatch(x,\mvP)} \CSEP \\
     \INDENT
     \INSERTM{\mMatch}{(\mvR, \mvP)} \CSEP \\
     \INDENT
     \UPDATECODE{\INSERTMU{\mN}{(x, \mvP)}{\mReq(x, \mvR)}} \\
     \INDENT
     \INSERTM{\mResp}{(\mvU, \mvP)} \\
     \ENDACTION
  \end{array}
\end{math}
&
\begin{math}
  \begin{array}{l}
  \texttt{init} \ \forall u,q. \, \neg\mReq(u,q)
   \\
   \texttt{init} \ \forall u,p. \, \neg\mResp(u,p)
   \\
   \texttt{init} \ \forall u,p. \, \neg\mMatch(u,p)
   \\
   \\
   \\
   \ACTION{check}{\mvU, \mvP} \\
     \INDENT
     \IF \; \mResp(\mvU, \mvP) \land \forall \mvR. \  \mReq(\mvU, \mvR) \\ \hspace{1cm} \to \neg \mMatch(\mvR,\mvP) \\
     \INDENT
     \INSTRREPLACE
     \IF \; \mResp(\mvU, \mvP) \land \neg \mN(\mvU,\mvP) \\
     \INDENT \INDENT \THEN \; \ABORT \\
     \ENDACTION \\
     \\
     \INVARIANT
     I = \\
     \INDENT \forall \mvU, \mvP. \
     \mResp(\mvU, \mvP)
     \rightarrow \\
     \INDENT \INDENT
     \exists \mvR. \  \mReq(\mvU, \mvR) \land {} \mMatch(\mvR,\mvP) \\
     \DERIVED{\mN(x,y)}{\exists z. \  \mReq(x,z) \land \mMatch(z,y)}
     \\
     \INSTRCOMMENT
     \INVARIANT
     \Ih = \\
     \INDENT\INDENT\INDENT \forall \mvU, \mvP. \;
     \mResp(\mvU, \mvP)
     \rightarrow \mN(\mvU,\mvP)
   \end{array}
 \end{math}
\end{tabular}
 \caption{ Example demonstrating a \AE\ invariant that is provable with bound 1.
 %Example demonstrating verification conditions with quantifier-alternation.
 %instrumentation. \sharon{change caption?? the first time we talk about this example we instruct the reader to ignore the instrumentation. Perhaps "Example demonstrating \AE invariants and instrumentation"?}
   The reader should
   first ignore the instrumentation code denoted by $\INSTRCOMMENTS$ (see \Cref{example:instr}).
   \iflong%
   This example models a simple client-server scenario, with the
   safety property that every response sent by the server was
   triggered by a request from a client. Verification of this example
   requires a $\AE$ invariant.
   %, but a suitable instrumentation leads to a $\forall^*$ invariant.
   This example is inspired by~\cite{IronFleet}.
   \fi%
   The complete program %corresponding to this Figure
   %appears
   is provided
   in~\cite{additionalMaterials} (files
   \texttt{client\_server\_ae.ivy},
   \texttt{client\_server\_instr.ivy}). }
\end{figure}

%%% Local Variables:
%%% mode: latex
%%% TeX-master: "main.tex"
%%% End:

%\TODO{use for justifying the need of AE invariants? Anyway we would like to put this somewhere reasonable, before the variations that require a higher bound}
IronFleet~\cite{IronFleet} describes a client server scenario  where the invariant is ``For every
reply message sent by the server, there exists a corresponding request
message sent by a client''.
%An example of an $\forall^* \exists^*$ invariant appears in~\cite{IronFleet} and
%involves a client server scenario, where the invariant is ``For every
%reply message sent by the server, there exists a corresponding request
%message sent by a client''.
\Cref{fig:client-server} presents a
simple model of this scenario and its invariant:
\[
    I = \forall \mvU, \mvP. \;
    \mResp(\mvU, \mvP) \rightarrow \exists \mvR. \; \mReq(\mvU, \mvR)
    \land \mMatch(\mvR,\mvP).
\]
%, and is used as a running example in this paper.
This invariant is $\AE$ and thus leads to verification conditions that include quantifier alternation. This kind of quantifier alternation may lead to divergence of the underlying solver when attempting to verify the preservation of the
invariant.

\commentout{
We first explain this example ignoring the annotations denoted by
``$\INSTRCOMMENTS$''.  This example models the system state using three
binary relations. The $\mReq$ relation stores pairs of users and
requests, representing requests sent by users.  The $\mResp$ relation
similarly stores pairs of users and replies, representing replies sent
back from the server.  The $\mMatch$ relation maintains the
correspondence between a request and its reply.

The action \texttt{new\_request} models an event where a user sends
a new request to the server. The action \texttt{respond} models an
event where the server responds to a pending request by sending a
reply to the user. The request and response are related by the
$\mMatch$ relation. The action \texttt{check} is used to verify the
safety property that every response sent by the server has a matching
request, by aborting the system if this does not hold.

A natural inductive invariant for this system is
\[
    I = \forall \mvU, \mvP. \;
    \mResp(\mvU, \mvP) \rightarrow \exists \mvR. \; \mReq(\mvU, \mvR)
    \land \mMatch(\mvR,\mvP).
\]
This invariant proves that the \texttt{then} branch in the \texttt{check}
action will never happen and thus the system will never abort.
This invariant is preserved under the execution of any of the system's actions.

However, this invariant is $\AE$ and thus leads to verification conditions that include quantifier alternation. This kind of quantifier alternation may lead to divergence of the underlying solver when attempting to verify the preservation of the
invariant.
} %end commentout
\end{example}
}

This paper aims to expand the applicability of the EPR-based verification approach to invariants of more complex quantification. We focus on the class of $\forall^*\exists^*$ invariants. $\AE$ invariants arise in interesting programs, but, as we show, checking inductiveness of invariants in this class is undecidable.
%, when the assertion (invariant) $P$ is naturally expressed using a $\forall^*\exists^*$ formula.
% (As we show in this paper, checking inductiveness for invariants in this class is undecidable even when the program is expressed by an EPR formula.)
We thus study problems (1), (2) above for this setting using the notion of \emph{bounded quantifier instantiation}, a technique we term \emph{Bounded-Horizon}.

\para{Main results} This paper explores the utility of limited quantifier instantiations for checking $\forall^{*}\exists^{*}$ invariants, and for dealing with the problems that arise from quantifier alternation: divergence of the proof search and infinite counter-models. % (2) above.

We consider instantiations that are \emph{bounded in the depth of terms}.
% Namely, we first Skolemize the verification condition formula to a universal formula. We then instantiate the universal quantifiers by ground terms whose nesting depth of Skolem functions is bounded. \yotam{Do we need this technical explanation here?} \sharon{skolemization seems too technical, but how else can we explain the depth?}
Bounded instantiations trivially prevent divergence while maintaining soundness. %, and thus avoid the problem inherent in quantifier alternation in a sound way. (Since
Although for a given bound the technique is not complete, i.e.\ unable to prove every correct invariant,
%We show that for invariants that can be proven using instrumentation (e.g.~\cite{CAV:IBINS13,CAV:KBIRS15,pldi/PadonMPSS16}), bounded instantiations of a small bound are also complete. \TODO{instrumentation}
we provide completeness guarantees by comparing bounded instantiations to the method of \emph{instrumentation}, a powerful technique implicitly employed in previous works~\cite{CAV:IBINS13,KarbyshevBIRS17,pldi/PadonMPSS16}.
Instrumentation tackles a $\AE$ invariant by transforming the program in a way that allows the invariant to be expressed using just universal quantifiers, and, accordingly, makes the verification conditions fall in EPR\@.
We show that for invariants that can be proven using a typical form of instrumentation, bounded instantiations of a small bound are also complete,
meaning they are sufficiently powerful to prove the original program without modifications and in a fully automatic way.
This is encouraging since instrumentation is labor-intensive and error-prone while bounded instantiations are completely automatic.

%We show that bounded instantiations is actually quite powerful, by comparing it to the method of \emph{instrumentation}, implicitly employed in previous works~\cite{CAV:IBINS13,CAV:KBIRS15,pldi/PadonMPSS16}.
%We show that whenever it is possible to transform the program and the invariant in a way that makes the invariant universal, bounded instantiations with a low bound suffice to prove the original program without modifications and in a fully automatic way.
%This is encouraging since instrumentation is labor-intensive and error-prone, while bounded instantiations are completely automatic.

This result suggests that in many cases correct $\AE$ invariants of EPR programs can be proven using a simple proof technique.
Typically in such cases existing tools such as Z3 will also manage to automatically prove the verification conditions. However, bounded instantiations guarantee termination a-priori even when the invariant is not correct.
In this case, when the bounded instantiation procedure terminates, it returns a logical structure which satisfies all the bounded instantiations. This structures is not necessarily a true counterexample but ``approximates'' one.
Interestingly, this capability suggests a way to overcome the %third problem.
problem of infinite models. This problem arises
%which arises
when the user provides an invariant that is correct for finite models but is not correct in general first-order logic.
In such cases, state-of-the-art SMT solvers typically produce ``unknown'' or timeout as they fail to find infinite models. The user is thus left with very little aid from the solver when attempting to make progress and successfully verify the program.
%\yotam{In the problem def I tried to rephrase it not just as a problem of infinite structures, but here it's probably most convincing in this context, so I keep the sentence here and not moved it earlier}
%Bounded quantifier instantiation can be used by an end-user to identify when the third problem exists, and we demonstrate it by an example.
In contrast, bounded quantifier instantiation can be used to find finite models with increasing sizes, potentially indicating the existence of an infinite model, and provide hints as to the source of the error.
This information allows the user to modify the program or the invariant to exclude the problematic models.
We demonstrate this approach on a real example in which such a scenario occurred in one of our verification attempts.
We show that the provided models assist in identifying and fixing the error, allowing the user to successfully verify the program.

We also implemented a prototype tool that performs bounded instantiations of bound 1, and
%bounded instantiations with small bounds can be powerful for proving inductive invariants of various %distributed protocols.
used it to verify several distributed protocols and heap-manipulating programs.
The implementation efficiently reduces the problem of checking inductiveness with bound 1 to a Z3 satisfiability check on which the solver always terminates,
%\sharon{of a formula for which termination is guaranteed, thereby taking advantage...} ,
thereby taking advantage of Z3's instantiation techniques while guaranteeing termination.
%\sharon{"satsifiability of bound 1" was not used before. Not sure it's clear. Also, "standard satisfiability" of what? what is "standard"? }

\commentout{
%Beyond EPR
%\subsection
\ultpara{Main Results.}

\TODO{define bounded-horizon? discuss quantifier instantiation?}
We implemented a prototype tool that performs bounded instantiations of bound 1, and
%bounded instantiations with small bounds can be powerful for proving inductive invariants of various %distributed protocols.
used it to verify several distributed protocols.

We show that bounded instantiations is actually quite powerful, by comparing it to the method of \emph{instrumentation}, implicitly employed in previous works~\cite{CAV:IBINS13,KarbyshevBIRS17,pldi/PadonMPSS16}.
We show that whenever it is possible to transform the program and the invariant in a way that makes the invariant universal, bounded instantiations with a low bound suffice to prove the original program without modifications and in a fully automatic way.
This is encouraging since instrumentation is labor-intensive and error-prone, while bounded instantiations are completely automatic.

We believe that the actual power of bounded instantiations for an end-user can be in situations where the invariant is not inductive. The main reason is that in such cases sophisticated quantifier-instantiation based SMT-solvers often diverge and do not produce a counterexample~\cite{ge2009complete,DBLP:conf/cade/ReynoldsTGKDB13}.
This is especially true when the counter-model to the induction has infinite size.
We demonstrate this phenomenon by showing an interesting example of a non-inductive invariant which only has infinite counter-models. In this example the user can understand the problem and correct the invariant by staring at two successive finite partial counter-models produced by bounded instantiations.

}

%\TODO{taken from partial models: In such cases, Bounded-Horizon lets us present meaningful concrete models to the user, instead of the usual ``unknown'' or ``timeout'' that are currently produced by state-of-the-are SMT solvers for these instances.}

%%% Local Variables:
%%% mode: latex
%%% TeX-master: "main.tex"
%%% End:

\para{Outline}
The rest of the paper is organized as follows:
\Cref{sec:preliminaries} provides some technical background and notations.
In \Cref{sec:bounded-horizon} we define the Bounded-Horizon algorithm and discuss its basic properties.
\Cref{sec:bounded-horizon-instrumentation} defines the concept of instrumentation as used in this work, and shows that Bounded-Horizon with a low bound is at least as powerful.
\Cref{sec:appendix-instrumentation-revisited} relates instrumentation to bounded instantiation in the converse direction, showing that other forms of instrumentation can simulate quantifier instantiation of arbitrarily high depth.
In \Cref{sec:partial-models} we show how bounded instantiations can be used to tackle the problem of infinite counterexamples to induction when the verification conditions are not valid.
\Cref{sec:implementation} describes our implementation of Bounded-Horizon of bound 1, and %provides initial evaluation of
evaluates its ability to prove some examples correct by bound 1 instantiation while ensuring termination. \yotam{Sharon, added ``while ensuring termination''. This reminds that the evaluation is also important to say that the approach is applicable to many examples}
\Cref{sec:related} discusses related work, and \Cref{sec:conclusion} concludes.
The discussion of the undecidability of checking inductiveness of $\AE$ invariants is deferred to \Cref{sec:undecidability}.
%In \Cref{sec:undecidability} sets the scene by showing the undecidability of the problem of checking inductiveness of $\AE$ invariants.

\sharon{rephrased ``initial evaluation'' since a reviewer complained}
%%% Local Variables:
%%% mode: latex
%%% TeX-master: "main.tex"
%%% End:

%%% Local Variables:
%%% mode: latex
%%% TeX-master: "main.tex"
%%% End:

\section{Preliminaries}%
\label{sec:preliminaries}

In this section we provide background and explain our notation.

\subsection{First-Order Logic}
We use standard relational first-order logic with
equality~\cite{Shoenfield67}.
\sectionette{Syntax} A first-order \emph{vocabulary}, denoted $\vocabulary$, consists of constant symbols $c_i$, relation symbols $r_j$, and function symbols $f_k$. A \emph{relational vocabulary} is a vocabulary without function symbols. In this paper we principally use relational vocabularies, and employ function symbols only when they are generated through Skolemization (see below).
%Functions and relation symbols can have any finite arity, and constant symbols are considered function symbols of arity zero, also called nullary function symbols.
%
% We sometimes write $r^k$ to indicate that $r$ is a relation of arity $k$.
%
Terms and formulas are constructed according to the syntax
\begin{align*}
t &\mathop{::=} c \ | \ x \ | \ f(t_1,\ldots,t_n) \\
f & \mathop{::=} r(t_1,\ldots,t_n) \ | \ t_1 = t_2 \ | \ \neg f \ | \ f_1 \lor f_2 \ | \ f_1 \land f_2 \ | \ f_1 \rightarrow f_2 \ | \ f_1 \leftrightarrow f_2 \ | \ \forall x. \, f \ | \ \exists x. \, f
\end{align*}
where $c$ is a constant symbol, $x$ is a variable, $f$ is an $n$-ary function symbol and $r$ is an $n$-ary relation symbol. %, $t_1, \ldots, t_n$ are terms, and $x$ is a variable. \yotam{Sharon, is this OK?}
\sharon{added terms}
We always assume terms and formulas are well-formed.
$\FOL(\Sigma)$ stands for the set of first-order formulas over $\Sigma$.
For a formula $\varphi$ we denote by $\consts{\varphi}$ the set of constants that appear in $\varphi$.
The set of free variables in a term or a
formula is defined as usual. A term without free variables is called a
\emph{closed term} or \emph{ground term}. A formula without free variables is called a
\emph{sentence} or a \emph{closed formula}. We sometimes write
$\varphi(x_1,\ldots,x_k)$ (respectively, $t(x_1,\ldots,x_k)$) to
indicate that $x_1,\ldots,x_k$ are free in $\varphi$ (respectively,
$t$). Substitution in $\varphi$ of terms $t_1,\ldots,t_n$ instead of free variables $x_1,\ldots,x_n$ is denoted by $\varphi[t_1 / x_1, \ldots, t_n / x_n]$.

\sectionette{Semantics} Given a vocabulary $\vocabulary$, a \emph{structure} of $\vocabulary$
is a pair $\struct = (\Dom, \Int)$: $\Dom$ is a
\emph{domain}, and $\Int$ is an \emph{interpretation}.
The domain $\Dom$ is a set (of elements). If this set if finite, we say that the structure $\struct$ is finite.
The interpretation $\Int$ maps each symbol in $\vocabulary$ to its
meaning in $\struct$: $\Int$ associates each $k$-ary relation
symbol $r$ with
%
% a function $\Int(r) \from \Dom(\bsort_1) \times \ldots \times
% \Dom(\bsort_k) \to \{0,1\}$,
%
a relation $\Int(r) \subseteq \Dom^k$,
and associates each $k$-ary function symbol $f$ with
%
% a function $\Int(f): \Dom(\bsort_1) \times \ldots \times
% \Dom(\bsort_k) \times \Dom(\bsort) \to \{0,1\}$ such that for any
% $x_1,\ldots,x_k \in \Dom$, $\Int(f)(x_1,\ldots,x_k,y) = 1$ for
% exactly one element $y \in \Dom$.
%
a function $\Int(f): \Dom^k \to \Dom$.
\newpage % HACK: without this, a big chunk of the previous page is left empty...

We use the standard semantics of first-order logic.  Given
a vocabulary $\vocabulary$ and a structure $\struct = (\Dom,\Int)$, a \emph{valuation} $\assgn$ maps every logical variable $x$ to an
element in $\Dom$.  We write $\struct, \assgn \models \varphi$ to denote
that the structure $\struct$ and valuation $\assgn$ \emph{satisfy}
the formula $\varphi$. We write $\struct \models \varphi$ to mean that
$\struct, \assgn \models \varphi$ for any $\assgn$, and we will
reserve this for whenever $\varphi$ is a closed formula.
%For a formula $\varphi(x_1,\ldots,x_k)$ whose free
%variables are $x_1,\ldots,x_k$, and elements $e_1,\ldots,e_k \in \Dom$, we write $\struct \models
%\varphi(e_1,\ldots,e_k)$ to mean that $\struct,\assgn \models \varphi$
%where $\assgn$ is an assignment that maps $x_i$ to $e_i$ for each $i
%\in \{1,\ldots,k\}$.
%

\sectionette{Satisfiability and Validity} We say that a formula $\varphi$ is \emph{satisfiable} if there are
some $\struct$ and $\assgn$ such that $\struct,\assgn \models
\varphi$. Otherwise, we say that $\varphi$ is \emph{unsatisfiable}. We
say that a formula $\varphi$ is valid if $\struct,\assgn \models
\varphi$ for any $\struct$ and $\assgn$. Note that $\varphi$ is valid
if and only if $\neg \varphi$ is unsatisfiable. A formula
$\varphi(x_1,\ldots,x_n)$, whose free variables are $x_1,\ldots,x_n$,
is valid if and only if the closed formula $\forall x_1,\ldots,x_n
. \; \varphi(x_1,\ldots,x_n)$ is valid; it is satisfiable if and only
of the closed formula $\exists x_1,\ldots,x_n . \;
\varphi(x_1,\ldots,x_n)$ is satisfiable.
Two formulas $\varphi$ and $\psi$ are \emph{equivalent} if
$\psi \leftrightarrow \varphi$ is valid.
We say that two formulas
$\varphi$ and $\psi$ are \emph{equisatisfiable} to mean that $\varphi$
is satisfiable if and only if $\psi$ is satisfiable. Note that if two
formulas are equivalent then they are also equisatisfiable, but the
converse does not necessarily hold.

\sectionette{Syntactical Classes of Formulas}
We say that a formula is in \emph{negation normal form} if negation is
only applied to its atomic subformulas, namely $r(t_1,\ldots,t_n)$ or
$t_1 = t_2$. Every formula can be transformed to an equivalent formula
in negation normal form, and the transformation is linear in the size
of the formula. For a formula $\varphi$ we denote by $\nnf{\varphi}$ an equivalent formula in negation normal form obtained by the standard procedure, pushing negation inwards when it is applied on a quantifier or a connective. \yotam{Sharon, can you review?} \sharon{added connective} For example, the negation normal form of $\neg \exists
x,y. \; r(x,y) \land x \neq y$ is $\forall x,y. \; \neg r(x,y) \lor
x=y$.

We say that a formula is \emph{quantifier-free} if it contains no
quantifiers. $\QF(\Sigma)$ denotes the set of quantifier-free formulas over $\Sigma$.
We say that a formula is in \emph{prenex normal form} (PNF) if it is of the
form $Q_1 . \cdots Q_n . \psi$ where $\psi$ is quantifier-free, and
each $Q_i$ is either $\forall x$ or $\exists x$ for some variable $x$. Every
formula can be transformed to an equivalent formula in prenex normal
form, and the transformation is linear in the size of the formula.
For example, the prenex normal form of $\forall x. \; r(x) \to \exists
y. \; p(x,y)$ is $\forall x. \; \exists y. \; r(x) \to p(x,y)$.

\sharon{removed (repeated in next paragraph): We write that $\varphi \in \exists^*(\Sigma)$ to mean that $\varphi$ is an \emph{existential} formula defined over vocabulary $\Sigma$.
Similarly, the class of \emph{universal} formulas is denoted by $\forall^*(\Sigma)$.}

We say a formula is \emph{universally quantified} or \emph{universal},
if it is in prenex normal form and has only universal quantifiers. $\forall^*(\Sigma)$ denotes the set of universal formulas over $\Sigma$.
An \emph{existentially quantified} or \emph{existential} formula is
similarly defined, and $\exists^*(\Sigma)$ denotes the set of existential formulas over $\Sigma$.
Whenever an existential quantifier is in the
scope of a universal quantifier or vice versa, we call this a
\emph{quantifier alternation}. A formula is \emph{alternation-free} if
it contains no quantifier alternations; namely, if it is a Boolean
combination of universal and existential formulas. The set of alternation-free formulas is denoted $\AF(\Sigma)$.
With quantifier alternations present, we denote by $\exists^{*}\forall^{*}(\Sigma)$ the set of formulas over $\Sigma$ in prenex normal form where all the existential quantifiers appear before the universal ones, and $\forall^{*}\exists^{*}(\Sigma)$ for the case where all universal quantifiers appear before the existentials (in both cases this is a single quantifier alternation).
%%% Local Variables:
%%% mode: latex
%%% TeX-master: "main.tex"
%%% End:

\begin{comment}
\sectionette{FOL}
$\Sigma$ will always denote a relational first-order vocabulary, which may contain constant symbols, $c_i$, and relation symbols, $r_j$, but no function symbols. For a formula $\varphi$ we denote by $\consts{\varphi}$ the set of constants that appear in $\varphi$.
We write that $\varphi \in \exists^*(\Sigma)$ to mean that $\varphi$ is an \emph{existential} formula defined over vocabulary $\Sigma$.
Similarly, the class of \emph{universal} formulas is denoted by $\forall^*(\Sigma)$.
%, $\varphi \in \forall^*(\Sigma)$, $\varphi \in \exists^* \forall^*(\Sigma)$, and $\varphi \in \forall^* \exists^* (\Sigma)$ to denote that $\varphi$ is defined over vocabulary $\Sigma$ and has a quantifier prefix $\exists^*$, $\forall^*$, $\exists^* \forall^*$, or $\forall^* \exists^*$, respectively, when written in prenex normal form (i.e., when written as a prefix of quantifiers
%followed by a quantifier-free formula).
We say that $\varphi$ is \emph{quantifier-free}, denoted $\varphi \in \QF(\Sigma)$ if it contains no quantifiers, and that it is \emph{alternation free}, denoted $\varphi \in \AF(\Sigma)$, if it can be written as a Boolean combination of formulas in $\exists^* (\Sigma)$.
% When $\Sigma$ is irrelevant or clear from the context, we omit it from the notation.
% We write $\varphi(\ov{x})$ to denote that the variables in $\ov{x}$ are the free variables of
% $\varphi$. Unless explicitly stated otherwise, when writing $\varphi$ without parenthesis we mean
% that $\varphi$ is a
$\FOL(\Sigma)$ stands for arbitrary first-order formulas over $\Sigma$.
A term or formula is \emph{ground} if it does not contain free variables.
A \emph{sentence} is a ground formula.
%\sharon{we seldom use "sentence"} \neil{Up to you: I think ``sentence'' is crisper than ``closed formula''}

%Similarly, $\forall^*(\Sigma)$ stands for \emph{universal} formulas, $\QF(\Sigma)$ stands for \emph{quantifier-free} formulas, and $\AF(\Sigma)$ stands for \emph{alternation free} formulas --- formulas that can be written as a Boolean combination of formulas in $\exists^* (\Sigma)$. $\FOL(\Sigma)$ stands for arbitrary first-order formulas over $\Sigma$. A \emph{sentence} is a closed formula.
\end{comment}

\sectionette{EPR}%
\label{sec:prelim-epr}
The effectively-propositional (EPR) fragment of first-order logic, also known as the
Bernays-Sch\"onfinkel-Ramsey class, consists of $\exists^* \forall^*(\Sigma)$ sentences, which we denote by $\EPR(\Sigma)$.
Such sentences enjoy the
\emph{small model property}%
\iflong%
; in fact, a satisfiable EPR sentence
has a model of size no larger than the number of its constants plus existential quantifiers.
\else
.
\fi
Thus satisfiability of EPR sentences is decidable~\cite{Ramsey01011930}.

\sectionette{Skolemization}
\commentout{
Given any sentence $\phi\in \FOL(\Sigma)$, we let $\sk{\phi}\in \forall^*
(\Sigma \uplus \sk{\Sigma})$ be its Skolemization, where $\sk{\Sigma}$ consists of the Skolem constant and function symbols.
\sharon{shouldn't we include the definition?}
}
%
%\commentout{
Let $\varphi(z_1,\ldots,z_n) \in \FOL(\Sigma)$. The
\emph{Skolemization} of $\varphi$, denoted $\sk{\varphi}$, is a
universal formula over $\Sigma \uplus \sk{\Sigma}$, where
$\sk{\Sigma}$ consists of fresh constant symbols and function symbols,
obtained as follows.  We first convert $\varphi$ to negation normal
form (NNF) using the standard rules.  For every
%existentially quantified variable $y$, where the
existential quantifier $\exists y$ that appears
under the scope of the universal quantifiers $\forall x_1, \ldots,
\forall x_m$, we introduce a fresh function symbol $f_y \in
\sk{\Sigma}$ of arity $n+m$. We replace each bound occurrence of $y$
by $f_y(z_1,\ldots,z_n,x_1, \ldots, x_m)$, and remove the existential
quantifier. If $n+m = 0$ (i.e., $\varphi$ has no free variables and
$\exists y$ does not appear in the scope of a universal quantifier) a
fresh constant symbol is used to replace $y$.
It is well known that
$\sk{\varphi}\rightarrow \varphi$ is valid
and that $\sk{\varphi},\varphi$ are equi-satisfiable.
%}

%Satisfiability of EPR formulas is reducible to Boolean SAT, hence
%decidable~\cite{CAV:IBINS13}. Moreover, formulas in this fragment enjoy the
%\emph{small model property}, meaning that a satisfiable formula is guaranteed
%to have a finite model of a size proportional to the depth of quantifier
%nesting. As a result, satisfiability of EPR formulas is NEXPtime-complete.

%%%%%%%%%%%

%\sectionette{EPR Transition Relation}
%We specify a transition relation via an EPR sentence, $\delta$, over a vocabulary $\Sigma \uplus \Sigma'$ where $\Sigma$ is a relational vocabulary used to describe the source (or pre-) state of a transition and $\Sigma' = \{a' \mid a \in \Sigma\}$ is used to describe the target (or post-) state.
%% of the transition.

\subsection{Transition Systems and Inductive Invariants}

\sectionette{Transition Relation}
A \emph{transition relation} is a sentence $\delta$ over a vocabulary $\Sigma \uplus \Sigma'$ where $\Sigma$ is a relational vocabulary used to describe the source (or pre-) state of a transition and $\Sigma' = \{a' \mid a \in \Sigma\}$ is used to describe the target (or post-) state.
%We sometimes denote a We denote by $(\sigma, \sigma')$ the structure over $\Sigma \uplus \Sigma'$ with the interpretation of symbols in $\Sigma$ given by $\sigma$ and of symbols in $\Sigma'$ by the interpretation in $\sigma'$ of their $\Sigma$ counterpart. (always $\sigma,\sigma'$ are over the same domain). $(\sigma,\sigma') \models \delta$ is a \emph{transition}.

\sectionette{Inductive Invariants}
A first-order sentence $I$ over $\Sigma$ is an \emph{inductive invariant} for $\delta$ if
$I \land \delta \rightarrow I'$ is valid,
or, equivalently, if $I \land \delta \land \neg I'$ is unsatisfiable\footnote{
\iflong%
In this paper, unless otherwise stated, satisfiability and validity refer to general models and are not restricted to finite models.
Note that for EPR formulas, finite satisfiability and general satisfiability coincide.
\else
In this paper, satisfiability and validity refer to general models, not restricted to finite models.
\fi
},
%\[
%I \sland \delta \rightarrow I' \mbox { is valid}
%\]
%or, equivalently, if:
%\[
%I \sland \delta \sland \neg I' \mbox { is unsatisfiable,}
%\]
 where $I'$ %\in\FOL(\Sigma')$
 results from substituting every constant and relation symbol in $I$ by its primed version%
\iflong%
\ (i.e. $I' \in \FOL(\Sigma')$).
\else
.
\fi
Candidate invariants $I$ are always sentences (i.e.\ ground).

\begin{remark}
Often in the literature an inductive invariant is considered with respect to a set of initial states of the system $\Init$ and a safety property $\PhiProp$, requiring from an inductive invariant $I$ also that $\Init \rightarrow I$ and $I \rightarrow \PhiProp$ are valid. We refer to this setting in \Cref{sec:undecidability}. Elsewhere in the paper we focus on checking the validity of $I \land \delta \rightarrow I'$ which is typically the difficult part: when $\Init \in \exists^{*}\forall^{*}(\Sigma)$, if $I \in \AE(\Sigma)$ then the validity of $\Init \rightarrow I$ is decidable (\Cref{sec:prelim-epr}), and $\PhiProp$ is usually part of $I$. Furthermore, checking simply the inductiveness of $(\Init \lor I) \land \PhiProp$ suffices to establish safety.
\sharon{this remark appears before mentioning EPR transition systems. Better to avoid referring to classes of these formulas if possible.}
\end{remark}

\iflong%
\sectionette{Counterexample to Induction}
Given a first-order sentence $I$ over $\Sigma$ and transition relation $\delta$ (over $\Sigma \uplus \Sigma'$),
a \emph{counterexample to induction} is a structure $\A$ (over $\Sigma \uplus \Sigma'$) s.t.\ $\A \models I \land \delta \land \neg I'$.

\TODO{R1: add paragraph on translating RML to EPR transition systems. Later refer to it when explaining instrumentation}

%\begin{definition}[Skolemization]
%   Let $\varphi \in \FOL(\Sigma)$. We define the \emph{Skolemization} of $\varphi$ to be a universal equi-satisfiable formula, denoted $\sk{\varphi}$, while adding constants and function to the vocabulary in the following way:
%
%   Convert $\varphi$ to Negation-Normal-Form using the standard rules.
%   We now replace every existentially quantified variable by a fresh function symbol.
%   Consider an existential quantifier $\exists y$ that appears under the scope of the universal quantifiers $\forall x_1, \ldots, x_n$.
%   We add to the vocabulary a fresh Skolem function symbol $f$ of arity $n$, and replace each bound occurrence of $y$ by $f(x_1, \ldots, x_n)$.
%   In case the existential quantifier doesn't appear in the scope of a universal quantifer, replace every bound occurrence by a fresh constant symbol.
%   Continue this process until only universal quantifiers remain.
%   The result is an equi-satisfiable formula.
%\end{definition}

%\subsection{EPR Transition Systems}
\sectionette{EPR Transition Relation}
%In this section we review Effectively Propositional Logic (EPR), a decidable fragment of first-order logic, which will be the basis of transition relations and verification conditions.

In this paper we focus on transition relations that are EPR sentences.
%
%\sharon{move the paragraph on EPR to "syntactic classes of formulas" and rephrase opening sentence - we review TS expressed in EPR and not the fragment}
%
%\sectionette{EPR Transition Relation}
%In this paper we focus on transition relations that are EPR sentences, $\delta \in \exists^* \forall^*(\Sigma)$.
%
%\sectionette{EPR Transition Relation}
Namely, we specify a transition relation via an EPR sentence, $\delta$, over a vocabulary $\Sigma \uplus \Sigma'$ where $\Sigma$ is a relational vocabulary used to describe the source (or pre-) state of a transition and $\Sigma' = \{a' \mid a \in \Sigma\}$ is used to describe the target (or post-) state.
% of the transition.

\sharon{motivate why: for alternation free invariants checking inductiveness is in EPR hence decidable}

\subsection{RML:\texorpdfstring{\@}{} Relational Modeling Language with Effectively Propositional Logic}

We now review a simple imperative modeling language, a variant of the \emph{relational modeling language} (RML)~\cite{pldi/PadonMPSS16,oded_thesis}, and its translation to EPR transition relations.

\subsubsection{RML Syntax and Informal Semantics}%
\label{sec:RMLSyntax}

\Cref{Fi:RMLSyntax} shows the abstract syntax of RML\@.
RML imposes two main programming limitations: \begin{enumerate}
\item the only data structures are uninterpreted relations,
\item program conditions and update formulas have restricted quantifier structure.
\end{enumerate}

\noindent
An RML program is composed of a set of actions.
Each action consists of loop-free code, and a transition of the RML program corresponds to (non-deterministically) selecting an action and executing its code atomically.
Thus, an RML program can be understood as a single loop, where the loop body is a non-deterministic choice between all the actions. The restriction of each action to loop-free code simplifies the presentation, and it does not reduce RML's expressive power, as nested loops can always be converted to a flat loop.

\para{Declarations and states}
The declarations of an RML program define a set
of relations \relations,
a set of program variables \vars, and a set of axioms \axioms\
in the form of (closed) $\exists^* \forall^*$-formulas.

A state of an RML program is a first-order structure over the vocabulary that contains a relation symbol
for every relation in \relations, and a constant symbol for every variable in
\vars, such that all axioms are satisfied.
The values of program
variables and relations are all mutable by the program.

\begin{comment}
A state of an RML program includes a set of elements (domain), and defines valuations (interpretations) for all
relations and variables, such that all the axioms are
satisfied.  The values of relations can be viewed as
finite tables (or finite sets of tuples).
\end{comment}

\begin{comment}
Note that while every program state can be described by finite tables,
there is no bound on the size of the tables, and thus an RML program
has infinitely many states and can model infinite state systems.
For example, a network with an unbounded number of nodes can be
modeled with sorts for nodes and messages, and a relation that keeps
track of pending messages in the network. Functions can be used to
relate each node to a unique ID and other node specific data, and to
relate each message to its source, destination, and other message
fields. While in any given state the number of nodes and the number of
pending messages is finite, there is no bound on the possible number
of nodes or pending messages.
\end{comment}

%cmd ::= skip | abort | insert | remove | x := y | x:= * | assume EA | cmd;cmd | cmd|cmd | while * do cmd

%\newcommand{\quador}{\quad | \quad}
\newcommand{\quador}{~~~|~~~}

\begin{figure}
\arraycolsep=2pt
%\begin{footnotesize}
\[
% \scalemath{1.0}{
\begin{array}{cclr}
\gV{rml} & \bnfdef &
%  \nonterminal{decls} ; \nonterminal{cmd}
  \gV{decls} ~;~ \gV{actions}
\\[0.5em]
\gV{decls} & ::= &
  \epsilon \quador
  \gV{decls} ; \gV{decls}
\\
& | & \texttt{relation}~\brel
\\
& | & \texttt{variable}~\bvar
\\
& | & \texttt{axiom}~\phiea
\\
& | & \texttt{init}~\varphi
\\[0.5em]
\gV{actions} & \bnfdef &
  \epsilon \quador
  \gV{actions}~;~\gV{actions}
\\
& | & \textbf{action}~\baction~\{~\gV{cmd}~\}
\\[0.5em]
\gV{cmd} & ::= & \bskip & {\scriptstyle \text{do nothing}}
\\
& | & \babort & {\scriptstyle \text{terminate-abnormally}}
\\
& | & \brel\left( \btuple{x}\right)  : = \phiqf\left( \btuple{x}\right)  & {\scriptstyle \text{quantifier-free update of relation}~\brel}
\\
%& | & \bvar := * & \mbox{\scriptstyle havoc of zero-arity function }\bvar
& | & \bvar := * & {\scriptstyle \text{havoc of variable}~\bvar}
\\
& | & \bassume~\phiea & {\scriptstyle \text{assume}~\exists^* \forall^*~\text{formula holds}}
\\
& | &  \gV{cmd}~;~\gV{cmd} & {\scriptstyle \text{sequential composition}}
\\
& | &  \gV{cmd}~|~\gV{cmd} & {\scriptstyle \text{non-deterministic choice}}
\\
%% %& | &  ~\bwhile * \bdo~\gV{cmd} & \mbox{\footnotesize non-deterministic iteration}
%% & | &  \multicolumn{2}{l||}{
%% ~\bwhile * \bdo~\gV{cmd} ~~~~~~
%% \mbox{\footnotesize non-deterministic iteration}
%% }
%% \\
%% \hhline{||----||}
%% \gV{term} & \begin{footnotesize}::=\end{footnotesize} & x & \mbox{\footnotesize logical variable}~x
%% \\
%% & | & \bfun \left(t_1,\ldots,t_n \right) & \mbox{\footnotesize application of n-ary function}~\bfun
%% \\
%% & | & \bite\left( \phiqf, t_1, t_2\right)  & \mbox{\footnotesize term defined by if-then-else}
%% \\
%% \hhline{||----||}
%% \gVphiqf & \begin{footnotesize}::=\end{footnotesize} & \brel \left(t_1,\ldots,t_n \right) & \mbox{\footnotesize membership in n-ary relation}~\brel
%% \\
%% & | & t_1 = t_2  & \mbox{\footnotesize equality between same sorted terms}
%% \\
%% & | &
%% \multicolumn{2}{l||}{
%% \gVphiqf \land \gVphiqf \quador
%% \gVphiqf \lor \gVphiqf \quador
%% %\gVphiqf \to \gVphiqf \quador
%% \neg \gVphiqf
%% %& \mbox{\footnotesize Boolean operators}
%% }
%% \\
%% \hhline{||----||}
%% \gVphiea &  \begin{footnotesize}::=\end{footnotesize} &
%% \multicolumn{2}{l||}{
%% \exists x_1,\ldots,x_n \, \forall x_{n+1},\ldots,x_{n+m} \; \gVphiqf
%% }
%% \\
\end{array}
% }
\]
%\end{footnotesize}
\caption{\label{Fi:RMLSyntax}%
Syntax of RML\@.
$\brel$ denotes a relation identifier.
$\bvar$ denotes a variable identifier.
$\baction$ denotes an action identifier.
$\btuple{x}$ denotes a vector of logical variables.
$\phiqf \left( \btuple{x} \right)$ denotes a quantifier-free formula with free logical variables $\btuple{x}$.
$\phiea$ denotes a closed formula with quantifier prefix $\exists^* \forall^*$.
The syntax of terms and formulas is of first-order logic.
%% n $n$-ary relation, $\btuple{x}$ denotes a vector of $n$ logical variables,
%% $\bvar$ denotes a variable, $\bsort$ denotes a sort and $\btuple{\bsort}$ denotes a vector of sorts separated by commas.
%% $\phiqf\left( \btuple{x}\right) $ denotes a quantifier-free formula with free logical variables $\btuple{x}$, and $\phiea$ denotes a
%% formula with quantifier prefix $\exists^* \forall^*$.
}
%\vspace{-0.35cm}
\end{figure}

\para{Actions}
An RML program is composed of a set of actions. Each action consists of a name and a loop-free body,
given by an RML command. The transition relation formula of the
whole program will be given by the disjunction of the transition relation
formulas associated with each action of the program. Effectively, this
means that each transition is a non-deterministic choice between
all the actions of the program, and that each action is executed
atomically. Below we given an intuitive description of RML commands,
and \Cref{sec:prelim:rml:semantics} presents their axiomatic semantics
and explains how to translate a command $\Cmd$ to its associated
transition relation formula $\delta[\Cmd]$.

\para{Commands}
Each command investigates and potentially updates the state of the program.
The semantics of \bskip\ and \babort\ are standard.
The command $\brel(x_1,\ldots,x_n) := \phiqf(x_1,\ldots,x_n)$ is used to update the $n$-ary relation
$\brel$ to the set of all $n$-tuples that satisfy the quantifier-free formula $\phiqf$.
For example, $\brel(x_1,x_2) := (x_1=x_2)$ updates the binary relation $\brel$ to the identity relation;
$\brel(x_1,x_2) := \brel(x_2,x_1)$ updates $\brel$ to its inverse relation;
$\brel_1(x) := \brel_2(x,\bvar)$ updates $\brel_1$ to the set of all elements that are related by $\brel_2$ to the current value (interpretation) of program variable $\bvar$.
\sharon{we sometimes use elements, sometimes values, sometimes interpretations. Can we improve? }

\begin{comment}
The command $\bfun(x_1,\ldots,x_n) := t(x_1,\ldots,x_n)$ is used to
update the $n$-ary function $\bfun$ to match every n-tuple  of elements
to the element given by the term $t$. Note that while relations are
updated to quantifier-free formulas, functions are updated to terms.
For example, $\bfun(x) := x$ updates the function $\bfun$ to the
identity function; $\bfun(x_1, x_2) := \bfun(x_2, x_1)$ updates
$\bfun$ to its transpose; $\bfun(x) := \bite(\brel(x), x, \bfun(x))$
updates $\bfun$ to be the identity function for all elements in
$\brel$, and leaves it unchanged for all elements not in $\brel$.
\end{comment}

The havoc command $\bvar := *$ performs a non-deterministic assignment
to $\bvar$.  The $\bassume$ command is used to restrict the executions
of the program to those that satisfy the given (closed)
$\exists^* \forall^*$-formula.  Sequential composition and
non-deterministic choice are defined in the usual way.
%% Non-deterministic iteration is defined as executing a command an
%% arbitrary number of times (including zero).  Thus, $\bwhile * do~\Cmd$
%% can be understood as
%% $\left\{ skip~|~\Cmd~|~\Cmd;\Cmd~|~\Cmd;\Cmd;\Cmd~|~\ldots \right\}$.

The commands given in \Cref{Fi:RMLSyntax} are the core of RML\@.
\Cref{fig:rml-sugar} provides several useful syntactic sugars for RML which we use in the examples in this paper,
including an \bif-\bthen-\belse\ command and
convenient update commands for relations and functions.

\begin{figure}
\[
\begin{array}{ll}
\textbf{Syntactic Sugar} &
\textbf{Desugared RML}
\\
\toprule
\textbf{local}~\bvar ~:=~ * &
\mbox{\bvar{} is syntactically declared inside the current scope}
\\
&
\bvar := *
\\
\midrule
\textbf{action}~\baction\left(\bvar_1,\ldots,\bvar_n\right)\{\Cmd\} &
%\textbf{action}~\baction~\{\textbf{local}~\bvar_1:\bsort_1:=*;\ldots;\textbf{local}~\bvar_n:\bsort_n:=*;\Cmd\}
\textbf{action}~\baction~\{
\\& \qquad \textbf{local}~\bvar_1:=*;
\\& \qquad \ldots
\\& \qquad \textbf{local}~\bvar_n:=*;
\\& \qquad \Cmd
\\& \}
\\
\midrule
\bif~\varphi~\Cmd_1 &
\{\bassume~\varphi~;~\Cmd_1\} ~|~
\{\bassume~\neg \varphi\}
\\
\bif~\varphi~\Cmd_1~\belse~\Cmd_2 &
\{\bassume~\varphi~;~\Cmd_1\} ~|~
\{\bassume~\neg \varphi~; \Cmd_2 \}
\\
\midrule
%\brel\left(\btuple{g}\right) := \varphi &
%\brel\left(\btuple{x}\right) := \left(\btuple{x}=\btuple{g} \land \varphi\right) \lor
%                                \left(\btuple{x}\neq\btuple{g} \land \brel\left(\btuple{x}\right)\right)
%\\
\brel . \binsert \left( \btuple{y} ~|~ \varphi_{QF}\left( \btuple{y}\right) \right)  &
 \brel\left( \btuple{x}\right)  := \brel\left( \btuple{x}\right)  \vee \left(\btuple{x} = \btuple{y} \land \varphi_{QF}\left( \btuple{y}\right)\right)
\\
\brel . \binsert \left( \btuple{g} \right)  &
 \brel . \binsert \left( \btuple{y} ~|~ \btuple{y} = \btuple{g} \right) \\
\bottomrule
\end{array}
\]
\caption[Syntactic sugars for RML]{\label{fig:rml-sugar}%
Syntactic sugars for RML\@.
In addition to using the notations of \Cref{Fi:RMLSyntax},
g denotes a ground term, $\btuple{y}$ denotes a tuple of terms where each $y_i$ is $x_i$ or a ground term, $\btuple{g}$ denotes a tuple of ground terms,
and equality and between tuples denotes the conjunction of the component-wise equalities.
\sharon{I don't see any formal notion of ``scope'' in RML\@. So I don't understand the translation of ``local''. Perhaps it's not really a syntactic sugar (as a syntactic sugar I would add it to the declarations but this is definitely not what we want)}
}
\end{figure}

%%% Local Variables:
%%% mode: latex
%%% TeX-master: "main.tex"
%%% End:

\begin{comment}
\para{Executions}

An RML program has the form $\Decls ~;~ \cinit ~;~
\bwhile~*~\bdo~\cbody ~;~ \cfinal$.  Execution traces of an RML
program are defined as a sequence of states that correspond to
executing $\cinit$ (from any state satisfying the axioms \axioms),
then executing $\cbody$ any number of times (including zero), and then
executing $\cfinal$.
\end{comment}

\para{Turing-completeness}
To see that RML is Turing-complete, we can encode a (Minsky) counter
machine in RML\@. Each counter $c_i$ can be encoded with a unary relation
$r_i$. The value of counter $c_i$ is the number of elements in $r_i$.
Testing for zero, incrementing, and decrementing counters can all be
easily expressed by RML commands.

\subsubsection{Axiomatic Semantics}%
\label{sec:prelim:rml:semantics}%
\label{sec:prelim:rml:tr}

We now provide a formal semantics for RML by defining a weakest
precondition operator for RML commands with respect to assertions
expressed in first-order logic, which also allows us to define the
transition relation formula of an RML command.
We start with a formal definition of program states as structures, and program assertions as formulas in first-order logic.

\para{States}
Recall that an RML program declares a set of program variables $\vars$ and relations \relations.  We define a
first-order vocabulary $\vocabulary$, that contains a relation symbol
for every relation in \relations, and a constant symbol for every variable in
\vars. A state of the program is given by
a first-order structure over $\vocabulary$.
The states of an RML program are structures of $\vocabulary$ that satisfy all
the axioms $\axioms$ declared by the program.
%Note that we consider only structures over a finite domain (hence, every relation corresponds to a finite table).

\para{Assertions}
Assertions on program states are specified by sentences in first-order logic over
$\vocabulary$.
%In the sequel, we use assertions and formulas interchangeably, and they are always assumed to be closed.
A state satisfies an assertion if it satisfies it in the
usual semantics of first-order logic.

\begin{remark}
Note that program variables, modeled
as constant symbols in $\vocabulary$, should not be confused with logical variables
used in first-order formulas.
\end{remark}

\para{Weakest precondition of RML commands}

\begin{figure}
%\resizebox{\textwidth}{!}{
%\begin{footnotesize}
\arraycolsep=2pt
\def\arraystretch{1.3}
$\begin{array}{rcl}
%\hline
\wp \left( \bskip, Q \right)   & = & Q
\\
\wp \left( \babort, Q \right)   & =  & \false
\\
\wp \left( \brel \left( \btuple{x} \right)  := \phiqf \left( \btuple{x} \right) , Q \right)  & = & \substitute{\left( \axioms \to Q \right)}{\brel \left( \btuple{s} \right) }{\phiqf \left( \btuple{s} \right) }
\\
\wp \left( \bvar := *, Q \right)   & = & \forall x.\, \substitute{\left( \axioms \to Q \right)}{\bvar}{x}
% \qquad \mbox{where } x \mbox{ is a fresh logical variable}
\\
\wp \left( \bassume~\phiea, Q \right) & = &
\phiea \to Q %\equiv \neg \phiea \lor Q
\\
\wp \left( \Cmd_1~;~\Cmd_2, Q \right)  & = & \wp \left( \Cmd_1, \wp \left( \Cmd_2, Q \right)  \right)
\\
\wp \left( \Cmd_1~|~\Cmd_2, Q \right)  & = & \wp \left( \Cmd_1, Q \right)  \land \wp \left( \Cmd_2, Q \right)
%% \\
%% \wp \left( ~\bwhile * \bdo~\Cmd, Q \right)  & = & I~\mbox{s.t.}~
%% \begin{cases}
%% I \Rightarrow \wp \left( \Cmd, I \right)\\
%% I \Rightarrow Q
%% \end{cases}
%% \begin{array}{l}
%% I \Rightarrow \wp \left( \Cmd, I \right)\\
%% I \Rightarrow Q
%% \end{array}
\end{array}
$
\caption{\label{Fi:RMLWP}
Rules for $\wp$ for RML. %$\substitute{\phi}{\alpha}{\beta}$ denotes $\phi$ with occurrences of $\alpha$ substituted by $\beta$.
$\btuple{s}$ denotes a vector of terms.
}
%\end{footnotesize}
\end{figure}

%% \begin{table*}[t]
%% \resizebox{\textwidth}{!}{
%% $\begin{array}{||l|l||}
%% \hhline{|t:==:t|}
%% \textbf{Command~C} & \wp(C, Q)\\
%% \hline
%% \bskip  & Q \\
%% \babort  & \false \\
%% \brel . \binsert (x_1,\ldots,x_n : \varphi_{QF}(x_1,\ldots,x_n)) & \substitute{Q}{\brel(t_1,\ldots,t_n)}{\brel(t_1,\ldots,t_n) \vee \varphi_{QF}(t1_,\ldots,t_n)}\\
%% \brel . \bremove (x_1,\ldots,x_n : \varphi_{QF}(x_1,\ldots,x_n)) & \substitute{Q}{\brel(t_1,\ldots,t_n)}{\brel(t_1,\ldots,t_n) \wedge \neg \varphi_{QF}(t1_,\ldots,t_n)}\\
%% \bvar := *  &  \forall x.\, \substitute{Q}{\bvar}{x} \mbox{where } x \mbox{ is a fresh variable} \\
%% \bassume~\varphi_{EA} & \varphi_{EA} \to Q \\
%% C_1 ; C_2 & \wp(C_1, \wp(C_2, Q)) \\
%% C_1 | C_2 & \wp(C_1, Q) \land \wp(C_2, Q) \\
%% \end{array}$
%% }
%% \vspace{.75ex}
%% {\small\caption{\label{Ta:Wprules}%
%% Rules for $\wp$.
%% $\substitute{Q}{x}{y}$ denotes $Q$ with all occurrences of atomic formula $x$
%% replaced by $y$.
%% }
%% }
%% \vspace{-2ex}
%% \end{table*}

%%% Local Variables:
%%% mode: latex
%%% TeX-master: "main.tex"
%%% End:

\Cref{Fi:RMLWP} presents the definition of a \emph{weakest
  precondition} operator for RML, denoted $\wp$.  The weakest
precondition~\cite{Dijkstra76} of a command $\Cmd$ with respect to
an assertion $Q$, denoted $\wp(\Cmd, Q)$, is an assertion $Q'$ such
that every execution of $\Cmd$ starting from a state that satisfies
$Q'$ leads to a state that satisfies $Q$. Further, $\wp(\Cmd, Q)$ is
the weakest such assertion. Namely, $Q' \implies \wp(\Cmd, Q)$ for % chktex 1
every $Q'$ as above.

The rule for $\wp$ of $\bskip$ is standard, as are the
rules for $\babort$, $\bassume$, sequential composition and non-deterministic
choice. The rules for updates of relations and functions and for havoc
are instances of Hoare's assignment rule~\cite{Hoare69}, applied to
the setting of RML and adjusted for the fact that state mutations are
restricted by the axioms $\axioms$.

\para{Transition relation formulas of RML commands}

\oded{is the following reasonable? I don't want to go into too much
  detail here}

The weakest precondition operator is closely related to transition
relation formulas. Recall that a transition relation formula has
vocabulary $\vocabulary \uplus \vocabulary'$, where the primed symbols
represent the state after executing the command.
%
%For further details on efficient ways to encode verification conditions (using weakest precondition operators and transition formulas), see \cite{DBLP:conf/popl/FlanaganS01,DBLP:journals/ipl/Leino05}.
%
Here, we use the connection between transition relations and the weakest precondition (pointed out by~\cite{EWD:EWD821,DBLP:journals/toplas/Nelson89}), to define the transition relation of a command $\Cmd$, denoted by $\delta[\Cmd]$, as follows:
\begin{equation*}
\delta[\Cmd] = \neg \wp\left(\Cmd, \neg \psi_{\vocabulary = \vocabulary'} \right)
\end{equation*}
where
\begin{equation*}
\psi_{\vocabulary = \vocabulary'} =
\bigwedge_{r \in \vocabulary} \forall \btuple{x}. \; r(\btuple{x}) \leftrightarrow r'(\btuple{x})
\;\land\;
\bigwedge_{c \in \vocabulary} c = c'
\end{equation*}
This makes a slight abuse of the definition of $\wp$, since it applies
$\wp$ to a formula over $\vocabulary \uplus \vocabulary'$. However, in
this context, the symbols in $\vocabulary'$ can be treated as
additional auxiliary symbols, without special meaning.

Intuitively, there is a transition from $\state$ to $\state'$ if and
only if $\state$ does not satisfy the weakest precondition of ``not
being $\state'$''. This is captured by the above connection, and ``not
being $\state'$'' is captured by $\neg \psi_{\vocabulary =
  \vocabulary'}$. Note further that non-deterministic choice between
commands results in a conjunction in the weakest precondition, and a
disjunction in the transition relation. Similarly, a havoc command
results in a universal quantifier in the weakest precondition, and an
existential quantifier in the transition relation.

The transition relation of the entire RML program $\delta$ is given by the disjunction of the transition relations of each action in the RML program, where the transition relation of each
action is computed from its body via the above definition. Formally, if the bodies of
actions are $\Cmd_1,\til,\Cmd_k$ then the transition relation of the
whole program $\delta$ is given by:
\begin{equation*}
\delta = \axioms \land \bigvee_{i} \delta[\Cmd_i]
\end{equation*}
\yotam{Added axioms to the big TR, should we remove them from the wp table? Should we explain this?}
Note that when using the syntactic sugar of \Cref{fig:rml-sugar} to define
actions with parameters, these parameters are existentially quantified
in the transition relation (through the negation of the weakest-precondition rule of a havoc command). \yotam{Oded, is the text in parenthesis correct?}

\TODO{add axioms to TR}

\para{RML Produces EPR Transition Relations.}
RML is designed so that the transition relations associated with RML programs are EPR transition relations.
This is formalized in the following claims:
\begin{lemma}%
\label{lemma:rml-wp-ae}
Let $\Cmd$ be an RML command. If $Q \in \forall^*\exists^*(\Sigma)$-formula, then so is the prenex normal form of $\wp(\Cmd,Q)$.
\end{lemma}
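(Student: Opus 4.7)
The plan is to prove the lemma by structural induction on the command $\Cmd$, following the cases of the $\wp$ definition in \Cref{Fi:RMLWP}. The base ingredients I would first establish (or invoke as folklore) are the standard closure properties of the $\forall^*\exists^*$ class: (i) it is closed under conjunction and disjunction (after $\alpha$-renaming bound variables, the quantifier prefixes can be merged by pulling the $\forall$'s out first and the $\exists$'s second); (ii) prepending a universal quantifier to a $\forall^*\exists^*$ formula yields a $\forall^*\exists^*$ formula; and (iii) if $\varphi \in \exists^*\forall^*(\Sigma)$ and $\psi \in \forall^*\exists^*(\Sigma)$, then $\varphi \to \psi$ is equivalent to $\neg \varphi \lor \psi$, which is a disjunction of two $\forall^*\exists^*$ formulas and hence (by (i)) itself $\forall^*\exists^*$. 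I would also note that substituting a quantifier-free formula for an atomic subformula preserves the quantifier prefix of the enclosing formula.

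With these closure facts in hand, each inductive case becomes routine. For $\bskip$ and $\babort$, the claim is immediate since $\wp$ returns $Q$ or $\false$. For $\bassume~\phiea$, $\wp$ is $\phiea \to Q$, which by (iii) is $\forall^*\exists^*$. For the relation update $\brel(\bar x) := \phiqf(\bar x)$, the formula $\axioms \to Q$ is $\forall^*\exists^*$ by (iii), and substituting the quantifier-free $\phiqf$ for atoms $\brel(\bar s)$ does not disturb this prefix. For the havoc $\bvar := *$, one applies (iii) to get $\axioms \to Q$ in $\forall^*\exists^*$, substitutes the logical variable $x$ for the constant $\bvar$ (which is a ground substitution and does not affect quantifier structure), and then prepends $\forall x$, staying in $\forall^*\exists^*$ by (ii). For non-deterministic choice $\Cmd_1 \mid \Cmd_2$, the rule produces a conjunction of two $\forall^*\exists^*$ formulas (by the induction hypothesis), which is $\forall^*\exists^*$ by (i). For sequential composition $\Cmd_1;\Cmd_2$, the induction hypothesis on $\Cmd_2$ gives that $\wp(\Cmd_2,Q)$ is $\forall^*\exists^*$, and the induction hypothesis on $\Cmd_1$ applied to this new postcondition closes the case.

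The one step that deserves care, and is the main (though mild) obstacle, is the merger of quantifier prefixes in cases (i) and (ii): since $\forall^*\exists^*$ is defined via prenex normal form, one must argue that the formula obtained by, for example, taking a conjunction of two $\forall\bar x.\,\exists\bar y.\,\psi_1$ and $\forall\bar u.\,\exists\bar v.\,\psi_2$ can indeed be put into the single prenex shape $\forall \bar x\,\bar u.\,\exists\bar y\,\bar v.\,(\psi_1 \land \psi_2)$ — that is, that universals distribute over conjunction and existentials can be pulled outward past already-outer universals after renaming. This is a standard prenexing argument, and once it is spelled out (or cited), each of the seven cases of $\wp$ in \Cref{Fi:RMLWP} collapses to a one-line verification, completing the induction.
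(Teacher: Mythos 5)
Your proof is correct and follows the same route as the paper, which simply declares the claim ``straightforward from the rules of \Cref{Fi:RMLWP}'' together with the observations that relation updates use quantifier-free formulas and that $\bassume$ commands and axioms are $\exists^*\forall^*$. Your write-up merely fills in the routine closure properties and case analysis that the paper leaves implicit.
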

\begin{proof}
Straightforward from the rules of \Cref{Fi:RMLWP}, the fact that all assignments to relations use quantifier-free formulas, and all $\bassume$ commands and axioms are of formulas with $\exists^*\,\forall^*$ prenex normal form.
\end{proof}

\begin{corollary}
The transition relation $\delta$ of an RML program is an EPR transition relation.
\end{corollary}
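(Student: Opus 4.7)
The plan is to chain together \Cref{lemma:rml-wp-ae} with the syntactic shape of $\psi_{\vocabulary=\vocabulary'}$ and the axioms, then observe that negation and disjunction over $\vocabulary \uplus \vocabulary'$ keep us inside $\EPR(\vocabulary \uplus \vocabulary')$. Concretely, I would argue as follows.

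First, I would analyze $\neg \psi_{\vocabulary=\vocabulary'}$: the formula $\psi_{\vocabulary=\vocabulary'}$ is a conjunction of universal sentences and ground equalities, so it is universal; therefore $\neg \psi_{\vocabulary=\vocabulary'}$ is existential, and in particular it is a $\forall^*\exists^*$ sentence (with an empty universal prefix). Here I am treating the primed symbols in $\vocabulary'$ as auxiliary symbols, exactly as the definition of $\delta[\Cmd]$ prescribes. By \Cref{lemma:rml-wp-ae} applied to the command $\Cmd_i$ and the post-assertion $Q = \neg \psi_{\vocabulary=\vocabulary'}$, the prenex normal form of $\wp(\Cmd_i,\,\neg \psi_{\vocabulary=\vocabulary'})$ is a $\forall^*\exists^*$ sentence over $\vocabulary \uplus \vocabulary'$.

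Next, I would negate: $\delta[\Cmd_i] = \neg \wp(\Cmd_i,\,\neg \psi_{\vocabulary=\vocabulary'})$, and the negation of a $\forall^*\exists^*$ sentence is (after driving negation inward and swapping quantifiers) an $\exists^*\forall^*$ sentence, i.e.\ an $\EPR$ sentence over $\vocabulary \uplus \vocabulary'$. The vocabulary is relational since the RML vocabulary contains only constants and relations (no function symbols are introduced by the translation in \Cref{Fi:RMLWP}), which is the other requirement of $\EPR$.

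Finally, I would assemble $\delta = \axioms \land \bigvee_i \delta[\Cmd_i]$. Each $\delta[\Cmd_i]$ is $\exists^*\forall^*$ over $\vocabulary \uplus \vocabulary'$ (with the primed symbols treated as fresh relation and constant symbols), and the axioms $\axioms$ are $\exists^*\forall^*$ sentences over $\vocabulary$ by the RML declaration grammar in \Cref{Fi:RMLSyntax}. Conjoining with $\axioms$ and taking a finite disjunction preserves the $\exists^*\forall^*$ shape: renaming the existential witnesses apart and pulling all existential quantifiers to the front, followed by pulling the universal quantifiers after them, yields a prenex $\exists^*\forall^*$ sentence. The vocabulary remains relational, so $\delta \in \EPR(\vocabulary \uplus \vocabulary')$, which is exactly the definition of an EPR transition relation. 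The only non-routine step is checking that prenexing a disjunction of $\exists^*\forall^*$ sentences stays in $\exists^*\forall^*$, and this is standard once existential quantifiers are alpha-renamed to disjoint variables so they can be hoisted out of the disjunction before the universals.
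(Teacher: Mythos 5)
Your proof is correct and follows the same route as the paper's (very terse) argument: apply \Cref{lemma:rml-wp-ae} to $Q=\neg\psi_{\vocabulary=\vocabulary'}$, negate to get an $\exists^*\forall^*$ sentence for each $\delta[\Cmd_i]$, and observe that the conjunction with $\axioms$ and the finite disjunction over actions stay in $\EPR$. You have merely spelled out the prenexing details that the paper leaves implicit.
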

\begin{proof}
Follows from \Cref{lemma:rml-wp-ae}, the construction of $\delta[\Cmd]$ as the negation of a weakest-precondition, and the transition relation of the entire program $\delta$ as the disjunction of transition relations of individual actions.
\end{proof}

\def\inducformula{\textit{Ind}}
\def\arity#1{\text{Arity}_{#1}}
\def\instdepth#1#2{\mathop{\text{depth}}(\instantiate{#1}{#2})} % chktex 36

\section{Bounded-Horizon}%
\label{sec:bounded-horizon}
In this section, we define a systematic method of quantifier instantiation called \emph{Bounded-Horizon} as a way of checking the inductiveness of first-order logic formulas,
and explore some of its basic properties.

\para{Undecidability}
We first justify the use of sound but incomplete algorithms, such as the Bounded-Horizon algorithm, for the problem of checking inductiveness of $\AE$ formulas.
For a universal sentence $I \in \forall^*(\Sigma)$, the sentence $I \land \delta \land \neg I'$ is in EPR (recall that $\delta$ is specified in EPR). Hence, checking inductiveness amounts to checking the unsatisfiability of an EPR formula, and is therefore decidable. The same holds for $I \in AF(\Sigma)$. However, this is no longer true when quantifier alternation is introduced.
%For example, checking inductiveness of $I \in \forall^* \exists^* (\Sigma)$ amounts to checking unsatisfiability of a formula in a fragment for which satisfiability is undecidable.
In \Cref{sec:undecidability} we show that checking inductiveness of $\AE$ formulas is indeed undecidable, even when the transition relation is restricted to $\EPR$ (see \Cref{thm:undecidability-infinite}).
%:
%\begin{theorem}
%It is undecidable to check given $I \in \AE$ and $\delta \in \EPR$ whether $I$ is inductive for $\delta$ over general structures.
%\end{theorem}
%\Cref{sec:undecidability} contains a proof of this claim (and a similar one for finite structures).
Thus, an attempt to check inductiveness must sacrifice either soundness, completeness, or termination.
Techniques based on quantifier instantiation usually prefer completeness over termination.
In contrast, the Bounded-Horizon algorithm guarantees termination a-priori, possibly at the expense of completeness (but is surprisingly powerful nonetheless).
We now move to define the Bounded-Horizon algorithm for checking invariants with quantifier alternation and discuss its basic properties in checking inductiveness.

\para{Bounded-Horizon Instantiations}
%We are given an EPR transition relation $\delta \in \exists^{*}\forall^{*}(\Sigma, \Sigma')$ and a candidate invariant $I \in \FOL(\Sigma)$.
Let $\delta \in \exists^{*}\forall^{*}(\Sigma, \Sigma')$ be an EPR transition relation and $I \in \FOL(\Sigma)$ a candidate invariant.
We would like to check the satisfiability of $I \land \delta \land \neg I'$, and equivalently of
$\inducformula = \sk{I} \land \sk{\delta} \land \sk{(\neg I')}$.
Recall that $\sk{\phi}$ denotes the Skolemization of $\phi$, and
note that $\sk{I}$ and $\sk{(\neg I')}$ possibly add Skolem functions to the vocabulary.
($\delta$ is an EPR sentence and so its Skolemization adds only constants.)
Roughly speaking, for a given $k \in \Nat$, Bounded-Horizon %\sharon{introduce acronym (BH)}
instantiates the universal quantifiers in $\inducformula$, while restricting the instantiations to produce ground-terms of function nesting at most $k$.
\iflong%
We then check if this (finite) set of instantiations is unsatisfiable; if it is already unsatisfiable then we have a proof that $I$ is inductive.
Otherwise we report that $I$ is not known to be inductive. %, so we choose to report that it is \emph{not} inductive.
The idea is to choose a (preferably small) number $k$ and perform instantiations bounded by $k$ instead of full-blown instantiation. As we will show, this algorithm is sound but not necessarily complete for a given $k$.
\fi

Below we provide the formal definitions%
\iflong%
.
\else
\ and discuss soundness and (in)completeness. % chktex 36
\fi %We begin by formalizing the necessary definitions.
We start with the notion of instantiations, and recall Herbrand's theorem which establishes completeness of proof by unrestricted instantiations.
Suppose that some vocabulary $\tilde{\Sigma}$ including constants and function symbols is understood (e.g., $\tilde{\Sigma} = \Sigma \uplus \sk{\Sigma}$, where $\sk{\Sigma}$ includes Skolem constants and function symbols).

\begin{definition}[Instantiation]
Let $\varphi(\vec{x}) \in \forall^{*}(\tilde{\Sigma})$ be a universal formula with $n$ free variables and $m$ universal quantifiers. An \emph{instantiation} of $\varphi$ by %$\vec{t} \in (\closedterms)^m$,
a tuple $\vec{t}$ of $n+m$ ground terms, denoted by $\instantiate{\varphi}{\ov{t}}$, is obtained by substituting $\ov{t}$ for the free variables and the universally quantified variables, and then removing the universal quantifiers.
%treating the universally quantified variables in $\varphi$ as free variables and substutiting $\ov{t}$ instead of the free variables.
\end{definition}
Note that an instantiation is a quantifier-free sentence.

\begin{theorem}[Herbrand's Theorem]%
\label{herbrandsTheorem}
Let $\varphi \in \forall^{*} (\tilde{\Sigma})$. Then $\varphi$ is satisfiable iff the (potentially infinite) set
%\begin{equation}
%	\left\{
%		\instantiate{\varphi}{\vec{t}} \mid \vec{t} \mbox{ is a tuple of ground terms over } \tilde{\Sigma}
%		%\ | \
%		%\vec{t} \in \closedterms
%	\right\}
%\end{equation}
$\left\{
		\instantiate{\varphi}{\vec{t}} \mid \vec{t} \mbox{ is a tuple of ground terms over } \tilde{\Sigma}
		%\ | \
		%\vec{t} \in \closedterms
	\right\}$
is satisfiable.
\end{theorem}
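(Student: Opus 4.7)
The forward direction is immediate: if $\struct \models \varphi$ with $\varphi = \forall \vecx.\, \psi(\vecx)$, then for any tuple of ground terms $\vect$, interpreting them in $\struct$ yields elements $\vec{a}$ satisfying $\psi(\vec{a})$, so $\struct \models \instantiate{\varphi}{\vect}$. Hence $\struct$ simultaneously satisfies all ground instantiations.

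The main work is the converse. My plan is to build a Herbrand model directly from the assumed satisfiability of the instantiations. First, if $\tilde{\Sigma}$ contains no constant symbols, adjoin a fresh constant $c_{0}$ so that the set of ground terms $\bhterms{\infty}$ over $\tilde{\Sigma}$ is non-empty; this preserves satisfiability of the instantiation set since $c_0$ does not appear in it. Next, let $\struct = (\Dom, \Int)$ be any model of the full instantiation set, whose existence is the hypothesis. I will define a \emph{Herbrand pre-structure} $\mathcal{H}_0$ whose domain is $\bhterms{\infty}$, with each function symbol $f$ interpreted syntactically by $f^{\mathcal{H}_0}(t_1,\ldots,t_n) = f(t_1,\ldots,t_n)$ and each relation symbol interpreted by $r^{\mathcal{H}_0}(t_1,\ldots,t_n)$ iff $\struct \models r(t_1,\ldots,t_n)$.

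The key claim is that $\mathcal{H}_0 \models \instantiate{\varphi}{\vect}$ for every tuple $\vect$ of ground terms. This follows by a straightforward induction on the quantifier-free matrix $\psi$: atoms of the form $r(s_1,\ldots,s_n)$ are handled by the definition of $r^{\mathcal{H}_0}$ together with the fact that each ground term $s_i$ denotes itself in $\mathcal{H}_0$, and Boolean connectives are preserved. Since every ground tuple over $\bhterms{\infty}$ gives rise to some instantiation $\instantiate{\varphi}{\vect}$ in our set, this shows that $\mathcal{H}_0 \models \forall \vecx.\, \psi(\vecx) = \varphi$.

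The remaining subtlety, which I expect to be the main obstacle, is handling equality: two distinct ground terms $t_1, t_2$ may be forced equal by $\struct$, so interpreting them as distinct elements in the Herbrand domain would break the equality axioms. The standard fix is to pass to the quotient: define $t_1 \sim t_2$ iff $\struct \models t_1 = t_2$, check that $\sim$ is a congruence with respect to all function and relation symbols (this uses only the fact that $\struct$ is a genuine $\tilde{\Sigma}$-structure, i.e.\ that equality is interpreted as true equality and is therefore a congruence in $\struct$), and take $\mathcal{H} = \mathcal{H}_0 / {\sim}$ with the induced interpretations. One then verifies that truth of quantifier-free formulas descends to the quotient and concludes $\mathcal{H} \models \varphi$, completing the backward direction.
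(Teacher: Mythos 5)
The paper states Herbrand's Theorem as a classical result and gives no proof of it, so there is no in-paper argument to compare against; your proof stands on its own. It is correct and is the standard term-model argument: soundness of instantiation for the forward direction, and for the converse a Herbrand structure on the ground terms whose relations are read off from a model $\struct$ of the full instantiation set, quotiented by the congruence $t_1 \sim t_2 \iff \struct \models t_1 = t_2$ to repair equality. Two small presentational points. First, your ``key claim'' that $\mathcal{H}_0 \models \instantiate{\varphi}{\vect}$ is not literally true for the pre-structure when the matrix contains equality atoms (an atom $s_1 = s_2$ can hold in $\struct$ while $s_1, s_2$ are distinct elements of $\mathcal{H}_0$), so the induction on the matrix should really be run once, after quotienting, as an atom-by-atom transfer lemma showing every ground atom has the same truth value in $\mathcal{H} = \mathcal{H}_0/{\sim}$ as in $\struct$; you flag exactly this and the fix is right, so this is only a matter of ordering. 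Second, you implicitly read $\varphi$ as a closed universal sentence $\forall \vecx.\, \psi(\vecx)$; this is the intended reading (the paper applies the theorem to the sentence $\sk{I} \land \sk{\delta} \land \sk{(\neg I')}$), and is needed, since with genuinely free variables the ``satisfiable iff all instantiations satisfiable'' equivalence would fail. The step from ``$\mathcal{H}$ satisfies every instantiation'' to ``$\mathcal{H} \models \varphi$'' is justified because every element of the quotient domain is the denotation of a ground term, which you use correctly.
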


\begin{remark}%
\label{rem:epr-complete-inst}
Herbrand's theorem provides a simple proof of the decidability of the satisfiability of EPR sentences: % through Bounded-Horizon of bound 0:
Let $\varphi \in \EPR(\Sigma)$. Then its Skolemization $\sk{\varphi}$ may introduce constant symbols but not function symbols (since there is no $\AE$ quantification). Function symbols are not present in the vocabulary, so the set of possible instantiations is finite. From Herbrand's theorem, it suffices to check the satisfiability of the finite set of quantifier-free sentences $\{\instantiate{\sk{\varphi}}{\vec{c}}\}$, which is decidable.
%\yotam{Sharon, can you review?}
\end{remark}

While EPR sentences introduce only instantiations on constant symbols (bound $0$ when considering the bound of function applications), arbitrary sentences may introduce instantiations of unbounded depths. We now turn to restrict the depth of terms used in instantiations.
\begin{definition}[Bounded-Depth Terms]
	For every $k \in \Nat$, we define $\bhterms{k}$ to be the set of ground terms over $\tilde{\Sigma}$ with
	function symbols nested to depth at most $k$. $\bhterms{k}$ is defined by induction over $k$, as follows.
	Let $C$ be the set of constants in $\tilde{\Sigma}$, $F$ the set of functions, and for every $f \in F$ let $\arity{f}$ be the arity of $f$. Then
\iflong%
	\begin{align*}
				\bhterms{0} &= C %&&
				\\
				\bhterms{k} &= \bhterms{k - 1} \cup \{f(t_1, \ldots t_m) \mid
														f \in F, \ m = \arity{f}, \ t_1, \ldots, t_m \in \bhterms{k-1} \}.
%				\bhterms{k} &= \bhterms{k - 1} \cup &&\{f(t_1, \ldots t_m) \ | \
%														\\
%														& && f \in F, \ m = \arity{f}, \ t_1, \ldots, t_m \in \bhterms{k-1} \}.
	\end{align*}
\else
$\bhterms{0} = C$ and for $k>0$:
	\begin{align*}
%				\bhterms{0} &= C %&&
%				\\
				\bhterms{k} &= \bhterms{k - 1} \cup \{f(t_1, \ldots t_m) \mid
														f \in F, \ m = \arity{f}, \ t_1, \ldots, t_m \in \bhterms{k-1} \}.
%				\bhterms{k} &= \bhterms{k - 1} \cup &&\{f(t_1, \ldots t_m) \ | \
%														\\
%														& && f \in F, \ m = \arity{f}, \ t_1, \ldots, t_m \in \bhterms{k-1} \}.
	\end{align*}
\fi
\end{definition}
We will also write
$\ov{t}\in\bhterms{k}$ for a tuple of terms $\ov{t}$, to mean that every entry of $\ov{t}$ is in $\bhterms{k}$ (the number of elements in $\ov{t}$ should be clear from the context).
%\end{definition}
Note that the set of ground terms is $\closedterms = \bigcup_{k \in \Nat}{\bhterms{k}}$.

\begin{definition}[Depth of Instantiation]
	Let $\varphi \in \forall^{*}(\tilde{\Sigma})$ and $\vec{t} \in \closedterms$. The \emph{depth of instantiation}, denoted $\instdepth{\varphi}{\vec{t}}$, is the smallest $k$ such that all ground terms that appear in $\instantiate{\varphi}{\vec{t}}$ are included in $\bhterms{k}$.
\end{definition}

We are now ready to define the algorithm and discuss its basic soundness and completeness properties.

\para{Bounded-Horizon algorithm}
Given a candidate invariant $I \in \FOL(\Sigma)$, a transition relation $\delta$ over $\Sigma \uplus \Sigma'$, and $k \in \Nat$, the Bounded-Horizon algorithm constructs the formula $\inducformula = \sk{I} \sland \sk{\delta} \sland \sk{(\neg I')}$, and
%$I$ is provably inductive w.r.t.\ $\delta$ with Bounded-Horizon of bound $k$ if the set
checks if the set
%\begin{equation}
%	\left\{
%		\instantiate{\inducformula}{\vec{t}}
%		\ | \
%		\vec{t} \in \bhterms{k}, \ \instdepth{\inducformula}{\vec{t}} \leq k
%	\right\}
%\end{equation}
\begin{equation}
\label{eq:bounded-horizon-def}
\left\{
		\instantiate{\inducformula}{\vec{t}}
		\ | \
		\vec{t} \in \bhterms{k}, \ \instdepth{\inducformula}{\vec{t}} \leq k
	\right\}
\end{equation}
is unsatisfiable. If it is unsatisfiable, then $I$ is provably \emph{inductive} w.r.t.\ $\delta$ with \emph{Bounded-Horizon of bound $k$}.
%\sharon{notation BH(k)??}
Otherwise we report that $I$ is \emph{not known to be inductive}.

Note that the satisfiability check performed by Bounded-Horizon is decidable since the set of instantiations is finite, and each instantiation is a ground quantifier-free formula.

%\begin{definition}[Bounded-Horizon]
%We are given a candidate invariant $I$ and a transition relation $\delta$.
%Convert $I$ to Negation-Normal-Form (NNF) in the standard way and perform Skolemization to obtain $\sk{I}$, and similarly for $\sk{(\neg I')}$.
%
%Let $\inducformula = \sk{I} \sland \sk{\delta} \sland{(\neg I')}$.
%$I$ is provably inductive w.r.t.\ $\delta$ with Bounded-Horizon of bound $k$ if the set
%\begin{equation}
%	\left\{
%		\instantiate{\inducformula}{\vec{t}}
%		\ | \
%		\vec{t} \in \closedterms, \ \instdepth{\inducformula}{\vec{t}} \leq k
%	\right\}
%\end{equation}
%is unsatisfiable.
%\end{definition}

\sectionette{Bounded-Horizon for $\forall^{*}\exists^{*}$ Invariants}
We illustrate the definition of Bounded-Horizon in the case that $I \in \forall^{*}\exists^{*}(\Sigma)$.
Let $I = \forall \vec{x}. \ \exists \vec{y}. \ \alpha(\vec{x}, \vec{y})$ where $\alpha \in \QF$.
Then $\sk{I} = \forall \vec{x}. \ \alpha(\vec{x}, \vec{f}(\vec{x}))$ where $\vec{f}$ are new Skolem function symbols.
$\sk{\delta}$ introduces Skolem constants but no function symbols, and in this case so does $\sk{(\neg I')}$.
The Bounded-Horizon check of bound $k$ can be approximately\footnote{
	% yotam: appeared after the equations in parenthesis, moved to footnote
	\Cref{eq:bounded-horizon-approx} is an under-approximation of the set of instantiations used for bound $k$; variables that do not appear in $\sk{I}$ under a function symbol can be taken from $\bhterms{k}$ in the conjunction without increasing the total depth of instantiation beyond $k$, and are therefore allowed in bounded instantiation of bound $k$.
	This approximation is illustrative nonetheless, and will be useful in the proofs in \Cref{sec:bounded-horizon-instrumentation}.
}
understood as checking the (un)satisfiability of % chktex 36
\begin{equation}
\label{eq:bounded-horizon-approx}
%$$
			\bigl(\bigwedge_{\ov{t} \in \bhterms{k-1}}{\instantiate{\sk{I}}{\ov{t}}}\bigr)
	\sland 	\bigl(\bigwedge_{\ov{t} \in \bhterms{k}}{\instantiate{\sk{\delta}}{\ov{t}}}\bigr)
	\sland 	\bigl(\bigwedge_{\ov{t} \in \bhterms{k}}{\instantiate{\sk{\bigl(\neg I'\bigr)}}{\ov{t}}}\bigr).
%$$
\end{equation}
%(In fact, it is possible that $\sk{I}$ contains sub-formulas for which instantiations of depth $k$ do not increase the total depth of instantiations beyond $k$, and are thus also included.)

\sharon{the following may be understood as specific to the $\forall \exists$ case. Better add some sentence before retutning to the general scope (alternatively move the paragraph on $\forall \exists$ elsewhere --- not sure where..)} \yotam{added following:}
The Bounded-Horizon algorithm is sound for all $I \in \FOL(\Sigma)$, as formalized in the next lemma:
\begin{lemma}[Soundness]%
\label{lem:bh-sound}
For every $k \in \Nat$, Bounded-Horizon with bound $k$ is sound, i.e., if %$I$ is not inductive w.r.t.\ $\delta$ then Bounded-Horizon of bound $k$ will not report that it's inductive.
\iflong%
Bounded-Horizon of bound $k$
\else
it
\fi
reports that $I  \in \FOL(\Sigma)$ is inductive w.r.t.\ $\delta$, then $I$ is indeed inductive.
\end{lemma}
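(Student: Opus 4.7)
The plan is to chain together three standard facts: (i) equi-satisfiability under Skolemization, (ii) Herbrand's theorem, and (iii) the trivial monotonicity of unsatisfiability under taking subsets of a conjunction. Concretely, I would begin from the assumption that Bounded-Horizon with bound $k$ reports $I$ inductive, which by definition means that the finite set in \eqref{eq:bounded-horizon-def}, namely
\[
S_k = \left\{ \instantiate{\inducformula}{\vec{t}} \,\middle|\, \vec{t} \in \bhterms{k},\ \instdepth{\inducformula}{\vec{t}} \le k \right\},
\]
is unsatisfiable, where $\inducformula = \sk{I} \sland \sk{\delta} \sland \sk{(\neg I')}$.

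The first step is to enlarge $S_k$ to the full Herbrand instantiation set $S_\infty = \{\instantiate{\inducformula}{\vec{t}} \mid \vec{t} \in \closedterms\}$. Since $S_k \subseteq S_\infty$, any model of (the conjunction of) $S_\infty$ would also model $S_k$; hence the unsatisfiability of $S_k$ implies the unsatisfiability of $S_\infty$. The second step is to apply Herbrand's theorem (Theorem~\ref{herbrandsTheorem}) to the universal sentence $\inducformula$ (note that $\inducformula$ is universal because $\sk{I}$, $\sk{\delta}$, and $\sk{(\neg I')}$ are all universal by construction of Skolemization) to conclude that $\inducformula$ itself is unsatisfiable.

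The third and final step is to translate this back to the unskolemized formulation. By the standard equi-satisfiability property of Skolemization recalled in the preliminaries, $\sk{I}$ is equi-satisfiable with $I$, and likewise for $\sk{\delta}$ and $\sk{(\neg I')}$; applied to the whole conjunction, $\inducformula$ is equi-satisfiable with $I \sland \delta \sland \neg I'$. Therefore $I \sland \delta \sland \neg I'$ is unsatisfiable, which is precisely the definition of $I$ being an inductive invariant for $\delta$.

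I do not expect any real obstacle here: each step is an invocation of an already-stated fact from the preliminaries. The only minor care point is to verify that introducing Skolem symbols simultaneously for $I$, $\delta$, and $\neg I'$ preserves equi-satisfiability of the conjunction (the Skolem vocabularies for the three subformulas are disjoint by the use of fresh function symbols in each Skolemization, so the joint Skolemization of the conjunction is equi-satisfiable with the conjunction itself).
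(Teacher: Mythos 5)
Your proposal is correct and is essentially the paper's own argument read in the contrapositive direction: the paper assumes $I$ is not inductive, takes a model of $\inducformula$, and observes it satisfies every (in particular every bounded) instantiation, which is exactly your chain ``$S_k$ unsat $\Rightarrow$ $S_\infty$ unsat $\Rightarrow$ $\inducformula$ unsat $\Rightarrow$ $I \sland \delta \sland \neg I'$ unsat'' run backwards. The only cosmetic difference is that you invoke Herbrand's theorem where only its trivial direction (a model of a universal sentence satisfies all its ground instances) is needed, which is the fact the paper uses directly.
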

\begin{proof}
	Assume that $I$ is not inductive w.r.t.\ $\delta$, so there is a structure $\A$ such that $\A \models \sk{I} \sland \sk{\delta} \sland \sk{(\neg I')}$.
	In particular $\A \models \instantiate{\inducformula}{\vec{t}}$ for every $\vec{t} \in \closedterms$ and in particular for every $\vec{t} \in \bhterms{k}$ such that $\instdepth{\inducformula}{\vec{t}} \leq k$. Hence, Bounded-Horizon of bound $k$ will not report that $I$ is inductive.
\end{proof}

As the algorithm is sound for any $k$, the crucial question that remains is an appropriate choice of $k$.
A small $k$ is preferable for efficiency, but a larger $k$ could allow for proving more invariants.
In the following example, a bound of even 1 suffices for proving that the invariant is inductive.
We then show that for every correct invariant there is a suitable bound $k$, but a single choice of $k$ cannot prove all correct invariants.
Later, in \Cref{sec:bounded-horizon-instrumentation}, we show that bound of $1$ or $2$ is surprisingly powerful nonetheless.

\begin{example}%
\label{ex:ae-inv}

%IronFleet~\cite{IronFleet} describes a client server scenario  where the invariant is ``For every
%reply message sent by the server, there exists a corresponding request
%message sent by a client''.
%An example of an $\forall^* \exists^*$ invariant appears in~\cite{IronFleet} and
%involves a client server scenario, where the invariant is ``For every
%reply message sent by the server, there exists a corresponding request
%message sent by a client''.
\Cref{fig:client-server} presents a
simple model of %this
the client-server scenario described in~\cite{IronFleet}.
The program induces an EPR transition relation, and its invariant is provable by Bounded-Horizon of bound 1.
\TODO{change font of action parameters to avoid collision with quantified vars}

We first explain this example while ignoring the annotations denoted by
``$\INSTRCOMMENTS$''.  %This example models t
The system state is modeled using three
binary relations. The $\mReq$ relation stores pairs of users and
requests, representing requests sent by users.  The $\mResp$ relation
similarly stores pairs of users and replies, representing replies sent
back from the server.  The $\mMatch$ relation maintains the
correspondence between a request and its reply.

The action \texttt{new\_request} models an event where a user $\mvU$ sends
a new request to the server. The action \texttt{respond} models an
event where the server responds to a pending request by sending a
reply to the user. The request and response are related by the
$\mMatch$ relation. The action \texttt{check} is used to verify the
safety property that every response sent by the server has a matching
request, by aborting the system if this does not hold.

A natural inductive invariant for this system is
\[
    I = \forall \mvU, \mvP. \;
    \mResp(\mvU, \mvP) \rightarrow \exists \mvR. \; \mReq(\mvU, \mvR)
    \land \mMatch(\mvR,\mvP).
\]
The invariant proves that the \texttt{then} branch in action \texttt{check}
will never happen and thus the system will never abort.
This invariant is preserved under execution of all actions, and this fact is provable by Bounded Horizon of bound 1.
%the execution of any of the system's actions.

%However, this invariant is $\AE$ and thus leads to verification conditions that include quantifier alternation. This kind of quantifier alternation may lead to divergence of the underlying solver when attempting to verify the preservation of the
%invariant.
\end{example}

\begin{lemma}[Completeness for some $k$]%
\label{bh:exists-k}
\iflong%
For every $I \in \FOL(\Sigma)$ and $\delta$ such that $I$ is inductive w.r.t.\ $\delta$
\else
If $I \in \FOL(\Sigma)$  is inductive w.r.t.\ $\delta$ then
\fi
 there exists
 \iflong%
 a finite
 \fi
 $k \in \Nat$ s.t. $I$ is provably inductive w.r.t.\ $\delta$ with Bounded-Horizon of bound $k$.
\end{lemma}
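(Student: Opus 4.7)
The plan is to combine Herbrand's theorem with compactness of first-order logic. Assume $I$ is inductive with respect to $\delta$. Then $I \land \delta \land \neg I'$ is unsatisfiable, and since Skolemization preserves satisfiability, the formula $\inducformula = \sk{I} \sland \sk{\delta} \sland \sk{(\neg I')}$ is also unsatisfiable. Note that $\inducformula$ is a universal sentence over the extended vocabulary $\tilde{\Sigma} = \Sigma \uplus \Sigma' \uplus \sk{\Sigma}$, so Herbrand's theorem (\Cref{herbrandsTheorem}) applies.

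By \Cref{herbrandsTheorem}, the (possibly infinite) set $\Gamma = \{\instantiate{\inducformula}{\vec{t}} \mid \vec{t} \in \closedterms\}$ is unsatisfiable. Every formula in $\Gamma$ is quantifier-free and ground, so we may view $\Gamma$ as a set of propositional formulas over the finite set of ground atoms. Hence, by compactness of first-order (or propositional) logic, there exists a \emph{finite} subset $\Gamma_0 \subseteq \Gamma$ which is already unsatisfiable.

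Since $\Gamma_0$ is finite, each of its members is an instantiation of $\inducformula$ by some tuple of ground terms, and only finitely many such tuples are used in total. In particular, only finitely many ground terms appear across $\Gamma_0$, so there is a finite maximum depth among them: let
\[
    k = \max\bigl\{\instdepth{\inducformula}{\vec{t}} \,\bigm|\, \instantiate{\inducformula}{\vec{t}} \in \Gamma_0\bigr\}.
\]
By definition of $k$, every tuple $\vec{t}$ used in $\Gamma_0$ satisfies $\vec{t} \in \bhterms{k}$ and $\instdepth{\inducformula}{\vec{t}} \leq k$, so $\Gamma_0$ is contained in the Bounded-Horizon set of instantiations (\Cref{eq:bounded-horizon-def}) with bound $k$. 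Since $\Gamma_0$ is unsatisfiable, so is the larger Bounded-Horizon set, and hence Bounded-Horizon with bound $k$ reports $I$ as inductive.

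The argument is essentially routine once Herbrand's theorem is invoked; the only subtle point is the passage from an unsatisfiable infinite set of ground formulas to a finite unsatisfiable subset, which is immediate from compactness but worth stating explicitly, since it is this step that produces the finite bound $k$. Note that the proof gives no a priori bound on $k$ in terms of $I$ and $\delta$, which is consistent with the undecidability result referenced in \Cref{sec:undecidability}: a computable uniform bound would contradict undecidability of checking inductiveness for $\AE$ invariants.
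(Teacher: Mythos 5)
Your proof is correct and follows exactly the paper's own argument, which likewise invokes Herbrand's theorem together with compactness to extract a finite unsatisfiable set of instantiations and then takes $k$ to be their maximal depth. (One cosmetic slip: the set of ground atoms over $\tilde{\Sigma}$ is infinite once Skolem functions are present, but compactness applies regardless, so your conclusion stands.)
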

\begin{proof}
From \Cref{herbrandsTheorem} and compactness there is a finite unsatisfiable set $S$ of instantiations. Take $k$ to be the maximal depth of the instantiations in $S$.
\end{proof}

For example, if $I \in \forall^{*}$ then Bounded-Horizon of bound $0$ is complete.
However, as expected due to the undecidability of checking inductiveness (see \Cref{sec:undecidability}), Bounded-Horizon is \emph{not} complete for a given $k$ for arbitrary invariants.
\iflong%
\begin{example}
\begin{figure}[t]%
\label{fig:client-server-db}
%\scriptsize
%\footnotesize
%\small
\begin{tabular}{p{0.45\linewidth} p{0.5\linewidth}}
\begin{math}
   \begin{array}{l}
   \texttt{relation} \ \mReq(u,q)
   \\
   \texttt{relation} \ \mResp(u,p)
   \\
   \texttt{relation} \ \mDBReq(id,p)
   \\
   \texttt{relation} \ \mDBResp(id,p)
   \\
   \texttt{relation} \ \mT(id,u)
   \\
   \\
   \ACTION{new\_request}{\mvU} \\
     \INDENT
     \textbf{local} \; \mvR := * \CSEP \ \COMMENT{new request} \\
     \INDENT
     \ASSUMEM{\forall w,j. \; \neg\mReq(w,\mvR) \land {}\\ \hspace{27mm} \neg\mDBReq(\mvR,j)} \CSEP \\
     \INDENT
      \INSERTM{\mReq}{(\mvU, \mvR)} \\
     \ENDACTION \\
   \ACTION{db\_recv\_request}{\mvID, \mvR} \\
     \INDENT
     \ASSUMEM{\mDBReq(\mvID, \mvR)} \CSEP \\
     \INDENT
     \textbf{local} \; \mvP := * \CSEP \\
     \INDENT
     \ASSUMEM{\mDB(\mvR, \mvP)} \CSEP \\
     \INDENT
     \INSERTM{\mDBResp}{(\mvID, \mvP)} \\
     \ENDACTION \\
   \ACTION{check}{\mvU, \mvP} \\
     \INDENT
     \ASSUMEM{\mResp(\mvU, \mvP)} \CSEP \\
     \INDENT
     \IF \; \forall \mvR. \  \mReq(\mvU, \mvR) \to \neg \mDB(\mvR,\mvP) \\
     \INDENT \INDENT \THEN \; \ABORT \\
     %\INDENT
     %\INSTRCOMMENT
     %\IF \; \neg \mN(\mvU,\mvP) \; \THEN \; \ABORT \\
     \ENDACTION
  \end{array}
 \end{math}
  &
  \begin{math}
  \begin{array}{l}
  \texttt{init} \ \forall u,q. \, \neg\mReq(u,q)
  \\
  \texttt{init} \ \forall u,p. \, \neg\mResp(u,p)
  \\
  \texttt{init} \ \forall id,p. \, \neg\mDBReq(id,p)
  \\
  \texttt{init} \ \forall id,p. \, \neg\mDBResp(id,p)
  \\
  \texttt{init} \ \forall id,u. \, \neg\mT(id,u)
  \\
  \\
     \ACTION{server\_recv\_request}{\mvU, \mvR} \\
     \INDENT
     \ASSUMEM{\mReq(\mvU, \mvR)} \CSEP \\
     \INDENT
     \textbf{local} \, \mvID := * \CSEP \ \COMMENT{new DB request id}  \\
     %\INDENT
     %\mvR' := \mbox{new DB request} \\
     %\INDENT
     %\COMMENT{relate DB request id to user} \\
     \INDENT
     \ASSUMEM{\forall w. \; \neg\mT(\mvID, w)} \CSEP \\
     \INDENT
     \INSERTM{\mT}{(\mvID, \mvU)} \CSEP \\
     \INDENT
     \INSERTM{\mDBReq}{(\mvID, \mvR)} \\
     \ENDACTION
  \\
  \\
  \\
   \ACTION{server\_recv\_db\_response}{\mvID, \mvP} \\
     \INDENT
     \ASSUMEM{\mDBResp(\mvID, \mvP)} \CSEP \\
     %% \INDENT
     %% \mvU := \mbox{user s.t. } \mT(\mvID, \mvU)\\
     %% \INDENT
     %% \INSERTM{\mResp}{(\mvU, \mvP)} \\
     \INDENT
     \INSERTMU{\mResp}{(x, \mvP)}{\mT(\mvID, x)} \\
     \ENDACTION
   \end{array}
 \end{math}
\\
\multicolumn{2}{c}{
 \begin{math}
 \begin{array}{l}
     \INVARIANT
     I =
     \forall \mvU, \mvP. \;
     \mResp(\mvU, \mvP) \to \exists \mvR. \;
     \mReq(\mvU, \mvR) \land \mDB(\mvR,\mvP) \sland \\
     %
     %\hspace{1.55cm}
     \hspace{1cm}
     \forall \mvID, \mvR. \;
     \mDBReq(\mvID, \mvR) \to \exists \mvU. \;
     \mT(\mvID, \mvU) \land \mReq(\mvU,\mvR) \sland \\
     \hspace{1cm}
     \forall \mvID, \mvP. \;
     \mDBResp(\mvID, \mvP) \to \exists \mvR. \;
     \mDBReq(\mvID, \mvR) \land \mDB(\mvR, \mvP) \sland \\
     \hspace{1cm}
     \forall \mvID, \mvU_1, \mvU_2. \;
     \mT(\mvID, \mvU_1) \land \mT(\mvID, \mvU_2) \to \mvU_1 = \mvU_2
 \end{array}
 \end{math}
 }

 \end{tabular}
 \caption{Example demonstrating a $\AE$ invariant that is provable
   only with bound 2.
   \iflong%
   The server anonymizes requests from clients to the database (DB) and forwards the answer to the client.
   The server performs a translation $\mT$ between clients' identity and an anonymous unique id.
   The safety property is that every response sent by the server to a client was
   triggered by a request from the client.
   The inductive invariant further states that every server request to the DB was triggered by a client's request from the server, and that every DB response was triggered by a server's request.
   \fi
%   The server anonymizes requests from clients to the database (DB) and forwards the answer to the client.
%   The server performs a translation $\mT$ between clients' identity and an anonymous unique id.
%   The safety property is that every response sent by the server to a client was
%   triggered by a request from the client.
%   The inductive invariant further states that every server request to the DB was triggered by a client's request from the server, and that every DB response was triggered by a server's request.
%   Proving that the invariant is inductive under the action \texttt{server\_recv\_db\_response} requires the prover to understand that for the response from the DB there is a matching request from the server to the DB, and that for this request to the DB there is a matching request from the client to the server. Every such translation requires another nested instantiation.
   %This example can be lifted to require an even larger depth of instantiations by adding more translation entities similar to the server, and describing the invariant in a similar, modular, way.
   The complete program corresponding to this Figure appears in~\cite{additionalMaterials} (file
   \texttt{client\_server\_db\_ae.ivy}).
   }
\end{figure}

%%% Local Variables:
%%% mode: latex
%%% TeX-master: "main.tex"
%%% End:

%The invariant in the example of \Cref{fig:client-server} is provable by Bounded-Horizon of bound 1.
%As mentioned, this is not always the case.
An example of a program and an inductive invariant %that requires a bound of 2 to prove inductiveness
for which a bound of 0 or 1 is insufficient appears in \Cref{fig:client-server-db}.
In this example the server operates as a middleman between clients and the database (DB), and is used to anonymize user requests before they reach the database.
The server performs a translation $\mvP$ between clients' identity and an anonymous unique id, sends a translated request to the DB, and forwards the DB's response to the clients.
The safety property is that every response sent by the server was
triggered by a request from a client. The inductive invariant states, in addition to the safety property, that every server request to the DB was triggered by a client's request from the server, and that every DB response was triggered by a server's request.
Proving that the invariant is inductive under the action \texttt{server\_recv\_db\_response} requires the prover to understand that for the response from the DB there is a matching request from the server to the DB, and that for this request to the DB there is a matching request from the client to the server. Every such translation requires another level of nesting in the instantiation.
In this example, a bound of 2 manages to prove inductiveness.
This example can be lifted to require an even larger depth of instantiation by adding more translation entities similar to the server, and describing the invariant in a similar, modular, way.
\end{example}
\else
An example for which a bound of 1 is insufficient appears in the extended version~\cite{extendedVersion}.
\fi
%%% Local Variables:
%%% mode: latex
%%% TeX-master: "main.tex"
%%% End:

\begin{comment}
In this example the server operates as a middleman between clients and the database (DB), and is used to anonymize user requests before they reach the database.
The server performs a translation $\mvP$ between clients' identity and an anonymous unique id, sends a translated request to the DB, and forwards the DB's response to the clients.
The safety property is that every response sent by the server was
triggered by a request from a client. The inductive invariant states, in addition to the safety property, that every server request to the DB was triggered by a client's request from the server, and that every DB response was triggered by a server's request.
Proving that the invariant is inductive under the action \texttt{server\_recv\_db\_response} requires the prover to understand that for the response from the DB there is a matching request from the server to the DB, and that for this request to the DB there is a matching request from the client to the server. Every such translation requires another nested instantiation.
This example can be lifted to require an even larger depth of instantiations by adding more translation entities similar to the server, and describing the invariant in a similar, modular, way.
%\TODO{say that there is a bound 1 invariant?}
\end{comment}

\sectionette{Small Bounded-Horizon for $\forall^{*}\exists^{*}$ Invariants}
Despite the incompleteness, we conjecture that  a small depth of instantiations typically suffices to prove inductiveness.
%The idea behind Bounded-Horizon is that typically proofs of inductiveness should not require a large depth of instantiations.
%so hopefully it is reasonable to assume that invariant is indeed not inductive when the method reports that a proof with a small bound does not exist.
%
The intuition is that an EPR transition relation has a very limited ``horizon'' of the domain:
%the transition relation
it interacts only with a small fraction of the domain, namely elements pointed to by program variables (that correspond to logical constants in the vocabulary).

When performing the Bounded-Horizon check with bound 1 on a $\forall^* \exists^*$ invariant $I = \forall \vec{x}. \ \exists \vec{y}. \ \alpha(\vec{x}, \vec{y})$, we essentially assume that the existential part of the invariant $\psi(\vec{x}) = \ \exists \vec{y}. \ \alpha(\vec{x}, \vec{y})$ holds on all program variables --- but not necessarily on all elements of the domain --- and try to prove that it %the invariant
holds on all elements of the domain after the transition.
We expect that for most elements of the domain, the correctness of $\psi$ %the invariant
is maintained simply because they were not modified at all by the transition.
For elements that are modified by the transition,
we expect the correctness after modification to result from
the fact that $\psi$ holds for the elements of the domain that are directly involved in the transition. % to be sufficient to imply correctness after modification.
%If the elements are modified by the transition, we expect the correctness of $\psi$ %the invariant
%to be maintained for the reason that the invariant holds for all elements of the domain that the transition directly interacts with.
If this is indeed the reason that $\psi$ %the invariant
is maintained, a bound of 1 sufficiently utilizes
$\psi$ %the invariant
in the pre-state to prove the invariant in the post-state, i.e.\ to prove that it is inductive.

%For example, this is the case in the invariant of  \Cref{fig:client-server} which is provable by bound 1.
This is the case in \Cref{ex:ae-inv}.
Additional examples are listed in \Cref{sec:implementation}.
\iflong%
The example of \Cref{fig:client-server-db} itself also admits a different invariant that is provable by bound 1.
%\TODO{add the invariant}
\Cref{sec:bounded-horizon-instrumentation} further studies the power of Bounded-Horizon with a low bound.
\fi
% is provable by Bounded-Horizon of bound 1.
%As mentioned, this is not always the case.

%%% Local Variables:
%%% mode: latex
%%% TeX-master: "main.tex"
%%% End:

\section{Power of Bounded-Horizon for Proving Inductiveness}%
\label{sec:bounded-horizon-instrumentation}
%In this section we investigate the ability of Bounded-Horizon to verify inductiveness, both theoretically and practically.
%Namely, we provide sufficient conditions for its success, and describe our initial experience in verifying inductiveness of several examples.
We now turn to investigate the ability of Bounded-Horizon to verify inductiveness.
In this section we provide sufficient conditions for its success by relating it to the notion of instrumentation (which we explain below).
We show that Bounded-Horizon with a low bound of 1 or 2 is as powerful as a natural class of sound program instrumentations, those that do not add existential quantifiers.
\Cref{sec:implementation} demonstrates the method's power on several interesting programs that we verified using Bounded-Horizon of bound 1.

%\subsection{Bounded-Horizon vs.\ Instrumentation}
%In this section we %investigate
%We first demonstrate the power of the Bounded-Horizon approach by comparing it with the approach of program instrumentation.
%In this section we begin to discuss the interplay between quantifier instantiation and program instrumentation.
%We show that Bounded-Horizon with a low bound of 1 or 2 is as powerful as a natural class of sound program instrumentations, those that do not add existential quantifiers.
%This class includes the instrumentations explained in \Cref{sec:instrumentation} (and specifically, \Cref{it:rewrite}).

%\sharon{find a different place}
%Throughout this section, we fix a relational vocabulary $\Sigma$ and an EPR transition relation $\delta \in \exists^{*}\forall^{*}(\Sigma,\Sigma')$. We use $\psi \in \FOL(\Sigma)$
%to denote an instrumentation-defining formula with free variables $x_1, \ldots, x_a$.
%Further, $r$ denotes a fresh relation symbol of arity $a$, and $\Sigmainstr = \Sigma \cup \{r\}$ is the instrumented vocabulary used to define an instrumented transition relation $\deltainstr \in \exists^{*}\forall^{*}(\Sigmainstr, \Sigmainstr')$. Finally,
%%$\instr{I} \in \forall^{*}(\Sigmainstr)$ is an inductive invariant for $\deltainstr$, and $I = \instr{I}[\psi / r]$.
%$\instr{I}$ is an inductive invariant for $\deltainstr$ (over $\Sigmainstr$), and $I = \instr{I}[\psi / r]$.
%We will later further restrict the quantifier structure of $I$ and $\instr{I}$ in different ways.

\subsection{Instrumentation}%
\label{sec:instrumentation}
\sharon{slightly rephrased:}
In this section we present our view of an instrumentation procedure, a form of which was used in previous works~\cite{CAV:IBINS13,KarbyshevBIRS17,pldi/PadonMPSS16},
aiming to eliminate the need for quantifier-alternation, thus reducing the
verification task to a decidable fragment.
Generally speaking, instrumentation begins with a program that induces an EPR transition relation $\delta \in \EA(\Sigma \cup \Sigma')$. The purpose of instrumentation is to modify $\delta$ into another transition relation $\deltainstr$
%that has a  universal inductive invariant, which is decidable to check.
that admits an inductive invariant with simpler quantification (e.g., universal, in which case it is decidable to check) in a sound way.
Instrumentation is generally a manual procedure.
We now describe the instrumentation procedure used in previous works~~\cite{CAV:IBINS13,KarbyshevBIRS17,pldi/PadonMPSS16},
but stress that the results of this paper do not depend on this specific recipe but on the semantic soundness condition below (\Cref{def:sound-instrumentation}).
%For simplicity, we describe this instrumentation process informally, but provide the semantic soundness requirement in \Cref{def:sound-instrumentation}.
This instrumentation procedure is also thoroughly described in a recent work~\cite{paxosEpr}.
%The original program does not have a
%universal inductive invariant. It should have a $\AE$ inductive
%invariant, but this invariant need not be known to the user.  As
%explained in \Cref{sec:overview-instrumentation},

The instrumentation procedure used previously~\cite{CAV:IBINS13,KarbyshevBIRS17,pldi/PadonMPSS16} consists of the following three steps:
\begin{enumerate}
  \item\label{it:identify}
  Identify a formula $\psi(\vecx) \in
    \FOL(\Sigma)$ (usually $\psi$ will be existential) that captures
    information that is needed in the inductive invariant. Extend the
    vocabulary with an \emph{instrumentation relation} $r(\vecx)$ that intentionally should capture the
    derived relation defined by $\psi(\vecx)$. Let
    $\Sigmainstr = \Sigma \cup \{r\}$ denote the extended vocabulary.
    %with the assumption $\exists x. \; c(x)$.
%    When this assumption is added to the invariant the conditions of \Cref{bhinstr:af-small-bound} hold. %are appropriate.
%    \sharon{is it correct to say "hold" here? (used to be "are appropriate"}

  \item\label{it:update}
  Add update code that updates $r$ when the original (``core'')
    relations are modified, and maintains the meaning of $r$ as
    encoding $\psi$. The update code must %be non-blocking (i.e.,
    %cannot block execution of real code),
    not block executions of real code,
    and can possibly be a sound approximation. Sometimes %Often, %the update code
    it can be generated
    automatically via finite differencing~\cite{TOPLAS:RepsSL10}.

  \item\label{it:rewrite}
    Modify the program to use $r$. Often this is performed by rewriting some  program conditions, keeping in mind that $r$ encodes $\psi$. This means replacing some quantified expressions by uses of $r$.

%  \item \label{it:inv} Provide a universal inductive invariant
%    $\instr{I} \in \Univ(\Sigmainstr)$, that uses the
%    instrumentation relation $r$, and is inductive for the modified
%    program. This requirement is expressed in the verification
%    condition $\Ih \land \deltainstr \rightarrow \Ih'$, which is
%    decidable to check.

\end{enumerate}
\TODO{R!: relate to RML programs. Try to give intuition as to when such instrumentation is applicable}

\begin{example}%
\label{example:instr}
In the example of \Cref{fig:client-server}, to achieve a universal invariant
we add an instrumentation relation $\mN$ defined by $\mN(x,y) \equiv \exists z. \ \mReq(x,z) \land \mMatch(z,y)$ (step~\ref{it:identify}).
The simple form of $\psi$ allows us to obtain precise update code, which appears as annotations marked with $\INSTRCOMMENTS$ in lines that mutate $\mReq$ and $\mMatch$ (step~\ref{it:update}).
We also replace the $\IF$ condition in the action
\texttt{check} by an equivalent condition that uses
$\mN$ (step~\ref{it:rewrite}). The line marked with $\INSTRREPLACE$ in the \texttt{check}
action replaces the line above it.
The resulting program has the invariant $\Ih = \forall \mvU, \mvP. \; \mResp(\mvU, \mvP) \rightarrow \mN(\mvU,\mvP)$, which is universal.
\end{example}

 Let $\deltainstr \in \EA(\Sigmainstr \cup \Sigmainstr')$ denote the
transition relation induced by the modified program (modifications occur in steps~\ref{it:update},\ref{it:rewrite}).
The soundness of the instrumentation procedure is formalized in the
following connection between $\psi$, $\delta$, and $\deltainstr$:

\begin{definition}[Sound Instrumentation]%
\label{def:sound-instrumentation}%
\label{remark:bh-sound-instrumentation-substitution}
  $\deltainstr \in \EA(\Sigmainstr \cup \Sigmainstr')$ is a
  \emph{sound instrumentation} for $\delta \in \EA(\Sigma
  \cup \Sigma')$ and $\psi \in \FOL(\Sigma)$ if
  %\yotam{better without this? $\bigl(\forall \vecx. \;
  %r(\vecx) \leftrightarrow \psi(\vecx) \sland \delta \sland \forall \vecx. \;
  %r'(\vecx) \leftrightarrow \psi'(\vecx)\bigr) \rightarrow \deltainstr$ is valid, or equivalently, }
    \begin{equation*}
      \delta \rightarrow \deltainstr[\psi / r, \psi' / r']
    \end{equation*}
  is valid.
\end{definition}

\TODO{R2: sound instrumentation is for $\delta$ and $\psi$. Fix and say that in the sequel we omit it}

\Cref{def:sound-instrumentation} requires that the instrumented program
includes at least all the behaviors of the original program, when $r$ is interpreted according to $\psi$.
Thus, if the instrumented program is safe, then it is sound to infer
that the original program is safe. %(i.e. it never reaches an abort statement).
%Thus,
%if $\Ih$ is inductive for $\deltainstr$, then it is sound to infer
%that the original program is safe (i.e. it never reaches an abort
%statement).
The subtle point in instrumentation as opposed to, e.g., ghost code, is that instrumentation may affect the executions, for example by changing conditions in the code.
Soundness ensures that no executions are omitted.

\begin{example}%
\label{example:sound-intr-client-server}
In the example of \Cref{fig:client-server}, the instrumentation described in \Cref{example:instr} is a sound instrumentation, where the transition relation of the original program $\delta$ and that of the instrumented program $\deltainstr$ are produced from the example's code as in \Cref{sec:prelim:rml:tr}.
To see that  $\deltainstr$  forms a sound instrumentation for $\delta$ and $\psi(x,y) = \exists z. \  \mReq(x,z) \land \mMatch(z,y)$, consider a transition of $\delta$.  This induces a transition of $\deltainstr$ by interpreting the instrumentation relation $r$ according to $\psi$: the update code of $r$ in the instrumented program translates to a restriction in $\deltainstr$ relating $r,r'$, which holds since the code updates $r$ according to its meaning as $\psi$. Furthermore, if the transition of $\delta$ is of the action \texttt{check} and the condition of the $\IF$ statement holds, then $\deltainstr[\psi / r, \psi' / r']$ allows the matching transition---of that same action, with the same action parameters (formally, a valuation of the existential quantifiers in $\deltainstr$ as the valuation of the existentials for the action paramters in $\delta$). This is due to the fact that the rewritten $\IF$ statement in the instrumented program is equivalent to the original when interpreting $r$ as $\psi$.
%\yotam{Sharon, can you review?}
\end{example}

\iflong%
%\yotam{first time this remark appears in print; should it?}
%\yotam{I vote to omit:
\begin{remark}
Note that the definition of sound instrumentation ensures that $r$ is updated in a way that is consistent with its interpretation as $\psi$.
To see this, note that in the expression $\deltainstr[\psi / r, \psi' / r']$ the update code of $r$ in $\deltainstr$ becomes a constraint over the core relations in $\Sigma$. In a sound instrumentation this constraint is required to follow from the way the core relations are updated by $\delta$, essentially implying that the update code is correct.
\end{remark}
%}
\fi

The instrumentation procedure does not require the user to know
%from the user knowledge of
an inductive invariant for the original program.
However, if a sound instrumentation which leads to an % universal
%\sharon{used to be universal. I removed it}
invariant exists,
%then the existence of an invariant for the original program is implied
then an inductive
invariant for the original $\delta$ can be produced by substituting
back the ``meaning'' of $r$ as $\psi$ (thus,  safety of the original program is implied):

%Let $\deltainstr$ be a sound instrumentation for $\delta$ and $\psi$,
%and $\Ih$ be an inductive invariant for $\deltainstr$.  An inductive
%invariant for the original $\delta$ can be produced by substituting
%back the ``meaning'' of $r$ as $\psi$, by taking $I = \Ih[\psi / r]$.

\begin{lemma}%
\label{lem:original-sub-inductive}
Let $\deltainstr$ be a sound instrumentation for $\delta$ and $\psi$,
and $\Ih \in \FOL(\Sigmainstr)$  be an inductive invariant for $\deltainstr$.
Then $I = \Ih[\psi / r]$ is inductive w.r.t.\ $\delta$.
\end{lemma}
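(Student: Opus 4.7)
The goal is to establish that $I \land \delta \rightarrow I'$ is valid, where $I = \Ih[\psi/r]$ and so $I' = \Ih'[\psi'/r']$. The plan is to fix an arbitrary structure $\A$ over $\Sigma \uplus \Sigma'$ with $\A \models I \land \delta$, and to derive $\A \models I'$ by expanding $\A$ into a structure over $\Sigmainstr \uplus \Sigmainstr'$, transferring the argument to the instrumented side, then transferring the conclusion back.

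First, I would introduce the canonical expansion $\hat{\A}$ of $\A$ to $\Sigmainstr \uplus \Sigmainstr'$ defined by interpreting the fresh relation symbol $r$ to agree with $\psi$ in the pre-state of $\A$, and $r'$ to agree with $\psi'$ in the post-state of $\A$, while leaving all symbols of $\Sigma \uplus \Sigma'$ unchanged. The key substitution lemma to invoke (it is the standard semantic fact about substitution of a formula for a relation symbol) is that for any formula $\varphi \in \FOL(\Sigmainstr \uplus \Sigmainstr')$,
\begin{equation*}
  \hat{\A} \models \varphi \quad \text{iff} \quad \A \models \varphi[\psi/r,\;\psi'/r'].
\end{equation*}
Using this, $\A \models \Ih[\psi/r]$ yields $\hat{\A} \models \Ih$ (note $\Ih$ uses only unprimed symbols, so only $[\psi/r]$ matters here), and from the soundness hypothesis $\delta \rightarrow \deltainstr[\psi/r, \psi'/r']$ applied inside $\A$, we get $\A \models \deltainstr[\psi/r, \psi'/r']$, whence $\hat{\A} \models \deltainstr$.

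Second, I would apply the inductiveness of $\Ih$ with respect to $\deltainstr$: since $\hat{\A} \models \Ih \land \deltainstr$, we obtain $\hat{\A} \models \Ih'$. Transferring back via the substitution lemma once more gives $\A \models \Ih'[\psi'/r']$, which is exactly $\A \models I'$. Since $\A$ was arbitrary, $I \land \delta \rightarrow I'$ is valid, as required.

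The main obstacle is being careful with the interaction between priming and substitution, namely that $(\Ih[\psi/r])' = \Ih'[\psi'/r']$, and ensuring the expansion $\hat{\A}$ is well-defined and satisfies the stated substitution lemma for both unprimed and primed symbols simultaneously. All other steps are direct applications of the hypotheses.
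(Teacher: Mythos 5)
Your proof is correct and is essentially the paper's argument: the paper works syntactically (validity of $\Ih \land \deltainstr \rightarrow \Ih'$ is preserved under the substitution $[\psi/r,\psi'/r']$, then chained with the soundness condition $\delta \rightarrow \deltainstr[\psi/r,\psi'/r']$), while you unfold the identical chain semantically via the expansion $\hat{\A}$ and the model-theoretic form of the substitution lemma. The two presentations are interchangeable, and your attention to $(\Ih[\psi/r])' = \Ih'[\psi'/r']$ is the only subtlety either version needs.
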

\begin{proof}
$\Ih \land \deltah \rightarrow \Ih'$ is valid,
thus, so is
$(\Ih \land \deltah \rightarrow \Ih')[\psi/r,\psi'/r']$.
$\deltah$ is a sound instrumentation for $\delta$, so (using \Cref{remark:bh-sound-instrumentation-substitution})
${I} \land {\delta} \rightarrow {I}'$ is valid.
\end{proof}
Note that typically the quantification structure of $I$ is more complex than that of $\Ih$.

\begin{example}
In the example of \Cref{fig:client-server}, the instrumented program has the inductive invariant $\Ih = \forall \mvU, \mvP. \; \mResp(\mvU, \mvP) \rightarrow \mN(\mvU,\mvP)$, which is universally quantified.
Substituting $\psi$ instead of $r$ we get an inductive invariant of the original program, $I = \forall \mvU, \mvP. \; \mResp(\mvU, \mvP) \rightarrow \exists \mvR. \  \mReq(\mvU, \mvR) \land \mMatch(\mvR,\mvP)$. This invariant is $\AE$.
%\yotam{Sharon, can you review?}
\end{example}

\begin{comment}
If $r \in \E(\Sigma)$ and $\Ih \in \Univ(\Sigma)$ then the original invariant has the form $\AE(\Sigma)$, and therefore it is undecidable to check whether it is inductive.
Nevertheless, in \Cref{sec:bounded-horizon-instrumentation} we show that the existence of an instrumentation that transforms $I$ to a corresponding universal $\Ih$ (meaning, $I = \Ih[\psi / r]$) guarantees (under some constraints) that Bounded-Horizon with a low bound is able to prove that $I$ is inductive automatically.
\end{comment}

\begin{remark}
In this paper we focus on instrumentation by a derived relation. It is also possible to consider instrumentations by a \emph{constant}, as used for example in handling the alternation-free invariant of the shared-tail example in~\cite{KarbyshevBIRS17}.
Instrumentation by a constant can be emulated with an instrumentation by a derived relation, conforming to the results of this paper.
This is performed by adding a unary relation $c(x)$ representing the constant $c$, and adding to the invariant a clause stating that $c(x)$ holds for exactly one element. The resultant invariant is alternation-free, and thus \Cref{bhinstr:af-small-bound} can account for the power of bounded instantiations in this case.
%\yotam{Sharon, can you review?}
\end{remark}

%\input{canonical_history}
%%% Local Variables:
%%% mode: latex
%%% TeX-master: "main.tex"
%%% End:

%Roughly speaking, we consider the case where there is a way to soundly instrument $\delta$ to $\deltainstr$ by an instrumentation relation $r$ used to encode a formula $\psi$, in a way that does not add new existential quantifiers to $\deltainstr$ (see \Cref{instrumentationNoNewExists}).

%We begin with the definitions for this section.
%Let $\Sigma$ be a relational vocabulary.
%Let $r$ be a fresh relation symbol of arity $a$, and $\Sigmainstr = \Sigma \cup \{r\}$.
%
%Throughout this section $\psi \in \FOL(\Sigma)$ will be an instrumentation-defining formula
%with free variables $x_1, \ldots, x_a$.
%
%Let $\delta \in \exists^{*}\forall^{*}(\Sigma,\Sigma')$ be an EPR transition system, and let $\deltainstr \in \exists^{*}\forall^{*}(\Sigmainstr,\Sigmainstr')$ be the instrumented EPR transition system.
%
%\TODO{this definition is now in the section about instrumentation and should be removed from here}
%\begin{definition}[Sound Instrumentation]
% \label{def:soundInstrumentation}
% 	$\deltainstr \in \exists^{*}\forall^{*}(\Sigmainstr,\Sigmainstr')$ is a \emph{sound instrumentation} for $\delta \in \exists^{*}\forall^{*}(\Sigma,\Sigma')$ if
%	$\delta \rightarrow \deltainstr[\psi / r,\psi' / r']$ is valid.
%\end{definition}
%
%Throughout this section $\instr{I} \in \forall^{*}(\Sigmainstr)$ is a consecutive invariant \sharon{maybe better to say inductive invariant and explain that all the proofs consider only consecution} for $\deltainstr$, and $I = \instr{I}[\psi / r]$.

\para{Instrumentation without additional existential quantifiers}
In order to relate instrumentation to Bounded-Horizon instantiations, we consider the typical case where the instrumentation process of $\delta$ does not add new existential quantifiers to $\deltainstr$. This happens when %for an instrumentation obtained as explained in \Cref{sec:instrumentation} (and specifically, \Cref{it:rewrite})
the update code does not introduce additional existential quantifiers.
To formally define this notion (\Cref{instrumentationNoNewExists}) we first define \emph{existential naming} to relate the existential quantifiers of two transition relations:
\begin{definition}[Existential Naming]%
  \label{def:existentialNaming}
  Let $\deltainstr = \exists z_1, \ldots z_m. \ \varphi (z_1, \ldots, z_m)$ where $\varphi \in \forall^{*}(\Sigmainstr,\Sigmainstr')$.
  An \emph{existential naming} $\eta$ for $(\deltainstr,\delta)$ is a mapping $\eta: \{z_1, \ldots, z_m\} \to \consts{\sk{\delta}} \cup \consts{{\deltainstr}}$.
  %\consts{\sk{\deltainstr}}$.
  We define $\eta(\deltainstr)$ to be $\varphi[\eta(z_1) / z_1, \ldots, \eta(z_m) / z_m]$.
\end{definition}
An existential naming provides a Skolemization procedure for $\deltainstr$ which uses existing constants rather than fresh ones.
The existential naming fixes existentially quantified variables to constants of ${\deltainstr}$, constants of $\delta$, or existential quantifiers of $\delta$ (manifested as constants from $\sk{\delta}$).
Without further requirements, such a mapping $\eta$ always exists. However, we are interested in mappings such that
%soundness of $\deltainstr$ for $\delta$ can be shown while respecting an existential naming from $\delta$ to $\deltainstr$;
$\eta(\deltainstr)$ is a sound over-approximation of $\delta$, despite fixing the existential quantifiers of $\deltainstr$ according to $\eta$ (note that fixing the quantifiers makes the formula stronger, or, when viewing it as an over-approximation of $\delta$, tighter). Intuitively, this means that $\eta$ fixes the existential quantifiers in a sound way.
%in the sense that soundness (validity of $\delta \rightarrow \deltainstr[\psi / r,\psi' / r']$) can be shown while
%respecting the existential naming for $(\deltainstr,\delta)$
%from $\delta$ to $\deltainstr$;
%The existential naming fixes existential quantifiers to fresh constants (from $\sk{\deltainstr}$), constants of $\delta$, or existential quantifiers of $\delta$, in a sound way. \yotam{Sharon, can you review?}
%This means that the existential naming matches the existential quantifiers of $\deltainstr$ with the vocabulary of $\delta$ in a sound way.
When such a mapping exists, we refer to the corresponding instrumentation as instrumentation without additional existentials:
% is an instrumentation $\deltainstr$ of $\delta$ whose soundness can be shown while respecting an existential naming from $\delta$ to $\deltainstr$; the existential naming matches the existential quantifiers of $\delta$ with the vocabulary of $\deltainstr$ in a sound way:

\begin{definition}[Instrumentation Without Additional Existentials]%
 \label{instrumentationNoNewExists}
 $\deltainstr$ is a \emph{sound instrumentation without additional existentials} for $\delta$ and $\psi$ if there exists an existential naming $\eta$ such that
  \begin{equation*}
    \sk{\delta} \rightarrow\eta(\deltainstr)[\psi / r,\psi' / r']
  \end{equation*} is valid.
\end{definition}

%This means that the %(Skolemized)
\Cref{instrumentationNoNewExists} ensures that the
existential quantifiers in $\deltainstr$ have counterparts in (the Skolemized) $\delta$, which are identified by $\eta$, and suffice to establish soundness of $\deltainstr$.
%\TODO{SHARON: look. R2 complained since it always exists (currently there are no other conditions)} If such 	$\eta$ exists, it maps the (Skolemized) existential quantifiers in $\deltainstr$ to their counterparts in $\delta$.
\begin{example}
The instrumentation in \Cref{fig:client-server} results in $\deltainstr$ whose soundness can be established with an existential naming w.r.t.\ the original $\delta$. The existnential naming is as follows: existential quantifiers in $\deltainstr$ result (per the procedure in \Cref{sec:prelim:rml:tr}) only from action parameters and the havoc statements (see \Cref{fig:rml-sugar}).\footnote{Another potential source of existential quantifiers is existentially quantified {\bassume} statements, which are not present in the instrumented program.}
The existential naming maps these to the \emph{same} (Skolemized) action parameters and variables in the original program. Note that the update code of $r$ uses quantifier-free updates and does not utilize \emph{additional} existential quantifiers. A proof of soundness with the interpretation \sharon{should it be ``above naming'' instead of ``interpretation''?}  of the existential quantifiers of $\deltainstr$ is just as in \Cref{example:sound-intr-client-server}: for every transition of the original program there is a matching transition of the instrumented program when $r$ is interpreted according to $\psi$ \emph{and} the existential quantifiers of the instrumented program are interpreted to match the existential quantifiers of the original program as described here.
Therefore, $\deltainstr$ is a sound instrumentation without additional existentials.
%\yotam{Sharon, can you review?}
%\sharon{tie this back to the proof of soundness in previous example (without naming). explain that same proof works with this naming (perhaps revise explanation in previous example to help identifying the naming already in that proof}
%\yotam{wrote something}
\end{example}
Note that it is possible that $\deltainstr$ has in fact \emph{fewer} existential quantifiers than $\delta$, for example due to the rewriting of conditions (as happens in the example of \Cref{fig:client-server} --- see the \IF\ statement in action \ACTIONNAME{check}).

%An instrumentation without additional existentials is an instrumentation $\deltainstr$ of $\delta$ whose soundness can be shown while respecting an existential naming from $\delta$ to $\deltainstr$; the existential naming matches the existential quantifiers of $\delta$ with the vocabulary of $\deltainstr$ in a sound way:
%%\TODO{Add example (POPL Reviewer 2)}
%%\sharon{referred to the if. Is it good enough?}
%\TODO{SHARON: R2: the last sentence does not match for me what is in the formal definition (Def 4.7.).
%}

%

\subsection{From Instrumentation to Bounded-Horizon}
The results described in this section show that if there is an instrumentation without additional existentials, then Bounded-Horizon with a low bound is able to prove the original invariant, without specific knowledge of the instrumentation and without manual assistance from the programmer.
This is the case in the example of \Cref{fig:client-server}, which admits an instrumentation that transforms the invariant to a universal invariant (see \Cref{example:instr})
%(see \Cref{sec:overview-instrumentation})
in a form that matches \Cref{bounded-horizon-psi-exists-positive}, and indeed the original invariant is provable by Bounded-Horizon of bound 1.

Interestingly, in case Bounded-Horizon with a small bound does not prove inductiveness % of the invariant
\iflong%
%, as in the example of \Cref{fig:client-server-db}\footnote{Strictly speaking this shows that there is no such instrumentation where the instrumentation relation appears only positively in the invariant, which is the most common case. Examples that require an even larger bound (only sketched for simplicity) do not have any instrumentation without additional existential quantifiers that transforms the invariant to a universal form.},
\else
(see the example in the extended version~\cite{extendedVersion}),
\fi
the results imply that either the invariant is not inductive or \emph{no instrumentation} that does not add existential quantifiers can be used to
show that it is inductive (even with the programmer's manual assistance). %Before: (even with manual assistance from the programmer).
\iflong%
This is the case in the example of \Cref{fig:client-server-db}, where a bound of 1 does not suffice.\footnote{Strictly speaking this shows that there is no such instrumentation where the instrumentation relation appears only positively in the invariant, which is the most common case. Examples that require an even larger bound (sketched above) do not admit any instrumentation without additional existential quantifiers that transforms the invariant to a universal form.}
\fi

While we show that instrumentation that does not add existentials is at most as powerful as Bounded-Horizon with a low bound, sound instrumentations that do add existentials to the program (thereby not satisfying \Cref{instrumentationNoNewExists}) can be used to simulate quantifier instantiation of an arbitrary depth. This topic is explored in \Cref{sec:appendix-instrumentation-revisited}.

%\sharon{"throughout..." repeats too much. Also, sometimes it is stated but then the assumptions are re-stated}
In the remainder of this section we will assume that $\deltainstr$ is a sound instrumentation without additional existentials for $\delta$, and $\eta$ is the corresponding naming of existentials.
Further, we assume that $\Ih$ is an inductive invariant for $\deltainstr$ and denote $I = \Ih[\psi / r]$.
\commentout{
Soundness of $\deltainstr$ immediately implies that $I$ %$I = \Ih[\psi / r]$
is an inductive invariant
for $\delta$ (\Cref{lem:original-sub-inductive}).
The additional condition that $\deltah$ does not add new existential quantifiers ensures that this fact is provable by Bounded-Horizon with a low bound.
}

\begin{comment}
\iflonglong
\ultpara{Notation} \sharon{is this a good place? can be shortened or maybe removed?}
Throughout this section, we fix a relational vocabulary $\Sigma$ and an EPR transition relation $\delta \in \exists^{*}\forall^{*}(\Sigma,\Sigma')$. We use $\psi \in \FOL(\Sigma)$
to denote an instrumentation-defining formula with free variables $x_1, \ldots, x_a$.
Further, $r$ denotes a fresh relation symbol of arity $a$, and $\Sigmainstr = \Sigma \cup \{r\}$ is the instrumented vocabulary used to define an instrumented transition relation $\deltainstr \in \exists^{*}\forall^{*}(\Sigmainstr, \Sigmainstr')$. Finally,
%$\instr{I} \in \forall^{*}(\Sigmainstr)$ is an inductive invariant for $\deltainstr$, and $I = \instr{I}[\psi / r]$.
$\instr{I}$ is an inductive invariant for $\deltainstr$ (over $\Sigmainstr$), and $I = \instr{I}[\psi / r]$.
We will later further restrict the quantifier structure of $I$ and $\instr{I}$ in different ways.

\begin{remark}
\label{rem:multi}
The results of this section also apply when multiple instrumentation relations $\psi_1, \ldots, \psi_t \in \FOL(\Sigma)$ are simultaneously substituted instead of the relation symbols $r_1, \ldots, r_t$ in $\deltah$ and $\Ih$.
\end{remark}
\fi
\end{comment}

\iflonglong%
\ultpara{Results.}
We now state the results whose proofs are presented in the rest of this section.
\Cref{bounded-horizon-psi-exists-positive} and \Cref{univ:psi-af-bound-2} consider $I \in \AE(\Sigma)$ that is transformed to $\Ih \in \Univ(\Sigmainstr)$.
In \Cref{bounded-horizon-psi-exists-positive} we show that a bound of 1 suffices to prove that $I$ is inductive for $\delta$ when $\psi \in \exists^{*}(\Sigma)$ (that is, the instrumentation defining formula is existential) and the instrumentation relation $r$ appears only positively in $\Ih$, or when $\psi \in \Univ(\Sigma)$ and $r$ appears only negatively in $\Ih$.
This is an attempt to explain the success of bound 1 instantiations in proving our examples (see \Cref{sec:implementation}).
In \Cref{univ:psi-af-bound-2} we show that a bound of 2 suffices in the more general setting of $\psi \in \AF(\Sigma)$ (with no restriction on appearances of $r$ in $\Ih$).

\Cref{bhinstr:af-small-bound} considers a generalization to $I$ that is 1-alternation and transformed to $\Ih \in \AF(\Sigmainstr)$. We show that a bound of 2 suffices in this case.

\ultpara{Proof idea.}
The rest of the section is devoted to proofs of these claims.
The idea of the proof concentrates around the instantiations necessary to prove inductiveness of the instrumented and original invariants.
To highlight the main points in the formal proof, the crux of the argument is as follows.
\begin{enumerate}
  \item Assume for the sake of contradiction that bounded instantiations of a low bound on the \emph{original} invariant $I$ do not suffice to prove it inductive w.r.t.\ the \emph{original} program $\delta$, and take a counterexample to induction of the instantiated $I$ (see \Cref{alpha-instant} in the proof of \Cref{bounded-when-instr:main-lemma-universal}).

  \item Exploiting properties of substitution, connect instantiations of the original and of the instrumented invariants through the assumption that $\deltah$ is an instrumentation without additional existentials for $\delta$ %Namely, by the assumption that $\deltah$ is an instrumentation without additional existentials for $\delta$, we can utilize properties of substitution
  to obtain a counterexample to induction for the instantiated $\Ih$ w.r.t.\ $\deltah$ (see \Cref{move-to-deltainstr}).

  \item Rely on the assumption that the instrumented invariant $\Ih$ is universal and $\deltah \in \EPR(\Sigmainstr)$. By \Cref{rem:epr-complete-inst}, this means that instantiations of bound 0---namely, with just the constants---suffice to prove $\Ih$ inductive w.r.t.\ the instrumented program. %By the assumption that $\Ih \in \Univ$ and $\deltah \in \exists^{*}\forall^{*}$, we use \Cref{rem:epr-complete-inst} to argue that
  In other words, a counterexample to induction obtained for the \emph{instantiated} $\Ih$ w.r.t.\ $\deltah$ is a \emph{true} counterexample to induction of $\Ih$ w.r.t.\ $\deltah$ (see \Cref{consec-instr}), in contradiction to the premise.
\end{enumerate}

\noindent
In essence, the proof translates an instantiation-based proof of the instrumented invariant to a proof of the original invariant by \emph{the same set of terms} instantiating the universal quantifiers; the set of constants, sufficient for the instrumented invariant, must thus be sufficient also for the original invariant, where this amounts to \Cref{eq:bounded-horizon-approx}, thus constituting a proof by bound 1 instantiations. \sharon{the last sentence is not clear in my opinion. Can we refer to \Cref{eq:bounded-horizon-approx} to explain the ``this amounts'' (or is it too specific to some of the proofs)? how can the set of constants amount to bound 1? it is only because you bring back the exists, right? i.e., forget about skolemization } \yotam{added the reference, me very like}

The formal proofs handle the fine details of relating between the universal quantifiers and the constants of the original and instrumented invariant, to complete the transformation of instantiations between them.
\fi

\begin{remark}%
\label{rem:multi}
The results of this section also apply when multiple instrumentation relations $\psi_1, \ldots, \psi_t \in \FOL(\Sigma)$ are simultaneously substituted for the relation symbols $r_1, \ldots, r_t$ in $\deltah$ and $\Ih$.
\end{remark}

\iflonglong%
\subsection{Power for \texorpdfstring{$\forall^* \exists^*$}{forall*/exists*} Invariants}%
\label{sec:ae-inst-implies-bound}

%In this section w
We now establish that low bounds are sufficient for the Bounded-Horizon check, assuming that a sound instrumentation without additional existentials exists, in the case of $\instr{I} \in \forall^*(\Sigmainstr)$ and $I \in \forall^* \exists^*(\Sigma)$.
To do so, we first prove the following lemma.

\begin{lemma}%
\label{bounded-when-instr:main-lemma-universal}
%\sharon{we already fixed $\psi$ and $\instr{I}$ for this section. why repeat?}
%	Let $\psi \in \FOL(\Sigma)$.
%	Let $\instr{I} \in \forall^{*}(\Sigmainstr)$ be a consecutive invariant for $\deltainstr \in
%        \exists^{*}\forall^{*}(\Sigmainstr)$.  Write $\instr{I} = \forall
%        \vec{x}. \ \instr{\alpha}(\vec{x})$ where $\instr{\alpha} \in
%        \QF(\Sigmainstr)$ and 	let $\alpha = \instr{\alpha}[\psi / r]$.\quad
%	Let $\delta \in \exists^{*}\forall^{*}(\Sigma)$ and let $\deltainstr \in
%        \exists^{*}\forall^{*}(\Sigmainstr,\Sigmainstr')$ be a sound instrumentation of $\delta$
%        without new existentials and with naming $\eta$.
    Let $\deltainstr $ be a sound instrumentation of $\delta,\psi$
        without new existentials and with naming $\eta$, and let $\instr{I} \in \forall^*(\Sigmainstr)$ be an inductive invariant for $\deltainstr$.
    Write $\instr{I} = \forall
        \vec{x}. \ \instr{\alpha}(\vec{x})$ where $\instr{\alpha} \in
        \QF(\Sigmainstr)$ and let $\alpha = \nnf{\instr{\alpha}[\psi / r]}$.
	 Then,
	\begin{equation}
	\bigl(\bigwedge_{\ov{c} \in C^n}{\hspace*{-1ex}\alpha(\ov{c})}\bigr) \sland \sk{\delta} \sland \sk{(\neg I')}
	\end{equation}
	is unsatisfiable,
	where $C = \consts{\sk{\delta} \land \sk{(\neg I')}}$ and $n$ is the number of universal quantifiers in~$\instr{I}$.
\end{lemma}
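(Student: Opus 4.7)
The plan is to prove the lemma by contradiction: I would assume a structure $\A$ satisfies $\bigl(\bigwedge_{\ov{c} \in C^n}\alpha(\ov{c})\bigr) \sland \sk{\delta} \sland \sk{(\neg I')}$ and derive a contradiction with the hypothesis that $\instr{I}$ is inductive for $\deltainstr$. The first move is to lift $\A$ to the instrumented vocabulary: define $\A^*$ over $\Sigmainstr \cup \Sigmainstr'$ by interpreting $r$ as $\psi$ and $r'$ as $\psi'$, and otherwise agreeing with $\A$. This is the canonical construction that realizes at the model level the ``definition'' of $r$ that the instrumentation encodes syntactically.

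The middle of the proof verifies three facts about $\A^*$ that together will contradict inductiveness. First, $\A^* \models \bigwedge_{\ov{c} \in C^n} \instr{\alpha}(\ov{c})$, because $\alpha \equiv \instr{\alpha}[\psi / r]$ (NNF preserves equivalence) and $r$ is interpreted as $\psi$ in $\A^*$. Second, $\A^* \models \deltainstr$: by \Cref{instrumentationNoNewExists} the implication $\sk{\delta} \rightarrow \eta(\deltainstr)[\psi / r, \psi' / r']$ is valid, so $\A \models \eta(\deltainstr)[\psi / r, \psi' / r']$, hence $\A^* \models \eta(\deltainstr)$; and $\eta(\deltainstr) \rightarrow \deltainstr$ since $\eta$ merely pins down the existential prefix of $\deltainstr$. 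Third, $\A^* \models \neg \instr{I}'$: from $\sk{(\neg I')} \rightarrow \neg I'$ we get $\A \models \neg I'$, and since $I' = \instr{I}'[\psi' / r']$ with $\A^*$ interpreting $r'$ as $\psi'$, this lifts to $\A^* \models \neg \instr{I}'$.

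The closing step is an EPR-style Herbrand argument. Because the instrumentation has no additional existentials, $\eta(\deltainstr)$ is a universal sentence, so $\instr{I} \sland \eta(\deltainstr) \sland \sk{(\neg \instr{I}')}$ is a conjunction of universal sentences over a function-symbol-free vocabulary, and it is unsatisfiable since it strengthens the unsatisfiable $\instr{I} \sland \deltainstr \sland \neg \instr{I}'$ (which is exactly the inductiveness hypothesis). By Herbrand's theorem (\Cref{herbrandsTheorem}) this is equivalent to unsatisfiability of the ground instantiation of $\forall \vec{x}. \, \instr{\alpha}(\vec{x})$ on all tuples from $\widehat{C}$, the finite set of constants appearing in $\eta(\deltainstr) \sland \sk{(\neg \instr{I}')}$, jointly with $\eta(\deltainstr) \sland \sk{(\neg \instr{I}')}$. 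Then $\A^*$ from the preceding paragraph witnesses satisfiability of this system, unless $\widehat{C} \not\subseteq C$.

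Establishing $\widehat{C} \subseteq C$ is the main technical obstacle. Constants of $\eta(\deltainstr)$ are either original constants of $\deltainstr$ (which coincide with those of $\delta$, since the instrumentation introduces only the relation $r$) or elements of the image of $\eta$, contained by \Cref{def:existentialNaming} in $\consts{\sk{\delta}} \cup \consts{\deltainstr} \subseteq \consts{\sk{\delta}}$; and the Skolem constants arising in $\sk{(\neg \instr{I}')}$ can be identified with those in $\sk{(\neg I')}$ by renaming, because both Skolemize the same outer existential prefix (they differ only inside, $\psi$ versus $r$). With $\widehat{C} \subseteq C$ in hand, $\A^*$ witnesses satisfiability of the Herbrand-instantiated system, contradicting the unsatisfiability derived from inductiveness and finishing the proof.
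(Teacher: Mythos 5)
Your proposal follows essentially the same route as the paper's proof: lift the putative model of the instantiated formula to the instrumented vocabulary by interpreting $r$ as $\psi$ and $r'$ as $\psi'$, use \Cref{instrumentationNoNewExists} to pass from $\sk{\delta}$ to $\eta(\deltainstr)$, and then exploit completeness of constant instantiation for universal relational sentences (equivalently, restriction to the substructure named by the constants) to contradict the inductiveness of $\instr{I}$ w.r.t.\ $\deltainstr$. The three facts you verify about $\A^*$ and the closing argument all have direct counterparts in the paper's proof.

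The one place where your argument falls short is the containment $\instr{C} \subseteq C$, which you yourself flag as the main obstacle. Your justification rests on $\consts{\deltainstr} \subseteq \consts{\sk{\delta}}$, argued from ``the instrumentation introduces only the relation $r$.'' That observation shows only that the constant \emph{symbols available} to $\deltainstr$ are those of $\Sigma$; it does not show that every constant \emph{occurring} in $\deltainstr$ --- or in the image of $\eta$, which \Cref{def:existentialNaming} explicitly allows to range over $\consts{\deltainstr}$ --- also occurs in $\sk{\delta} \land \sk{(\neg I')}$. Nothing in \Cref{def:sound-instrumentation} or \Cref{instrumentationNoNewExists} forbids the instrumented program from mentioning constants of $\Sigma$ that $\delta$ does not mention, so in the stated generality the inclusion can fail, and with it your appeal to $\A^*$ as a witness for the instantiated system over $\instr{C}$. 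The paper sidesteps the containment claim entirely: it replaces $\A_0$ by a structure $\A_1$ that reinterprets every constant in $\instr{C} \setminus C$ as the interpretation of some constant already in $C$. This is harmless, since those constants do not occur in $\sk{\delta} \land \sk{(\neg I')}$ or in the conjuncts $\alpha(\ov{c})$ for $\ov{c} \in C^n$, and it guarantees that $\alpha(\ov{c})$ holds for all tuples over $C \cup \instr{C}$, after which the substructure restriction goes through. Your identification of the Skolem constants of $\sk{(\neg \instr{I}')}$ with those of $\sk{(\neg I')}$ is fine, since both Skolemize the same outer existential prefix; it is only the $\deltainstr$-side constants that need the reinterpretation trick. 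With that repair your proof coincides with the paper's.
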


\TODO{SHARON\@: R2\@: also for $\psi$?}

\begin{proof}
%\neil{I prefer structures to be $\A$ rather than $\sigma$ because that is more standard
%    and we are using small Greek letters for formulas.}
	Assume not, so there exists a structure ${\A}_0$ satisfying \Cref{eq:bounded-horizon-approx}, namely
	\begin{equation}
	\label{alpha-instant}
		{\A}_0 \models \bigl(\bigwedge_{\vec{c} \in C^n}{\hspace*{-1ex}\alpha(\vec{c})}\bigr) \sland
                \sk{\delta} \sland \sk{(\neg I')}\; .
	\end{equation}
	We will show that $\instr{I}$  is not inductive for $\deltah$.
	Let $\instr{C} = \consts{\eta(\deltah) \land \sk{(\neg \instr{I}')}}$. Then,
	\begin{equation}\label{alpha3-instant}
		{\A}_1 \models\;
			\bigl(\hspace*{-2ex}\bigwedge_{\vec{c} \in {(C \cup \instr{C})}^n}{\hspace*{-2ex}\alpha(\vec{c})}\bigr)
				 \sland \sk{\delta} \sland \sk{(\neg I')}
	\end{equation}
	where ${\A}_1$ is the same as ${\A}_0$ but also interprets any constant in $\instr{C}
        \setminus C$ as the interpretation of some  	arbitrary constant in $C$.
        %Thus Equations \ref{alpha-instant} and
        %\ref{alpha3-instant} are equivalent.
        Thus $\alpha(\vec{c})$ holds in ${\A}_1$ for the new constants as well.

	Removing some conjuncts from \Cref{alpha3-instant}, we get,
	\begin{equation}
		{\A}_1 \models
			\bigl(\bigwedge_{\vec{c} \in \instr{C}^n}{\hspace*{-1ex}\alpha(\vec{c})}\bigr)
				 \sland \sk{\delta} \sland \sk{(\neg I')}\; .
	\end{equation}

	By assumption (\Cref{instrumentationNoNewExists}), it follows that,
	\begin{equation}
	\label{move-to-deltainstr}
		{\A}_1 \models
			\bigl(\bigwedge_{\vec{c} \in \instr{C}^n}{\alpha(\vec{c})}\bigr)
				 \sland \eta(\deltah)[\psi / r,\psi' / r'] \sland \sk{(\neg I')}.
	\end{equation}

Recall that $I' = \Ih'[\psi'/r']$. Since $\A_1 \models \sk{(\neg \Ih'[\psi'/r'])}$, it follows that
$\A_1\models \sk{(\neg \Ih')}[\psi'/r']$.  In the latter formula, some existentially quantified variables
from $\psi$ or
$\lnot \psi$ may remain, whereas in the former formula they were replaced by Skolem constants. Thus
this is just a corollary of the fact that $\sk{\gamma} \rightarrow \gamma$ is valid for any $\gamma$.

Thus we have shown (recalling that $\alpha = \nnf{\instr{\alpha}[\psi / r]}$ and $\instr{\alpha}[\psi / r]$ are equivalent),
	\begin{equation}
		{\A}_1 \models \Bigl(\bigl(\bigwedge_{\vec{c} \in \instr{C}^n}{\instr{\alpha}(\vec{c})}\bigr)
			 \sland \eta(\deltah) \sland \sk{(\neg \instr{I}')}\Bigr)[\psi / r,\psi' / r'].
	\end{equation}
Now, consider the structure $\instr{{\A}}$ that expands ${\A}_1$ by interpreting $r$ and $r'$ the way that ${\A}_1$ interprets $\psi$ and $\psi'$, respectively. Then,
%Equivalently, we know that
	\begin{equation}
	\label{instantiations-instrumentation}
		\instr{{\A}} \models
			\bigl(\bigwedge_{\vec{c} \in \instr{C}^n}{\instr{\alpha}(\vec{c})}\bigr)
				 \sland \eta(\deltah) \sland \sk{(\neg \instr{I}')}.
	\end{equation}
%	where $\instr{{\A}}$ is the expansion of $\A$ that interprets $r$ and $r'$ the way that ${\A}_1$
%        interprets $\psi$ and $\psi'$, respectively.

	The formula in \Cref{instantiations-instrumentation} is a universal sentence: $\instr{\alpha}(\vec{c})$ is quantifier free and closed, $\eta(\deltah) \in \Univ$ from the definition of existential naming, and $\neg \instr{I}' \in \exists^{*}$ and thus its Skolemization introduces only constants.
  It follows that the formula in \Cref{instantiations-instrumentation} is also satisfied by $\reducemodel{\instr{{\A}}}{\instr{C}}$, the substructure of $\instr{{\A}}$ with universe $\instr{C}^{\instr{\A}}$, i.e., $\instr{\A}$'s interpretation of the constant symbols $\instr{C}$; recall that $\instr{C} = \consts{\eta(\deltah) \land \sk{(\neg \instr{I}')}}$ so indeed $\instr{{\A}}$ provides an interpretation to every constant in the formula.
  Thus,
	\begin{equation}
	\label{move-to-universal-quantifiers}
		\reducemodel{\instr{{\A}}}{\instr{C}} \models
			\bigl(\forall \vec{x}. \ \instr{\alpha}(\vec{x})\bigr) \sland \eta(\deltah) \sland \sk{(\neg \instr{I}')}.
	\end{equation}
%	where $\reducemodel{\instr{{\A}}}{\instr{C}}$ is obtained from $\instr{{\A}}$ by reducing the domain to elements pointed to by the constants in $\instr{C}$.
%	Recall that $\instr{C}$ was defined as $\instr{C} = \consts{\eta(\deltah) \land \sk{(\neg \instr{I}')}}$.

	Finally, since $\sk{\gamma} \rightarrow \gamma$ is valid and so is $\eta(\deltah) \rightarrow \deltah$ (for the same reasons), we know,
	\begin{equation}
	\label{consec-instr}
	\reducemodel{\instr{{\A}}}{\instr{C}} \models
		\instr{I} \sland \deltah \sland \neg \instr{I}'.
	\end{equation}
But this contradicts the fact that $\widehat{I}$ is inductive for $\deltah$.
\end{proof}

The following results are corollaries of \Cref{bounded-when-instr:main-lemma-universal}.
\else
The following theorems state our results for $I \in \AE$.

\begin{theorem}%
\label{bounded-horizon-psi-exists-positive}
	Let $\Ih \in \forall^*(\Sigmainstr)$ be an inductive invariant for $\deltainstr$, which is a sound instrumentation for $\delta$ without additional existentials.
  Assume $\psi \in \exists^{*}$ and $r$ appears only positively in $\instr{I}$, or $\psi \in \forall^{*}$ and $r$ appears only negatively in $\instr{I}$. Then $I = \instr{I}[\psi / r]$ is inductive for $\delta$ with Bounded-Horizon of bound $1$. (Note that $I \in \forall^* \exists^*$.) %in this case.)
\end{theorem}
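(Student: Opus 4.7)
My plan is to apply Lemma~\ref{bounded-when-instr:main-lemma-universal} and then verify that the instantiation pattern it produces is contained in the Bounded-Horizon check of bound~$1$.

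First I argue that $I = \Ih[\psi / r]$ is in $\forall^* \exists^*(\Sigma)$ under either hypothesis. Write $\Ih = \forall \vec{x}. \ \instr{\alpha}(\vec{x})$ with $\instr{\alpha} \in \QF(\Sigmainstr)$. In case~(a) the positive occurrences of $r$ in $\instr{\alpha}$ are replaced by the existential formula $\psi$; since these occurrences sit under only positive Boolean connectives, the existentials of $\psi$ can be pulled outward. In case~(b), after conversion to NNF the literal $\neg r$ appears only positively and is replaced by $\neg \psi \in \exists^*$ (as $\psi \in \forall^*$), and the same argument applies. In both cases $I$ is equivalent to a formula of the form $\forall \vec{x}. \ \exists \vec{y}. \ \phi(\vec{x}, \vec{y})$ for some $\phi \in \QF(\Sigma)$, so its Skolemization is $\sk{I} = \forall \vec{x}. \ \phi(\vec{x}, \vec{f}(\vec{x}))$ for a fresh tuple $\vec{f}$ of Skolem functions of arity $n$, where $n$ is the number of leading universals of $\Ih$.

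Next, I invoke Lemma~\ref{bounded-when-instr:main-lemma-universal} with $\alpha(\vec{x}) = \nnf{\instr{\alpha}[\psi / r]}$ to obtain that $(\bigwedge_{\vec{c} \in C^n} \alpha(\vec{c})) \sland \sk{\delta} \sland \sk{(\neg I')}$ is unsatisfiable, where $C = \consts{\sk{\delta} \land \sk{(\neg I')}}$. By the analysis above, $\instantiate{\sk{I}}{\vec{c}} = \phi(\vec{c}, \vec{f}(\vec{c}))$ supplies a Skolem witness for the existential in $\alpha(\vec{c}) \equiv \exists \vec{y}. \ \phi(\vec{c}, \vec{y})$; hence $\instantiate{\sk{I}}{\vec{c}} \sra \alpha(\vec{c})$ is valid. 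Strengthening the lemma's conjunction accordingly, $(\bigwedge_{\vec{c} \in C^n} \instantiate{\sk{I}}{\vec{c}}) \sland \sk{\delta} \sland \sk{(\neg I')}$ is also unsatisfiable.

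Finally, I verify that this conjunction is a subset of the Bounded-Horizon bound-$1$ check for $I$ and $\delta$. The instantiations $\instantiate{\sk{I}}{\vec{c}}$ with $\vec{c} \in C \subseteq \bhterms{0}$ introduce Skolem terms $\vec{f}(\vec{c})$ of depth exactly~$1$, so all ground terms that appear lie in $\bhterms{1}$, meeting the bound-$1$ depth constraint. Since $\sk{\delta}$ and $\sk{(\neg I')}$ contain no Skolem functions (both $\delta$ and $\neg I'$ have an $\exists^* \forall^*$ prenex and hence Skolemize with fresh constants only), any of their ground instantiations at constants from $C$ yield terms in $\bhterms{0} \subseteq \bhterms{1}$. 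Consequently the Bounded-Horizon bound-$1$ check includes all conjuncts of the unsatisfiable set above, and is thus itself unsatisfiable, establishing that $I$ is inductive w.r.t.\ $\delta$ with bound~$1$. The only nontrivial step is the Skolem implication $\instantiate{\sk{I}}{\vec{c}} \sra \alpha(\vec{c})$ together with the observation that a single level of Skolem nesting suffices, which pins the required depth to exactly~$1$; the rest is bookkeeping.
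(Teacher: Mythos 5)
Your setup is sound up to the last step: the observation that $I \in \forall^*\exists^*(\Sigma)$, the invocation of Lemma~\ref{bounded-when-instr:main-lemma-universal}, the validity of $\instantiate{\sk{I}}{\ov{c}} \rightarrow \alpha(\ov{c})$ (an instance of $\sk{\gamma}\rightarrow\gamma$), and the resulting unsatisfiability of the strengthened conjunction are all correct. The gap is in the final containment claim. The set you prove unsatisfiable is
\[
\bigl(\bigwedge_{\ov{c} \in C^n}\instantiate{\sk{I}}{\ov{c}}\bigr) \sland \sk{\delta} \sland \sk{(\neg I')},
\]
in which $\sk{\delta}$ and $\sk{(\neg I')}$ occur as full universally quantified sentences. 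The Bounded-Horizon check of bound $1$ (see \Cref{eq:bounded-horizon-approx}) contains only their \emph{ground instantiations} over $\bhterms{1}$, which are strictly weaker formulas. So the bound-$1$ check is \emph{not} a superset of your unsatisfiable set, and the inference ``a superset of an unsatisfiable set is unsatisfiable'' does not apply: a model of the bound-$1$ check could a priori contain domain elements not named by any $\bhterms{1}$ term on which $\sk{\delta}$ or $\sk{(\neg I')}$ fails, and hence need not satisfy your set.

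Closing the gap requires the substructure argument the paper uses (and which is why its proof runs by contradiction rather than in your direct direction): take a model $\A$ of the bound-$1$ instantiations and restrict it to the elements denoted by $\bhterms{1}$ terms. In $\reducemodel{\A}{\bhterms{1}}$ every element is named by a term at which $\sk{\delta}$ and $\sk{(\neg I')}$ have been instantiated, so the full universal sentences hold there; the conjuncts $\instantiate{\sk{I}}{\ov{c}}$ survive because all their terms lie in $\bhterms{1}$, and reading the Skolem values $\vec{f}(\ov{c})$ back as existential witnesses recovers $\alpha(\ov{c})$. Only then does Lemma~\ref{bounded-when-instr:main-lemma-universal} yield the contradiction. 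This is the complete-instantiation step of \Cref{rem:epr-complete-inst} lifted to depth $1$; it is the one missing idea, and without it your final sentence does not follow.
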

%Note that the requirements of \Cref{bounded-horizon-psi-exists-positive} imply that $I \in \forall^* \exists^*$.
{
\renewcommand{\proofname}{Proof Sketch}
\begin{proof}
	Let $I = \forall \vec{x}. \ \alpha(\vec{x})$ where $\alpha \in \exists^*$.
	Assume for the sake of contradiction that $I$ is not inductive for $\delta$ with Bounded-Horizon of bound 1.
	By the assumptions on $\psi$ and $\instr{I}$, this means that there is a structure $\A$ such that
	\[
\textstyle
		\A \models \bigl(\bigwedge_{\vec{c}}{\alpha(\vec{c})}\bigr) \sland
		                \sk{\delta} \sland \sk{(\neg I')}\; .
	\]
	From the assumption (\Cref{instrumentationNoNewExists}) and properties of Skolemization, it follows that
	\[
\textstyle
		\A \models
			\bigl(\bigwedge_{\vec{c}}{\alpha(\vec{c})}\bigr)
				 \sland \bigl(\eta(\deltah)\bigr)[\psi / r,\psi' / r'] \sland \bigl(\sk{(\neg \instr{I}')}\bigr)[\psi/r, \psi'/r']\; .
	\]
	From the assumptions on the way $\psi$ appears in $\instr{I}$, when we write $\instr{I} = \forall \vec{x}. \ \instr{\alpha} (\vec{x})$ where $\instr{\alpha} \in \QF$ we have $\alpha = \instr{\alpha}[\psi/r]$.
	Thus, from properties of substitution (interpreting $r,r'$ according to $\psi,\psi'$ in $\A$) it follows that there is a structure $\instr{A}$ such that
	\[
\textstyle
		\instr{{\A}} \models
			\bigl(\bigwedge_{\vec{c}}{\instr{\alpha}(\vec{c})}\bigr)
				 \sland \eta(\deltah) \sland \sk{(\neg \instr{I}')}.
	\]
	By reducing $\instr{\A}$'s domain to the constants we have that
    $
		\bigl(\forall \vec{x}. \ \instr{\alpha}(\vec{x})\bigr) \land \eta(\deltah) \land \sk{(\neg \instr{I}')}
	$
	is satisfiable. (This is a use of complete instantiation for universal formulas.)

	This in turn implies (by properties of Skolemization) that
	$\instr{I} \land \deltainstr \land \neg\instr{I}'$ is satisfiable, which is a contradiction to the assumption that $\Ih$ is inductive for $\deltainstr$.
\end{proof}
}
\fi

\iflonglong%
\begin{theorem}\label{bounded-horizon-psi-exists-positive}
	Let $\Ih \in \forall^*(\Sigmainstr)$ be an inductive invariant for $\deltainstr$, which is a sound instrumentation for $\delta,\psi$ without additional existentials.
  Assume $\psi \in \exists^{*}(\Sigma)$ and $r$ appears only positively in $\instr{I}$, or $\psi \in \forall^{*}(\Sigma)$ and $r$ appears only negatively in $\instr{I}$. Then $I = \instr{I}[\psi / r]$ is inductive for $\delta$ with Bounded-Horizon of bound $1$. (Note that $I \in \forall^* \exists^*(\Sigma)$.) %in this case.)
\end{theorem}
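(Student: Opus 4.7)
The plan is to reduce inductiveness of $I$ under bound $1$ to inductiveness of $\Ih$ under bound $0$ (which holds trivially since $\Ih$ is universal and $\deltah\in\EPR$). Write $\Ih = \forall \vec{x}. \ \Ih\alpha(\vec{x})$ with $\Ih\alpha \in \QF(\Sigmainstr)$, and set $\alpha = \nnf{\Ih\alpha[\psi / r]}$. The sign hypotheses on $r$ combined with the quantifier structure of $\psi$ (either $\psi \in \exists^*$ with $r$ positive, or $\psi \in \forall^*$ with $r$ negative) guarantee that $\alpha \in \exists^*(\Sigma)$, so $I$ is equivalent to $\forall \vec{x}.\, \alpha(\vec{x})$, a $\forall^*\exists^*$ formula. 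Crucially, $\Ih\alpha[\psi/r]$ and $\alpha$ are logically equivalent, so any structure that interprets $r$ by $\psi$ satisfies $\alpha(\vec{c})$ iff it satisfies $\Ih\alpha(\vec{c})$.

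I would then assume for contradiction that Bounded-Horizon of bound $1$ fails for $I$ w.r.t.\ $\delta$, which (using \Cref{eq:bounded-horizon-approx}) yields a structure $\A$ with
\[
\A \models \bigl(\bigwedge_{\vec{c} \in C^n} \alpha(\vec{c})\bigr) \sland \sk{\delta} \sland \sk{(\neg I')},
\]
where $C = \consts{\sk{\delta} \land \sk{(\neg I')}}$ and $n$ is the number of universal quantifiers of $\Ih$ (bound $1$ is what allows the Skolem functions introduced from the existentials inside $\alpha$ to be applied to constants). Enlarging $\A$ to interpret also the constants of $\eta(\deltah) \land \sk{(\neg\Ih')}$ (call this set $\Ih C$) arbitrarily, and dropping some conjuncts, I obtain satisfaction of $\bigl(\bigwedge_{\vec{c}\in \Ih C^n} \alpha(\vec{c})\bigr) \sland \sk{\delta} \sland \sk{(\neg I')}$.

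Next, I apply the hypothesis that $\deltah$ is a sound instrumentation without additional existentials: $\sk{\delta} \to \eta(\deltah)[\psi/r,\psi'/r']$. Using $\sk{(\neg I')} = \sk{(\neg \Ih'[\psi'/r'])}$ and standard properties of Skolemization ($\sk{\gamma} \to \gamma$), together with the equivalence $\alpha \equiv \Ih\alpha[\psi/r]$ established above, I rewrite the satisfied formula as
\[
\Bigl(\bigl(\bigwedge_{\vec{c}\in \Ih C^n} \Ih\alpha(\vec{c})\bigr) \sland \eta(\deltah) \sland \sk{(\neg \Ih')}\Bigr)[\psi/r,\psi'/r'].
\]
Expanding $\A$ by interpreting $r$ (resp.\ $r'$) as $\psi$ (resp.\ $\psi'$) in $\A$ gives a structure $\Ih\A$ satisfying the formula without the substitution. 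This formula is purely universal ($\eta(\deltah) \in \forall^*$ by the definition of existential naming, $\sk{(\neg \Ih')}$ is quantifier-free up to constants since $\Ih \in \forall^*$, and the conjunction of $\Ih\alpha(\vec{c})$'s is quantifier-free), so restricting to the substructure on the interpretations of $\Ih C$ preserves satisfaction. This yields a model of $\Ih \land \eta(\deltah) \land \sk{(\neg \Ih')}$, contradicting the fact that $\Ih$ is inductive for $\deltah$.

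The main obstacle is step~2: justifying that the bound-$1$ instantiations of $\sk{I}$ correspond exactly to instantiating the $\forall$-variables of $I$ by constants while leaving the $\exists$-part unchanged (in particular, that the Skolem terms $\vec{f}(\vec{c})$ introduced by Skolemizing $I$ are all in $\bhterms{1}$), and then carrying this through the substitution $\psi/r$ so that the resulting instantiation matches $\Ih\alpha(\vec{c})[\psi/r]$ up to NNF. The sign/quantifier conditions on $r,\psi$ are exactly what is needed to ensure that the NNF normalization does not destroy this correspondence, and without them the substitution could push the existentials inside $\psi$ under additional universals, breaking the argument.
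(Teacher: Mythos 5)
Your proof takes essentially the same route as the paper's: the paper factors the argument into \Cref{bounded-when-instr:main-lemma-universal} (which transfers a model of $\bigl(\bigwedge_{\ov{c}\in C^n}\alpha(\ov{c})\bigr)\sland\sk{\delta}\sland\sk{(\neg I')}$ into a counterexample to induction for $\Ih$ w.r.t.\ $\deltah$, via the existential naming, the substitution $[\psi/r,\psi'/r']$, and restriction to the substructure over the constants) plus a short argument that a failed bound-$1$ check yields such a model; you have simply inlined the lemma. The one step that is genuinely loose as written is the very first one: the failure of Bounded-Horizon of bound $1$ gives a structure satisfying only the depth-$\le 1$ \emph{instantiations} of $\sk{\delta}$ and $\sk{(\neg I')}$, not those universal sentences themselves, yet you immediately apply the valid implication $\sk{\delta}\to\eta(\deltah)[\psi/r,\psi'/r']$ to $\A$, which requires $\A\models\sk{\delta}$. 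The paper closes this by passing to $\reducemodel{\A}{\bhterms{1}}$, the substructure on elements named by $\bhterms{1}$ terms, \emph{before} doing anything else; in that substructure the bounded instantiations do entail the full universal sentences, and the Skolem function applications of $\sk{I}$ (all of which lie in $\bhterms{1}$) can then be replaced by existential witnesses. Your final restriction to the constants $\instr{C}$ would also repair this, but it comes after you have already used the implication. This is an ordering slip rather than a missing idea --- you flag the corresponding subtlety for $\sk{I}$ in your last paragraph --- but the same care is needed for $\sk{\delta}$ and $\sk{(\neg I')}$.
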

\begin{proof}
	Let $\instr{I} = \forall \vec{x}. \ \instr{\alpha} (\vec{x})$ where $\instr{\alpha} \in \QF$.
	In both cases of the claim ${\alpha} = \nnf{\instr{\alpha}[\psi / r]} \in \exists^{*}$, and so all the universal quantifiers in $I$ (more accurately, in $\nnf{I}$) are those of $\instr{I}$. This implies that the satisfiability check of \Cref{bounded-when-instr:main-lemma-universal} is simply the Bounded-Horizon satisfiability check with bound $1$, and it shows that the result must be unsatisfiable.

	More formally, assume for the sake of contradiction that $I$ is not inductive w.r.t.\ $\delta$ with Bounded-Horizon of bound $1$.
	Let $\alpha(\vec{x}) = \exists y_1, \ldots, y_m. \ \theta(\vec{x}, y_1, \ldots, y_m)$ where $\theta \in \QF$, and let
	\[
	\sk{\alpha}(\vec{x}) = \theta(\vec{x}, f_1(\vec{x}), \ldots, f_m(\vec{x}))
	\]
	be its Skolemization with fresh Skolem function symbols $f_1, \ldots, f_m$ (introduced for $y_1, \ldots, y_m$, respectively).
	Then there is a structure ${\A}$ satisfying \Cref{eq:bounded-horizon-approx}, which reads
	\begin{equation}
  \label{eq:thm-bnd-1-inst}
 				\bigl(\bigwedge_{\ov{t} \in \bhterms{0}}{\instantiate{\sk{\alpha}}{\ov{t}}}\bigr)
		\sland 	\bigl(\bigwedge_{\ov{t} \in \bhterms{1}}{\instantiate{\sk{\delta}}{\ov{t}}}\bigr)
		\sland 	\bigl(\bigwedge_{\ov{t} \in \bhterms{1}}{\instantiate{\sk{\bigl(\neg I'\bigr)}}{\ov{t}}}\bigr).
	\end{equation}
	Since $\sk{\alpha}$ has no universal quantifiers, the instantiation is just a substitution of the free variables, and ${\A}$ satisfies
	\begin{equation}
 				\bigl(\bigwedge_{\ov{t} \in \bhterms{0}}{\sk{\alpha}(\ov{t})}\bigr)
		\sland 	\bigl(\bigwedge_{\ov{t} \in \bhterms{1}}{\instantiate{\sk{\delta}}{\ov{t}}}\bigr)
		\sland 	\bigl(\bigwedge_{\ov{t} \in \bhterms{1}}{\instantiate{\sk{\bigl(\neg I'\bigr)}}{\ov{t}}}\bigr).
	\end{equation}
    By reducing ${\A}$ to the elements pointed to by $\bhterms{1}$ terms we have that
	\begin{equation}
  \label{eq:bh-univ-reduce-with-funcs}
		\reducemodel{{\A}}{\bhterms{1}} \models
			\bigl(\bigwedge_{\ov{t} \in \bhterms{0}}{\sk{\alpha}(\ov{t})}\bigr) \sland \sk{\delta} \sland \sk{\bigl(\neg I'\bigr)}
	\end{equation}
%	\TODO{explain that this is not a total structure --- POPL Reviewer 2}
  Note that in $\reducemodel{{\A}}{\bhterms{1}}$ the interpretations of the Skolem functions are possibly partial functions.
  The functions appear in the formula of \Cref{eq:bh-univ-reduce-with-funcs} as closed terms, and applied on $\bhterms{0}$, and these cases the interpretations of the functions are defined.
  (In particular, they can be extended to total functions in an arbitrary way, and the resulting structure still satisfies \Cref{eq:bh-univ-reduce-with-funcs}.)

    We now move from the Skolem functions back to existential quantifiers.
	By the valuation that to every existentially quantified variable $y_i$ in $\alpha$ assigns the interpretation of $f_i(\ov{t})$ in $\reducemodel{{\A}}{\bhterms{1}}$ (recall that $f_i(\ov{t})$ appears in $\sk{\alpha}$ instead of the quantifier $\exists y_i$ in $\alpha$), we know that
    \begin{equation} \label{eq:bh-proof}
		\reducemodel{{\A}}{\bhterms{1}} \models
			\bigl(\bigwedge_{\ov{t} \in \bhterms{0}}{{\alpha}(\ov{t})}\bigr) \sland \sk{\delta} \sland \sk{\bigl(\neg I'\bigr)}.
	\end{equation}
    The set of constants referred to by $\bhterms{0}$ is the set of constants in \Cref{eq:thm-bnd-1-inst}, which is $\consts{\sk{\delta} \land \sk{\bigl(\neg I'\bigr)}}$. \TODO{is this right?}
    Therefore, \Cref{eq:bh-proof} can be rewritten as
	\begin{equation}
		\reducemodel{{\A}}{\bhterms{1}} \models
			\bigl(\bigwedge_{\ov{c} \in C^n}{\alpha(\ov{c})}\bigr) \sland \sk{\delta} \sland \sk{\bigl(\neg I'\bigr)}
	\end{equation}
	where $C^n = \consts{\sk{\delta} \land \sk{\bigl(\neg I'\bigr)}}^n$ and $n$
    is the number of universal quantifiers in $I$ (and $\instr{I}$).

%We now move from the Skolem functions back to existential quantifiers.
%	By the valuation that to every existentially quantified variable $y_i$ in $\alpha$ assigns the interpretation of $f_i(\ov{t})$ in ${\A}$ (recall that $f_i(\ov{t})$ appears in $\sk{\alpha}$ instead of the quantifier $\exists y_i$ in $\alpha$), ${\A}$ also satisfies
%	\begin{equation}
%				\bigl(\bigwedge_{\ov{t} \in \bhterms{0}}{\alpha(\ov{t})}\bigr)
%		\sland 	\bigl(\bigwedge_{\ov{t} \in \bhterms{1}}{\instantiate{\sk{\delta}}{\ov{t}}}\bigr)
%		\sland 	\bigl(\bigwedge_{\ov{t} \in \bhterms{1}}{\instantiate{\sk{\bigl(\neg I'\bigr)}}{\ov{t}}}\bigr)
%	\end{equation}
%	By reducing ${\A}$ to the elements pointed to by $\bhterms{1}$ terms we have that
%	\begin{equation}
%		\reducemodel{{\A}}{\bhterms{1}} \models
%			\bigl(\bigwedge_{\ov{c} \in C^n}{\alpha(\ov{c})}\bigr) \sland \sk{\delta} \sland \sk{\bigl(\neg I'\bigr)}
%	\end{equation}
%	with $C^n = \bhterms{0} = \consts{\sk{\delta} \land \sk{\bigl(\neg I'\bigr)}}^n$ and $n$ is the number of universal quantifiers in $I$ (and $\instr{I}$).

	By \Cref{bounded-when-instr:main-lemma-universal} this is a contradiction to the assumption that $\instr{I}$ is inductive w.r.t.\ $\deltah$, and the claim follows.
\end{proof}
\fi

\begin{theorem}\label{univ:psi-af-bound-2}
	Let $\Ih \in \forall^{*}$. If $\psi \in \AF(\Sigma)$ then $I = \instr{I}[\psi / r]$ is inductive for $\delta$ with Bounded-Horizon of bound $2$. (Note that $I \in \forall^* \exists^*(\Sigma)$.)
\end{theorem}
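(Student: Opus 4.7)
The strategy is to mimic the proof of Theorem~\ref{bounded-horizon-psi-exists-positive}, but to account for the fact that when $\psi \in \AF(\Sigma)$ and $r$ may appear with either polarity in $\Ih$, the formula $\alpha = \nnf{\instr{\alpha}[\psi/r]}$ is no longer purely existential but merely alternation-free. Consequently, $\sk{I}$ contains Skolem functions $\vec{f}$ (originating from the existentials in $\alpha$) that take the outer universals $\vec{x}$ as arguments, so that terms of depth up to $2$ naturally appear when instantiating $\vec{x}$ with constants. This is precisely why bound $2$ is needed rather than bound~$1$.

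I will argue by contradiction. Assume that $I$ is not provably inductive with Bounded-Horizon of bound $2$. Writing $\sk{I}$ in the form $\forall \vec{x}\,\forall \vec{z}.\ \beta(\vec{x}, \vec{z}, \vec{f}(\vec{x}))$, where $\vec{z}$ collects the universal quantifiers that remain inside $\alpha$ after NNF, and $\vec{f}$ are the Skolem functions introduced for the existentials of $\alpha$, I obtain a structure $\A$ that satisfies the bound-$2$ instantiation set. In particular, for each tuple of constants $\vec{c} \in C^n$ (with $C = \consts{\sk{\delta}\sland\sk{(\neg I')}}$), every instantiation of $\sk{I}$ with $\vec{x} := \vec{c}$ and $\vec{z}$ ranging over $\bhterms{1}$ is satisfied; this is allowed because $\vec{c} \in \bhterms{0}$, the Skolem terms $\vec{f}(\vec{c})$ live in $\bhterms{1}$, and thus further instantiation of $\vec{z}$ by $\bhterms{1}$ terms yields overall instantiation depth at most $2$.

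Next, I reduce $\A$ to the substructure $\reducemodel{\A}{\bhterms{1}}$, interpreting each Skolem function $f_i$ on $\bhterms{0}$ via $\A$'s interpretation on these depth-$1$ terms (extending arbitrarily elsewhere). Using these interpretations as witnesses for the existential quantifiers of $\alpha(\vec{c})$, and the fact that the remaining universal quantifiers $\vec{z}$ are instantiated with all elements of the reduced domain, I conclude that $\alpha(\vec{c})$ holds in the reduct for every $\vec{c} \in C^n$. Because $\sk{\delta}$ and $\sk{(\neg I')}$ are universal sentences over $C$ (their Skolemizations introduce only constants, which lie in $C$), they continue to hold in the reduct. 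Hence
\[
\bigl(\bigwedge_{\vec{c} \in C^n}\alpha(\vec{c})\bigr)\sland \sk{\delta}\sland \sk{(\neg I')}
\]
is satisfiable, which directly contradicts the conclusion of Lemma~\ref{bounded-when-instr:main-lemma-universal} applied to the given sound instrumentation $\deltainstr$ and the inductive invariant $\Ih$.

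The main obstacle is the bookkeeping in the reduction step: I must argue carefully that the universal quantifiers left inside $\alpha$ (from the $\forall^*$ conjuncts of $\psi$ appearing positively, or $\exists^*$ conjuncts appearing negatively) are satisfied in the restricted substructure, and that the Skolem witnesses extracted at depth~$1$ genuinely validate the existentials of $\alpha(\vec{c})$. Once this is done, the proof reduces to an application of Lemma~\ref{bounded-when-instr:main-lemma-universal}, just as in Theorem~\ref{bounded-horizon-psi-exists-positive}, but with the additional layer of Skolem function applications paid for by the extra unit of depth.
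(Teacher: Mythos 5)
Your proposal is correct and follows essentially the same route as the paper's own proof: assume failure at bound $2$, observe that the alternation-freeness of $\psi$ keeps the Skolem functions of $\alpha$'s existentials dependent only on the outer variables $\vec{x}$ (so their values on constants lie in $\bhterms{1}$), restrict the model to the $\bhterms{1}$-generated substructure to discharge both the inner universals and the existential witnesses of $\alpha(\vec{c})$, and contradict \Cref{bounded-when-instr:main-lemma-universal}. The bookkeeping you flag as the main obstacle is exactly the content of the paper's Equations (\ref{bounded-horizon-psi-af-2-reduced})--(\ref{bounded-horizon-psi-af-2-with-exists}), so nothing further is missing.
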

%Here again, $I \in \forall^* \exists^*$.
\iflonglong%
\begin{proof}
	As before, let $\instr{I} = \forall \vec{x}. \ \instr{\alpha} (\vec{x})$ where $\instr{\alpha} \in \QF$, and let
    %$\psi \in \AF$ implies that
    $\alpha = \nnf{\instr{\alpha}[\psi / r]}$. % \in AF$.
    \yotam{Please review --- used to assume a CNF form of $\alpha$, especially affecting \Cref{bounded-horizon-psi-af-2-reduced} and \Cref{bounded-horizon-psi-af-2-with-exists}}
    Since $\instr{\alpha} \in \QF$ and $\psi \in \AF$, $\alpha$ is a positive Boolean combination of formulas of the form $\forall \vec{v_i} \theta_{i,1}(\vec{x}, \vec{v_i})$ and $\exists \vec{z_j} \theta_{j,2}(\vec{x}, \vec{z_j})$ where $\vec{v_i} \cap \vec{z_j} = \emptyset$ for all $i,j$.
    In $\sk{\alpha}$, each formula $\exists \vec{z_j} \theta_{j,2}(\vec{x}, \vec{z_j})$ is replaced by $\theta_{j,2}(\vec{x}, \vec{g_i}(\vec{x}))$ where $\vec{g_i}$ are fresh Skolem function symbols. (Note that $\vec{x}$ is free in $\alpha$.)
%\sharon{two questions: 1. can we write "$\alpha$ can be rewritten as a positive... ? I don't like assuming that this is how we get it after substitution. 2. why skolem functions and not constants? }
\yotam{Replaced everywhere $\alpha = \instr{\alpha}[\psi / r]$ by the negation normal form. I think it's more correct (obviously subtle). See the added remarks in explanations ``thus we have shown''}
  \commentout{
	Write $\alpha(\vec{x})$ in CNF form, where each clause is alternation-free:
	\begin{align}\begin{split}
	\alpha(\vec{x}) =
		&\bigl(\forall \vec{v_1} \theta_{1,1}(\vec{x}, \vec{v_1}) \lor \exists \vec{z_1} \theta_{1,2}(\vec{x}, \vec{z_1})\bigr)
		\land
		\ldots
		\\
		\land
		&\bigl(\forall \vec{v_r} \theta_{r,1}(\vec{x}, \vec{v_r}) \lor \exists \vec{z_r} \theta_{r,2}(\vec{x}, \vec{z_r})\bigr)
	\end{split}\end{align}
	where $\theta_{1,1},\theta_{1,2},\ldots,\theta_{r,1},\theta_{r,2} \in \QF$.
  }

	Assume for the sake of contradiction that $I$ is not inductive w.r.t.\ $\delta$ with Bounded-Horizon of bound $2$.

	%Introduce new Skolem functions $\vec{g_i}(\vec{x})$ for $\vec{z_r}$ in $\sk{\alpha}$.

	For brevity denote
	\def\transtorefute#1{{\xi}(#1)} % chktex 36
	\[
	\transtorefute{k} =			\bigl(\bigwedge_{\ov{t} \in \bhterms{k}}{\instantiate{\sk{\delta}}{\ov{t}}}\bigr)
						\land 	\bigl(\bigwedge_{\ov{t} \in \bhterms{k}}{\instantiate{\sk{\bigl(\neg I'\bigr)}}{\ov{t}}}\bigr).
	\]
    % and let $\vec{g_i}(\vec{x})$ denote the fresh Skolem function introduced for $\vec{z_i}$ in $\sk{\alpha}$.

	By the assumption that inductiveness is not provable using Bounded-Horizon of bound $2$,
	\begin{equation}
	\label{bounded-horizon-2}
			\bigl(\bigwedge_{\ov{t} \in \bhterms{1}}{\instantiate{\sk{\alpha}}{\ov{t}}}\bigr)
		\land
			\transtorefute{2}
	\end{equation}
	is satisfiable by a structure ${\A}$.

	In particular
  \begin{equation}
  \label{bounded-horizon-psi-af-2-reduced}
  \bigl(\bigwedge_{\substack{\vec{c} \in \bhterms{0}, \\ \vec{d_1}, \ldots, \vec{d_r} \in \bhterms{1}}}{\instantiate{\sk{\alpha}}{\ov{t}}}\bigr)
    \land
      \transtorefute{1}
  \end{equation}
  \commentout{
	\begin{align}\label{bounded-horizon-psi-af-2-reduced}\begin{split}
			\bigwedge
						\Big{(} % chktex 9
								&\bigl(\theta_{1,1}(\vec{c}, \vec{d_1}) \lor \theta_{1,2}(\vec{c}, \vec{g_1}(\vec{c}))\bigr)
								\land
								\\
								\ldots
								&\land
								\bigl(\theta_{r,1}(\vec{c}, \vec{d_r}) \lor \theta_{r,2}(\vec{c}, \vec{g_r}(\vec{c}))\bigr)
						 \Big{)} % chktex 9
			\land \transtorefute{1}
	\end{split}\end{align}
  }
	is satisfied by ${\A}$ (the arity of $\vec{d_i}$ is understood from the number of universal quantifiers in the respective universal term).
  %, where the conjunction is over $\vec{c} \in \bhterms{0}$ and $\vec{d_1}, \ldots, \vec{d_r} \in \bhterms{1}$ (with the relevant arity).}
	The reason that \Cref{bounded-horizon-psi-af-2-reduced} is satisfied by ${\A}$ is that \Cref{bounded-horizon-psi-af-2-reduced} differs from \Cref{bounded-horizon-2} by having \emph{fewer} conjuncts, as
	%\begin{inparaenum}
		the Bounded-Horizon check with bound $2$ has conjuncts for each $\vec{c} \in \bhterms{1}$ and not just $\bhterms{0}$, and
		we take $\transtorefute{1}$ instead of $\transtorefute{2}$ ($\bhterms{1} \subseteq \bhterms{2}$ so the conjuncts of $\transtorefute{1}$ are included in those of $\transtorefute{2}$).
	%\end{inparaenum}

	Reduce ${\A}$ to the elements pointed by $\bhterms{1}$ terms, let $\A^{\downarrow} = \reducemodel{{\A}}{\bhterms{1}}$.
	%Formula \ref{bounded-horizon-psi-af-2-reduced} is satisfied by $\A^{\downarrow}$ because it's universal, and all the occurrences of function symbols in the formula are in closed terms that are interpreted in $\A^{\downarrow}$ exactly as in $\A$.

	Now,
  \begin{equation}
  \label{bounded-horizon-psi-af-2-with-exists}
  \bigl(\bigwedge_{\vec{c} \in \bhterms{0}}{\instantiate{\alpha}{\ov{t}}}\bigr)
    \land
      \transtorefute{1}
  \end{equation}
  \commentout{
	\begin{align}\label{bounded-horizon-psi-af-2-with-exists}\begin{split}
			\bigwedge_{\vec{c} \in \bhterms{0}}
						\Big{(} % chktex 9
							&\bigl(\forall \vec{v_1} \theta_{1,1}(\vec{c}, \vec{v_1}) \lor \exists \vec{z_1} \theta_{1,2}(\vec{c}, \vec{z_1})\bigr)
							\land
							\ldots
							\\
							\land
							&\bigl(\forall \vec{v_r} \theta_{r,1}(\vec{c}, \vec{v_r}) \lor \exists \vec{z_r} \theta_{r,2}(\vec{c}, \vec{z_r})\bigr)
						 \Big{)} % chktex 9
		\land \transtorefute{1}
	\end{split}\end{align}
  }
	is satisfied by $\A^{\downarrow}$.
	This is because:
	\begin{itemize}
		\item The universal quantifiers are semantically equivalent to a conjunction over all $\bhterms{1}$ elements because the domain was reduced, and
		\item The existential quantifiers are justified by the following valuation: the valuation assigns every $\vec{z_i}$ the interpretation of $\vec{g_i}(\vec{c})$.
	\end{itemize}
	With this valuation the conjunctions of formula~\ref{bounded-horizon-psi-af-2-with-exists} are all guaranteed by the conjunctions in formula~\ref{bounded-horizon-psi-af-2-reduced}.

	Now formula~\ref{bounded-horizon-psi-af-2-with-exists} exactly means that
	\begin{equation}
		\A^{\downarrow} \models
				\bigl(\bigwedge_{\ov{c} \in \bhterms{0}}{\alpha(\ov{c})}\bigr)
				\land \transtorefute{1}.
	\end{equation}
	As in the proof of \Cref{bounded-horizon-psi-exists-positive}, using \Cref{bounded-when-instr:main-lemma-universal}, this is a contradiction to the assumption that $\instr{I}$ is inductive w.r.t.\ $\deltah$, and the claim follows.
\end{proof}
\fi

\commentout{
\begin{remark}
It is possible to define the concept of bounded-horizon more semantically. \TODO{(Does anything here make any sense?)}
	We say that $\mathcal{H}$ is a partial Herbrand model of depth $k$ if the domain of $\mathcal{H}$ contains $\bhterms{k}$ and the interpretation of every function symbol may be a partial function but for every $t \in \bhterms{k}$, $t^{\mathcal{H}} = t$.

	An invariant $I$ is inductive using \emph{Semantic Bounded-Horizon of bound $k$} if there is no partial Herbrand structure of depth $k$, $\mathcal{H}$, such that $\mathcal{H}$ partially models $\sk{I} \land \sk{\delta} \land \sk{(\neg I')}$. \TODO{(Partial models --- something like in Sofronie-Stokkermans?)}

	In this formulation the result here is that if $\psi \in \AF$ then $I$ is inductive for $\delta$ with Semantic Bounded-Horizon of \emph{bound 1} (rather than $2$), because the proof essentially shows that a structure $\A$ reduced to have the domain $\bhterms{1}$ cannot be a partial counterexample-to-induction for the original invariant, because it would lead to a counterexample-to-induction for the original invariant.

	Nonetheless we prefer to use the proof-theoretic definition throughout the paper to elucidate the connection to quantifier instantiation.
\end{remark}
}

\iflonglong%
\subsection{Generalization to 1-Alternation Invariants}
%In this section w
We now generalize the results of \Cref{sec:ae-inst-implies-bound} to \emph{1-alternation invariants}. A formula is 1-alternation if it can be written as a Boolean combination of $\forall^* \exists^*$ formulas.
In the sequel, $\Ih \in \AF(\Sigmainstr)$ and $I = \Ih[\psi /r]\in \ONEALT(\Sigma)$. % is the result of substitution.
\else
The following theorem generalizes the above result to \emph{1-alternation invariants}. A formula is 1-alternation if it can be written as a Boolean combination of $\forall^* \exists^*$ formulas.
%In the sequel, $\Ih \in \AF(\Sigmainstr)$ and $I = \Ih[\psi /r]\in \ONEALT(\Sigma)$. % is the result of substitution.
\fi

\iflonglong%
\begin{lemma}%
\label{bounded-when-instr:main-lemma-af}
	Let $\psi \in \FOL(\Sigma)$.
	Let $\delta \in \EPR(\Sigma \uplus \Sigma')$ and let $\deltainstr \in \EPR(\Sigmainstr \uplus \Sigmainstr')$ be a sound instrumentation of $\delta,\psi$.
        %without new existentials and with naming $\eta$.  
	Let $\instr{I} \in \AF(\Sigmainstr)$ be an inductive invariant for $\deltainstr$, % \in \EPR(\Sigmainstr \uplus \Sigmainstr')$, which is a sound instrumentation of $\delta \in \EPR(\Sigma \uplus \Sigma')$ without additional existentials.
    and write $\sk{\Ih} = \forall \ov{x}. \ \instr{\alpha_1}(\ov{x})$ and
    $\sk{(\neg \Ih)} = \forall \ov{x}. \ \instr{\alpha_2}(\ov{x})$, where $\instr{\alpha_1}, \instr{\alpha_2}$ are quantifier free. % \in \QF(\Sigmainstr)$.
    Let $\alpha_1 = \nnf{\instr{\alpha_1}[\psi / r]}$ and $\alpha_2 = \nnf{\instr{\alpha_2}[\psi / r]}$. 
    %\quad
	%Let $\delta \in \exists^{*}\forall^{*}(\Sigma \cup \Sigma')$ and let $\deltainstr \in \exists^{*}\forall^{*}(\Sigmainstr \cup \Sigmainstr')$ be a sound instrumentation of $\delta$
        %without new existentials and with naming $\eta$.  
     \quad Then,
	\begin{equation}
				\bigl(\bigwedge_{\ov{c_1} \in C^n}{\hspace*{-1ex}\alpha_1(\ov{c_1})}\bigr)
		\sland 	\sk{\delta}
		\sland	\bigl(\bigwedge_{\ov{c_2} \in C^m}{\hspace*{-1ex}\alpha_2(\ov{c_2})}\bigr)
	\end{equation}
	is unsatisfiable,
	where $C = \consts{\sk{\Ih} \land \sk{\delta} \land \sk{(\neg I')}}$, $n$ is the number of universal quantifiers in $\sk{\Ih}$ and $m$ is the number of universal quantifiers in $\sk{(\neg \Ih)}$.
\end{lemma}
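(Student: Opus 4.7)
The plan is to follow the same contradiction strategy as in Lemma \ref{bounded-when-instr:main-lemma-universal}, but now keeping track of two separate families of instantiations — one coming from $\sk{\Ih}$ and one from $\sk{(\neg \Ih')}$ — rather than just a single universal invariant. The key observation is that since $\Ih \in \AF(\Sigmainstr)$, both $\sk{\Ih}$ and $\sk{(\neg \Ih')}$ remain universal sentences over $\Sigmainstr \uplus \Sigmainstr'$ (Skolemization of an alternation-free formula introduces fresh constants but no function symbols), so complete instantiation over the set of constants suffices for them.

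First I would assume for contradiction that there is a structure $\A_0$ over $\Sigma \uplus \Sigma'$ (possibly with Skolem symbols generated by $\sk{\delta}$, $\sk{\Ih}$, and $\sk{(\neg I')}$) that satisfies the displayed finite conjunction. Next, as in the proof of Lemma \ref{bounded-when-instr:main-lemma-universal}, I would expand $\A_0$ to a structure $\A_1$ that also interprets the constants in $\instr{C} = \consts{\eta(\deltah) \land \sk{\Ih} \land \sk{(\neg \Ih')}}$ by reusing some arbitrary interpretation from $C$, so that both $\bigwedge_{\ov{c_1} \in \instr{C}^n} \alpha_1(\ov{c_1})$ and $\bigwedge_{\ov{c_2} \in \instr{C}^m} \alpha_2(\ov{c_2})$ continue to hold alongside $\sk{\delta}$. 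Then I would apply the sound-instrumentation-without-additional-existentials hypothesis to pass from $\sk{\delta}$ to $\eta(\deltah)[\psi/r,\psi'/r']$, and use that $\alpha_1,\alpha_2$ are $\nnf{\instr{\alpha_i}[\psi/r,\psi'/r']}$ together with the standard fact $\sk{\gamma} \to \gamma$ (applied to $\sk{(\neg \Ih')}[\psi'/r']$) to replace everything by the instrumented formulas evaluated under $[\psi/r,\psi'/r']$.

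Now comes the second key step: introduce the expanded structure $\instr{\A}$ over $\Sigmainstr \uplus \Sigmainstr'$ that interprets $r$ and $r'$ exactly as $\A_1$ interprets $\psi$ and $\psi'$. In this structure the conjunction becomes
\begin{equation*}
\bigl(\bigwedge_{\ov{c_1} \in \instr{C}^n}{\instr{\alpha_1}(\ov{c_1})}\bigr)
\sland \eta(\deltah)
\sland \bigl(\bigwedge_{\ov{c_2} \in \instr{C}^m}{\instr{\alpha_2}(\ov{c_2})}\bigr).
\end{equation*}
This is a purely universal sentence (the $\instr{\alpha_i}$ are quantifier-free, $\eta(\deltah)$ is universal by the definition of existential naming, and Skolemization of an $\AF$ formula adds only constants). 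Restricting $\instr{\A}$ to the substructure $\reducemodel{\instr{\A}}{\instr{C}}$ whose universe is the interpretation of the constants $\instr{C}$ preserves the truth of this universal conjunction, and moreover gives
\[
\reducemodel{\instr{\A}}{\instr{C}} \models (\forall \ov{x}.\,\instr{\alpha_1}(\ov{x})) \sland \eta(\deltah) \sland (\forall \ov{x}.\,\instr{\alpha_2}(\ov{x})),
\]
i.e.\ $\sk{\Ih} \land \eta(\deltah) \land \sk{(\neg \Ih')}$ is satisfiable. Using $\sk{\gamma} \to \gamma$ once more and $\eta(\deltah) \to \deltah$, this contradicts the inductiveness of $\Ih$ for $\deltah$.

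\paragraph{Main obstacle}
I expect the trickiest bookkeeping to be splitting the Skolem symbols correctly between the three pieces $\sk{\Ih}$, $\sk{\delta}$, and $\sk{(\neg \Ih')}$, and making sure the substitution $[\psi/r,\psi'/r']$ interacts correctly with negation normal form so that $\alpha_1$ and $\alpha_2$ really are the formulas obtained from $\instr{\alpha_1}$ and $\instr{\alpha_2}$ by $[\psi/r,\psi'/r']$ up to NNF. The original lemma hides this under a single occurrence of $\alpha$; here the symmetric treatment of pre- and post-state means one must be careful that the ``two-sided'' substitution still commutes with Skolemization up to equisatisfiability, and that the arbitrary extension of $\A_0$ to $\A_1$ preserves both instantiation conjunctions simultaneously.
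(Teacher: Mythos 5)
Your proposal is correct and follows essentially the same route as the paper's own proof: the paper likewise runs the argument of Lemma~\ref{bounded-when-instr:main-lemma-universal} with two instantiation conjunctions (one for $\sk{\Ih}$ in the pre-state and one for $\sk{(\neg \Ih')}$ in the post-state), extends the model to the constants $\instr{C}$, passes to $\eta(\deltah)[\psi/r,\psi'/r']$ via the no-additional-existentials hypothesis, expands to a $\Sigmainstr$-structure interpreting $r,r'$ as $\psi,\psi'$, and restricts to the substructure over $\instr{C}$ to contradict inductiveness of $\Ih$. The bookkeeping concerns you flag (splitting Skolem symbols and commuting the two-sided substitution with NNF) are exactly the points the paper's proof also has to handle, and your treatment of them is consistent with it.
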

\begin{proof}
	% \begin{align}\begin{split}
	% 	\Ih =
	% 			&\bigl(\forall \vec{v_1} \theta_{1,1}(\vec{v_1}) \lor \exists \vec{z_1} \theta_{1,2}(\vec{z_1})\bigr)
	% 			\land
	% 			\ldots
	% 			\\
	% 			\land
	% 			&\bigl(\forall \vec{v_r} \theta_{r,1}(\vec{v_r}) \lor \exists \vec{z_r} \theta_{r,2}(\vec{z_r})\bigr)
	% \end{split}\end{align}

	The proof is similar to that of \Cref{bounded-when-instr:main-lemma-universal}; this proof follows the same reasoning to transform satisfiable formulas, but transforms not only the quantifier and conjunctions related to the invariant in the pre-state, but also to its negation in the post-state.

	Assume not, i.e., there exists a structure ${\A}_0$ such that,
	\begin{equation}
	{\A}_0 \models
			\bigl(\bigwedge_{\ov{c_1} \in C^n}{\hspace*{-1ex}\alpha_1(\ov{c_1})}\bigr)
		\sland 	\sk{\delta}
		\sland	\bigl(\bigwedge_{\ov{c_2} \in C^m}{\hspace*{-1ex}\alpha_2(\ov{c_2})}\bigr).
	\end{equation}

%\yotam{For brevity I'm both adding and removing conjuncts in the same step, is it still clear?}
	We will show that $\instr{I}$  is not inductive for $\deltah$.\quad
	Let $\instr{C} = \consts{\sk{\Ih} \land \eta(\deltah) \land \sk{(\neg \instr{I}')}}$. Then,
	\begin{equation}
		{\A}_1 \models\;
			\bigl(\bigwedge_{\ov{c_1} \in \instr{C}^n}{\hspace*{-1ex}\alpha_1(\ov{c_1})}\bigr)
		\sland 	\sk{\delta}
		\sland	\bigl(\bigwedge_{\ov{c_2} \in \instr{C}^m}{\hspace*{-1ex}\alpha_2(\ov{c_2})}\bigr).
	\end{equation}
	where ${\A}_1$ is the same as ${\A}_0$ but also interprets any constant in $\instr{C}
    \setminus C$ as some arbitrary constant in $C$.

   	By the assumption (\Cref{instrumentationNoNewExists}), it follows that,
   	\begin{equation}
		{\A}_1 \models\;
			\bigl(\bigwedge_{\ov{c_1} \in \instr{C}^n}{\hspace*{-1ex}\alpha_1(\ov{c_1})}\bigr)
		\sland 	\eta(\deltah)[\psi / r,\psi' / r']
		\sland	\bigl(\bigwedge_{\ov{c_2} \in \instr{C}^m}{\hspace*{-1ex}\alpha_2(\ov{c_2})}\bigr).
	\end{equation}
	where $\eta$ is the existential naming between $\delta,\deltah$.

%\yotam{Note that we do not need the step about \sk{\gamma} \rightarrow \gamma since we have instantiations on the r.h.s. as well}
	Thus we have shown (recalling that $\alpha_1 = \nnf{\instr{\alpha_1}[\psi / r]}$ and $\instr{\alpha_1}[\psi / r]$ are equivalent, and similarly for $\alpha_2$),
	\begin{equation}
		{\A}_1 \models
				\Bigl(		\bigl(\bigwedge_{\vec{c_1} \in \instr{C}^n}{\instr{\alpha_1}(\vec{c_1})}\bigr)
					 \sland \eta(\deltah)
					 \sland \bigl(\bigwedge_{\vec{c_2} \in \instr{C}^m}{\instr{\alpha_2}(\vec{c_2})}\bigr)
				\Bigr)[\psi / r,\psi' / r'].
	\end{equation}
	Now, consider the structure $\instr{{\A}}$ that expands ${\A}_1$ by interpreting $r$ and $r'$ the way that ${\A}_1$ interprets $\psi$ and $\psi'$, respectively. Then,
	\begin{equation}
	\label{af:instantiations-instrumentation}
		\instr{{\A}} \models
					\bigl(\bigwedge_{\vec{c_1} \in \instr{C}^n}{\instr{\alpha_1}(\vec{c_1})}\bigr)
			 \sland \eta(\deltah)
			 \sland \bigl(\bigwedge_{\vec{c_2} \in \instr{C}^m}{\instr{\alpha_2}(\vec{c_2})}\bigr)
	\end{equation}

The formula in \Cref{af:instantiations-instrumentation} is a universal sentence: $\instr{\alpha_1}(\vec{c}), \instr{\alpha_2}(\vec{c})$ are quantifier free and closed, and $\eta(\deltah) \in \Univ$.
  It follows that the formula in \Cref{af:instantiations-instrumentation} $\reducemodel{\instr{{\A}}}{\instr{C}}$, the substructure of $\instr{{\A}}$ with universe $\instr{C}^{\instr{\A}}$, i.e., $\instr{\A}$'s interpretation of the constant symbols $\instr{C}$; recall that $\instr{C} = \consts{\sk{\Ih} \land \eta(\deltah) \land \sk{(\neg \instr{I}')}}$ (and $\sk{\Ih} = \forall \ov{x}. \ \instr{\alpha_1}(\ov{x})$ and
    $\sk{(\neg \Ih)} = \forall \ov{x}. \ \instr{\alpha_2}(\ov{x})$) so indeed $\instr{{\A}}$ provides an interpretation to every constant in the formula.
Thus,
	\begin{equation}
	\label{af:move-to-universal-quantifiers}
		\reducemodel{\instr{{\A}}}{\instr{C}} \models
					\bigl(\forall \vec{x}. \ \instr{\alpha_1}(\vec{x})\bigr)
			\sland 	\eta(\deltah)
			\sland 	\bigl(\forall \vec{x}. \ \instr{\alpha_2}(\vec{x})\bigr).
	\end{equation}
	Recall that $\instr{C}$ was defined as $\instr{C} = \consts{\Ih \land \eta(\deltah) \land \sk{(\neg \instr{I}')}}$.

	Finally, since $\sk{\gamma} \rightarrow \gamma$ is valid and for the same reasons $\eta(\deltah) \rightarrow \deltah$ is valid, we know,
	\begin{equation}
	\label{af:consec-instr}
	\reducemodel{\instr{{\A}}}{\instr{C}} \models
		\instr{I} \sland \deltah \sland \neg \instr{I}'.
	\end{equation}
	But this contradicts the fact that $\widehat{I}$ is inductive for $\deltah$.
\end{proof}
%\else
%\begin{proof}
%The proof appears in the supplementary materials provided with this submission. \sharon{rephrase}
%\end{proof}
%\fi

The following result is a corollary of \Cref{bounded-when-instr:main-lemma-af}.
\fi
\begin{theorem}\label{bhinstr:af-small-bound}
	Let $\Ih \in \AF(\Sigmainstr)$ an inductive invariant for $\deltainstr$ which is a sound instrumentation of $\delta$ without additional existentials. If $\psi \in \AF(\Sigma)$ then $I = \instr{I}[\psi / r]$ is inductive for $\delta$ with Bounded-Horizon of bound $2$. (Note that $I \in \ONEALT(\Sigma)$.)
\end{theorem}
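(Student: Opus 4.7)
The plan is to adapt the strategy used in the proof of \Cref{univ:psi-af-bound-2}, where the pre-state invariant was universal, to the symmetric case where both the pre-state invariant $\Ih$ and (hence) its negation in the post-state contribute existential quantifiers after substitution of $\psi$. Write $\sk{\Ih} = \forall \vec{x}. \, \instr{\alpha_1}(\vec{x})$ and $\sk{(\neg \Ih')} = \forall \vec{x}. \, \instr{\alpha_2}(\vec{x})$ with $\instr{\alpha_1}, \instr{\alpha_2} \in \QF(\Sigmainstr)$; since $\psi \in \AF(\Sigma)$, after substituting $\psi$ for $r$ and pushing the resulting negations inward via $\nnf{\cdot}$, we obtain $\alpha_1, \alpha_2$ which are each a positive Boolean combination of universal and existential subformulas over $\Sigma$. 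Skolemizing these two 1-alternation formulas introduces Skolem function symbols whose application to constants from $\sk{\delta}$ yields terms of depth exactly $1$; this is the source of the bound $2$ in the statement.

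The argument will proceed by contradiction. Suppose Bounded-Horizon of bound $2$ fails to refute $\sk{I} \land \sk{\delta} \land \sk{(\neg I')}$, so some structure $\A$ satisfies all instantiations with terms drawn from $\bhterms{2}$ at total depth at most $2$. In particular, for every $\vec{c_1}, \vec{c_2} \in \bhterms{0}$, the closed quantifier-free conjuncts obtained by replacing the outer universal variables in $\sk{I}$ and $\sk{(\neg I')}$ with $\vec{c_1}$ and $\vec{c_2}$ are satisfied by $\A$, and any universal quantifier sitting inside $\alpha_1$ or $\alpha_2$ can be instantiated by terms in $\bhterms{1}$ (Skolem constants of $\sk{\delta}$ or one-level applications of Skolem functions of $\sk{I}, \sk{(\neg I')}$ to $\bhterms{0}$).

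I would then reduce $\A$ to $\A^{\downarrow} = \reducemodel{\A}{\bhterms{1}}$, the substructure consisting of interpretations of $\bhterms{1}$ terms. Defining valuations that send each existentially quantified variable of $\alpha_1, \alpha_2$ to the value of its corresponding Skolem function on the current constant arguments (these values lie in $\bhterms{1}$ and so are present in $\A^{\downarrow}$), and observing that on the reduced domain the remaining universal subformulas of $\alpha_1, \alpha_2$ are satisfied (since they were instantiated on all of $\bhterms{1}$), one concludes that
\[
\A^{\downarrow} \models \bigl(\bigwedge_{\vec{c_1} \in \bhterms{0}^n}{\alpha_1(\vec{c_1})}\bigr) \sland \sk{\delta} \sland \bigl(\bigwedge_{\vec{c_2} \in \bhterms{0}^m}{\alpha_2(\vec{c_2})}\bigr).
\]
This is exactly the hypothesis of the alternation-free analog of \Cref{bounded-when-instr:main-lemma-universal} (the lemma appearing in the long version as \Cref{bounded-when-instr:main-lemma-af}), whose conclusion---obtained by reinterpreting $r, r'$ in accordance with $\psi, \psi'$, invoking the existential naming $\eta$ from \Cref{instrumentationNoNewExists}, and restricting to the constants of the instrumented vocabulary---is a counterexample to induction of $\Ih$ w.r.t.\ $\deltah$, contradicting the assumption that $\Ih$ is inductive for $\deltah$.

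The main obstacle I anticipate is bookkeeping: making sure that existentials coming from both $\alpha_1$ and $\alpha_2$ are witnessed correctly in the reduction to $\A^{\downarrow}$, and that the domain $\bhterms{1}$ is large enough to accommodate all required Skolem witnesses simultaneously. This is precisely why bound $2$ (rather than $1$) is needed: we must allow one level of Skolem function application on top of the base constants, so that witnesses for existentials under universals are available in the reduced domain while the universals themselves are still instantiated against all of $\bhterms{1}$. All other steps---the use of the existential naming, the substitution $[\psi/r, \psi'/r']$, and the passage from the quantifier-free instantiations back to the universally quantified $\Ih$ via complete instantiation for universal formulas over the constants---are direct analogs of the corresponding steps in the proofs of \Cref{bounded-horizon-psi-exists-positive,univ:psi-af-bound-2}.
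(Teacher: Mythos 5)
Your plan follows the paper's proof essentially verbatim: assume bound~$2$ fails, reduce the satisfying structure to $\reducemodel{\A}{\bhterms{1}}$, witness the existentials of $\alpha_1,\alpha_2$ by the depth-$1$ Skolem terms while the inner universals are handled by the domain restriction, and then invoke \Cref{bounded-when-instr:main-lemma-af} to contradict inductiveness of $\Ih$ w.r.t.\ $\deltah$. The approach and the key lemma are the same as in the paper, so there is nothing further to add.
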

\iflonglong%
\begin{proof}
	$\psi \in \AF$ implies that $\alpha_1(\ov{x}) = \nnf{\instr{\alpha_1}[\psi / r]} \in \AF$ and $\alpha_2(\ov{x}) = \nnf{\instr{\alpha_2}[\psi / r]} \in AF$ (recall that $\instr{\alpha_1},\instr{\alpha_2} \in \QF$).

	By the assumption that inductiveness is not provable using Bounded-Horizon of bound~$2$,
	\begin{equation}
	\label{af:bounded-horizon-2}
				\bigl(\bigwedge_{\ov{t} \in \bhterms{1}}{\instantiate{\sk{(\alpha_1)}}{\ov{t}}}\bigr)
		\land
				\bigl(\bigwedge_{\ov{t} \in \bhterms{2}}{\instantiate{\sk{\delta}}{\ov{t}}}\bigr)
		\land 	\bigl(\bigwedge_{\ov{t} \in \bhterms{1}}{\instantiate{\sk{(\alpha_2)}}{\ov{t}}}\bigr)
	\end{equation}
	is satisfiable, where $\sk{(\alpha_1)}, \sk{(\alpha_2)}$ introduce Skolem functions. Let ${\A}$ be such a satisfying structure, and $\A^{\downarrow} = \reducemodel{{\A}}{\bhterms{1}}$.

	Because $\alpha_1 \in \AF$, in the same way as in the proof of \Cref{univ:psi-af-bound-2},
	\begin{equation}
		\A^{\downarrow} \models
			\bigwedge_{\ov{c} \in \bhterms{0}}{\alpha_1(\ov{c})}
	\end{equation}
	and in the same way, since $\alpha_2 \in \AF$ as well, the same structure has
	\begin{equation}
		\A^{\downarrow} \models
			\bigwedge_{\ov{c} \in \bhterms{0}}{\alpha_2(\ov{c})}.
	\end{equation}
	Note that from \Cref{af:bounded-horizon-2}, $\A^{\downarrow} \models \deltainstr$. Overall we have
	\begin{equation}
		\A^{\downarrow} \models
					\bigl(\bigwedge_{\ov{c} \in \bhterms{0}}{\alpha_1(\ov{c})}\bigr)
			\sland	\sk{\delta}
			\sland	\bigl(\bigwedge_{\ov{c} \in \bhterms{0}}{\alpha_2(\ov{c})}\bigr)
	\end{equation}
	and by \Cref{bounded-when-instr:main-lemma-af} this is a contradiction to the assumption that $\Ih$ a is inductive w.r.t.~$\deltainstr$.
\end{proof}
%\else
%\begin{proof}
%The proof appears in the supplementary materials provided with this submission.
%\end{proof}
\fi
%%% Local Variables:
%%% mode: latex
%%% TeX-master: "main.tex"
%%% End:

%%% Local Variables:
%%% mode: latex
%%% TeX-master: "main.tex"
%%% End:

%\input{bound_if_instrumentation_proofs}
\section{Instrumentation for High Depth Instantiations}%
\label{sec:appendix-instrumentation-revisited}

In this section we discuss the connection between quantifier
instantiation and program instrumentation in the converse direction,
i.e.\ simulating quantifier instantiation by the process of instrumentation.
In \Cref{sec:bounded-horizon-instrumentation} we showed that instrumentation without adding existential quantifiers is at most as powerful as bounded instantiations with a low bound.
In this section we show that allowing additional existentials does increase the power of instrumentation in proving $\AE$ invariants.
In particular, we show how to systematically construct instrumented programs in a way
that corresponds to quantifier instantiation for $\AE$-invariants:
performing the required instantiations within the program allows expressing the invariant in a form that falls in the decidable fragment.
Together with \Cref{sec:bounded-horizon-instrumentation}, this makes the point that, in the context of invariant checking, the form of instrumentation by a derived relation studied in this paper directly corresponds to quantifier instantiation.% \sharon{isn't it too dismissive? perhaps say "draws the relation to..."} \yotam{didn't understand}.

\begin{comment}
As before, the instrumentation process begins with a program that does
not have a universal inductive invariant \sharon{better to phrase positively. We don't commit to the fact that it doesn't have a universal invariant. We just care about one that is not universal, no?} \yotam{this alludes to the fact that the original invariant ``arises'' from the success of instrumentation, mentioned in \Cref{sec:instrumentation}. but I agree that it's not relevant here. This entire paragraph, actually}. We would like to
transform the program in a sound way to a program that has a universal
inductive invariant $\instr{I}$. As before, we start by identifying
some existential formula $\psi(\vecx) \in \exists^{*}(\Sigma)$ that
expresses needed information, and encode it using an instrumentation
relation $r$, with the meaning that ``$r(\vecx) \equiv \psi(\vecx)$''.
As before, we present the procedure for a single formula and
instrumentation relation, but it can be used with multiple
formulas and instrumentation relations. \TODO{should we have such a remark in the sound instrumentation definition? probably not interesting}
\sharon{we actually have such a remark - remark 4.8} \yotam{not exactly, we state that the results hold but never discuss a sound instrumentation with multiple relations. I was thinking of adding a remark just after instrumentation, but it's probably useless}
\end{comment}
%\yotam{Sharon, replaced this entire paragraph by the following} \sharon{slightly rephrased}
We are again interested in an original program $\delta$ with a $\AE$ inductive invariant $I \in \AE(\Sigma)$. Our goal is to prove that $I$ is inductive w.r.t.\ $\delta$ by a reduction to a decidable class. We therefore identify some existential formula $\psi(\vecx) \in \exists^{*}(\Sigma)$ that
expresses needed information, and encode it using an instrumentation
relation $r$, with the meaning that ``$r(\vecx) \equiv \psi(\vecx)$''.
Instrumentation then modifies $\delta$ to produce an instrumented program $\deltainstr$ with a universal inductive invariant $\Ih$ such that $I = \Ih[\psi / r]$ (at least up to redundant tautological clauses in $I$). %The motivation is to prove that $I$ is inductive w.r.t.\ $\delta$ by a reduction to a decidable class.

Intuitively, adding the instrumentation relation $r$ lets $\Ih \in \Univ(\Sigmainstr)$
express existential information by referring to $r$ instead. The modifications
to the program must encode ``enough'' of ``$r(\vecx) \equiv \psi(\vecx)$''
to make $\Ih$ inductive.
%\yotam{rephrased}
In \Cref{sec:instrumentation}, this was performed by adding update code and rewriting program conditions.
% \yotam{possibly also add information to the invariant---we actually do this in the paragraph on local instantiations}%As opposed to the instrumentation described
In this section, we will take a different approach to instrumentation:
%in \Cref{sec:instrumentation}, this will be done by %:
%\begin{inparaenum}[(i)]
%\begin{itemize}
%  \item I
instantiating the correspondence (using $\ASSUME$
  statements) between $\psi$ and
  $r$ for specific variables in the program---relating $r(\vect)$ to $\psi(\vect)$ where $\vect$ is a specific tuple of closed terms. While different from the process discussed in \Cref{sec:instrumentation}, the result is still a sound instrumentation per \Cref{def:sound-instrumentation}.
  %This replaces the rewriting of program conditions that use $\psi$ to use $r$. This approach is also extended to obtain deep instantiations which simulate Bounded-Horizon with an arbitrary bound.
  We show this approach---closely following quantifier instantiation---can systematically construct instrumentations $\deltainstr, \Ih$ whose effect is to prove that $I$ is inductive w.r.t.\ $\delta$, by encoding the necessary instantiations.
  We begin by defining the instrumentation process we use in this section.
%  \item Using a hybrid verification condition that considers
%  $\instr{I}$ in the pre-state but $\instr{I}[\psi / r]$ in the
%  post-state. This verification condition is still decidable, and it
%  eliminates the need for update code that updates $r$ when other
%  relations are modified by the program.
%\end{itemize}
%%\end{inparaenum}

\ultpara{Instrumentation by Local Instantiations}%
\label{sec:local-instantiations}
We would like to enforce $r$ to be interpreted according to $\psi$ in
the pre-state, i.e., to enforce
$\forall \vecx. \;\psi(\vecx) \leftrightarrow r(\vecx)$. The
direction $\forall \vecx. \;\psi(\vecx) \rightarrow r(\vecx)$ is
an EPR formula (since $\psi \in \E$), and thus we can simply conjoin
it to the verification conditions without sacrificing decidability.

The converse implication, $\forall \vecx. \;r(\vecx) \to \psi(\vecx)$,
is a $\AE$ formula, and adding it to the verification condition will
lead to a formula that does not belong to the decidable EPR class. Note that Bounded-Horizon with bound 1 is
analogous to enforcing $r(\vect) \to \psi(\vect)$ for every $\vect$ that is a tuple of program variables. Inspired by this, we define the following
instrumentation that lets the user locally enforce the definition of
$r$ for program variables.

\begin{definition}[Local Instantiation]%
\label{def:local-instantiation}
Let $\psi(\vecx) = \exists \vecy. \;\varphi(\vecx, \vecy)$ where $\varphi \in \QF(\Sigma)$.
%\sharon{add where $\varphi \in \QF(\Sigma)$ (is it really $\Sigma$? later it can be larger I guess)}.  \yotam{true, but never used this way}
To generate a \emph{local instantiation} of $\forall \vecx. \;
r(\vecx) \to \exists \vecy. \;\varphi(\vecx, \vecy)$ on some
tuple of program variables $\vect$, we instrument the program by
adding new local program variables $\vecc$, and inserting the following code:
\begin{equation}
\label{eq:local-inst}
\textbf{\textup{local}}~\vecc := *;\; \textup{\ASSUME} \; {r(\vect) \rightarrow \varphi(\vect, \vecc)}.
\end{equation}
\end{definition}
The code in \Cref{eq:local-inst} uses the havoc statement, which sets the value of $\vecc$ to
arbitrary values, followed by an assume statement that restricts the
execution such that if $r(\vect)$ holds, then $\varphi(\vect, \vecc)$
holds.
Thus, this code realizes the restriction that
$r(\vect) \to \exists \vecy. \;\varphi(\vect, \vecy)$, and also assigns to the new program variables $\vecc$ the ``witnesses'' for the existential quantifiers. \yotam{chose to be inexact with ``witness'', otherwise seems cumbersome to me} \sharon{changed $\vecc$ to $\vecy$ in $r(\vect) \to \exists \vecy. \;\varphi(\vect, \vecc)$}
Note that the new variables translate to existential quantifiers in the transition relation (through the semantics of havoc---see~\Cref{Fi:RMLWP}, recalling that the transition relation is obtained by negation).
%(these are witnesses in the sense that $\exists \vecy. \ \varphi(\vect, \vecy)$ can be shown to hold by the valuation that assigns to $\vecy$ the interpretation of $\vecc$).
%\sharon{2. do we later use the notion of a witness? if so, can add the name to the definition itself}.
We call this addition to the program
a local instantiation, as it imposes the connection between $r$
and $\psi$ locally for some program variables $\vect$.

\begin{lemma}[Soundness of Local Instantiations]%
\label{lem:local-instantiation}
%The naming instrumentation is sound.
If $\deltainstr$ is obtained from $\delta$ by a local instantiation
then $\deltainstr$ is a sound instrumentation with $\psi$ per \Cref{def:sound-instrumentation}.
\end{lemma}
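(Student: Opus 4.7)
The goal is to show that $\delta \rightarrow \deltainstr[\psi/r, \psi'/r']$ is valid, where $\deltainstr$ is obtained from $\delta$ by inserting, at some program point, the code block $\textbf{local}~\vecc := *;\; \ASSUME\;(r(\vect) \rightarrow \varphi(\vect, \vecc))$. After substituting $\psi$ for $r$ (and $\psi'$ for $r'$), the $\ASSUME$ becomes $\psi(\vect) \rightarrow \varphi(\vect, \vecc)$, i.e., $(\exists \vecy. \; \varphi(\vect, \vecy)) \rightarrow \varphi(\vect, \vecc)$. The plan is to show that, for any pre/post-state pair $(\state, \state')$ satisfying $\delta$, one can always find values for the fresh local variables $\vecc$ at the instrumentation point that make the added $\ASSUME$ succeed, so the inserted block is effectively a no-op on the observable state and the transition of $\delta$ lifts to a transition of $\deltainstr[\psi/r, \psi'/r']$.

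First I would work at the level of the weakest-precondition definition of $\delta[\Cmd]$ given in \Cref{sec:prelim:rml:tr}: since $\vecc$ are fresh variables not appearing elsewhere in the program, $\wp$ of the inserted block with respect to a formula $Q$ not mentioning $\vecc$ is equivalent to $\forall \vecc. \; (\psi(\vect) \rightarrow \varphi(\vect, \vecc)) \rightarrow Q$. By standard logical manipulations (pulling the universal outside, using the freshness of $\vecc$) this is equivalent to $(\exists \vecc. \; \psi(\vect) \rightarrow \varphi(\vect, \vecc)) \rightarrow Q$, which in turn is equivalent to $Q$: if $\psi(\vect) = \exists \vecy. \; \varphi(\vect, \vecy)$ holds, then a witness $\vecy_0$ provides $\vecc = \vecy_0$ with $\varphi(\vect, \vecc)$; and if $\psi(\vect)$ does not hold, then the implication is vacuously satisfied by any $\vecc$. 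Hence the inserted block preserves weakest preconditions when $r$ is interpreted as $\psi$, and therefore preserves the transition relation up to existential quantification over $\vecc$.

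Consequently, for every transition $(\state,\state')$ of $\delta$, the matching transition of $\deltainstr[\psi/r,\psi'/r']$ is obtained by choosing $\vecc$ according to the case split above at the moment the instrumentation block is executed; since $\vecc$ is a fresh local variable it affects neither the pre-state constants in $\Sigma$ nor the post-state constants in $\Sigma'$, so the same $(\state,\state')$ witnesses the transition in $\deltainstr[\psi/r,\psi'/r']$. This gives $\delta \rightarrow \deltainstr[\psi/r, \psi'/r']$, which is exactly \Cref{def:sound-instrumentation}.

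The main obstacle I anticipate is notational rather than conceptual: carefully formalizing the ``matching transition'' argument when the instrumentation is inserted inside a larger command composed with sequencing, non-deterministic choice, or nested \ASSUME{} statements. The cleanest way to discharge this is to prove the claim compositionally, by showing that $\wp(\Cmd_1;\,B;\,\Cmd_2, Q)$ is equivalent to $\wp(\Cmd_1;\,\Cmd_2, Q)$ modulo the substitution $[\psi/r,\psi'/r']$, where $B$ is the inserted block and $\Cmd_1,\Cmd_2$ do not mention $\vecc$; one then invokes the definition $\delta[\Cmd] = \neg\wp(\Cmd, \neg \psi_{\vocabulary = \vocabulary'})$ from \Cref{sec:prelim:rml:tr} to conclude the desired implication between transition relations.
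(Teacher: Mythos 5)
Your proof is correct and follows essentially the same route as the paper's: both arguments reduce to the observation that the inserted block contributes an existentially quantified constraint $\exists \vecc.\;(r(\vect) \rightarrow \varphi(\vect,\vecc))$ to the transition relation (via the wp semantics of havoc and \texttt{assume}), and that under the interpretation $r \equiv \psi$ this constraint is a tautology by the case split on whether $\psi(\vect)$ holds. Your wp-level, compositional phrasing is a slightly more careful rendering of the same argument the paper gives directly at the transition-relation level.
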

\begin{proof}
%\sharon{is it correct now? I mostly removed $\vecc$ and replaced it by $\vecy$ (original is commented out below)}
It suffices to show that the logical constraint generated in $\deltainstr$ as a result of the $\textup{\ASSUME}$ command does not exclude any transition allowed by $\delta$.
The code added by a local instantiation for $\vect$ translates to a new constraint in $\deltainstr$ of the
form $\gamma = \exists \vecy. \;
r(\vect) \rightarrow \varphi(\vect, \vecy)$ through the semantics of havoc and $\bassume$ (see \Cref{Fi:RMLWP}). Note that the variables added through local instantiation are translated to existential quantifiers, not new constants.
Since $(\forall \vecx. \;
r(\vecx) \leftrightarrow \psi(\vecx)) \rightarrow \gamma$ is valid (recall that $\psi(\vecx) = \exists \vecy. \;\varphi(\vecx, \vecy)$), we have
$((\forall \vecx. \;
r(\vecx) \leftrightarrow \psi(\vecx)) \sland \delta) \rightarrow \deltainstr$ %\sharon{where are the parenthesis here?}
is valid,
which implies the condition of \Cref{def:sound-instrumentation}.
%\sharon{as far as I can see $\vecc$ will not appear in $\deltainstr$. Is this correct? If so, perhaps it is worth mentioning? maybe even outside of the proof}
%\yotam{I believe we should change the semantics of local variables to generate existentials and not new constant declarations, otherwise it's not formally a sound instrumentation. And then I suggest to go back to the previous phrasing.}
%\sharon{I don't understand why go back to previous phrasing. It had the constants which don't make sense at all}
%The code added by a local instantiation for $\vect$ that uses new
%variables $\vecc$ translates to a new constraint in $\deltainstr$ of the
%form $\gamma = \exists \vecc. \;
%r(\vect) \rightarrow \varphi(\vect, \vecc)$, \sharon{won't it actually be
%$\gamma = \exists \vecy. \;
%r(\vect) \rightarrow (\varphi(\vect, \vecy) \wedge \varphi(\vect, \vecc))$} through the semantics of havoc and $\bassume$ (see \Cref{Fi:RMLWP}; recalling that the transition relation is obtained by negation).
%\yotam{Sharon, can you review previous sentence?}
%Since $(\forall \vecx. \;
%r(\vecx) \leftrightarrow \psi(\vecx)) \rightarrow \gamma$ is valid, we have
%$((\forall \vecx. \;
%r(\vecx) \leftrightarrow \psi(\vecx)) \sland \delta) \rightarrow \deltainstr$ \sharon{where are the parenthesis here?} is valid,
%which implies the condition of \Cref{def:sound-instrumentation}.
%\sharon{how does $c$ disappear? it's a mess since the vocabulary of $\delta$ and $\deltainstr$ is different. Do we really mean validity?}
\end{proof}

\begin{remark}
The combination of adding $\forall \vecx. \;\psi(\vecx) \rightarrow
r(\vecx)$ to the verification condition and allowing the user to
perform local instantiations on the program variables is at least as powerful as rewriting
program conditions, since any rewrite of $\psi(\vect)$ to $r(\vect)$
can be simulated by a local instantiation on $\vect$.
\end{remark}

%\ultpara{Skolem constants from the Refuted Invariant}
\ultpara{Instantiations for the Invariant}
%% Intuitively, the mechanism of local instantiations is designed to
%% support instantiations of
%% $\forall \vec{x}. \;\psi(\vec{x}) \leftrightarrow r(\vec{x})$ required
%% to prove the unsatisfiability of
%% $\sk{I} \sland \sk{\delta} \sland \sk{(\neg I')}$.  However, if the
%% instrumentation is only applied on existing program variables, it
%% restricts the instantiations to constants from $\sk{\delta}$, and does
%% not consider the fresh Skolem constants $\constref$ introduced by
%% $\sk{(\neg I')}$ (and possibly $\sk{I'}$). $\constref$ represent the
%% elements of the domain on which the invariant does not hold in the
%% post-state, and their instantiations may be needed to prove that the
%% invariant is inductive.  In order to allow the same power with
%% instrumentation, we allow to explicitly provide constants for the
%% Naming Instrumentation.

The mechanism of local instantiations is designed to support
instantiations of $\forall \vecx. \;\psi(\vecx) \leftrightarrow
r(\vecx)$ required to prove that the invariant $I = \Ih[\psi / r]$
is preserved by the program. This proof is carried out by showing that
$\left(\forall \vecx. \;\psi(\vecx) \leftrightarrow
r(\vecx)\right) \land \Ih \land \delta \sland \neg (\Ih[\psi / r])'$ is
unsatisfiable.
%\sharon{removed a reference to the section on hybrid. Was it important?}% (see \Cref{sec:hybrid}).
This may require instantiating
the definition of $r$ on Skolem constants that come from the negation
of the invariant in the post-state. Thus, we extend our
instrumentation method by adding new ``program variables'' that
represent the elements of the domain on which the invariant is
potentially violated in the post-state.  For every existentially
quantified variable $x$ in $\neg \Ih$, we add a special program variable
$sk_x$ which can be used in local instantiations, enhancing their
power to prove that $\Ih$ is inductive.

To this end we formally use a slightly modified inductiveness check, termed \emph{Skolemization-aware inductiveness}.
In this check, existentially quantified variables can be shared between $\deltainstr$ and $\sk{(\neg \Ih)}$, facilitating local instantiations on Skolem constants coming from the negation of the invariant.
Formally, denote $\Ih = \forall \vec{x}. \ \theta(\vec{x})$ where $\theta(\vec{x}) \in \QF(\Sigmainstr)$ (recall that $\Ih \in \Univ(\Sigmainstr)$). Let $\vec{s}$ be free variables, intended to be shared  between $\deltainstr$ and $\sk{(\neg \Ih)}$.
Then $\vec{s}$ replace the existentially quantified variables $\vec{x}$ in $\neg \Ih$, and also replace the existentially quantified variables in $\deltainstr$  introduced by local instantiations performed on $\overline{sk}_x$, thus linking them.
%$\deltainstr$ can refer to variables from $\vec{s}$.
The inductiveness check now translates to the unsatisfiability of
\begin{equation}
\left(\forall \vec{x}. \ \theta(\vec{x})\right) \land \exists \vec{s}. \ \left(\deltainstr \land \neg\theta'(\vec{s})\right).
\end{equation}

%\sharon{can we add here (or maybe before the previous paragraph) a lemma that will summarize how local instantiations simulate bound 1? is it actually what remark 5.3 is trying to say? then we can construct a proof sketch for the new lemma that you added by referring to this lemma. I am confused since the base case doesn't seem to reduce the bound.}

\ultpara{Obtaining Deep Instantiations}
Applying local instrumentation on a tuple $\vect$ that consists of
original program variables, or variables that represent Skolem
constants, corresponds to instantiations of Bounded-Horizon with bound
1. However, once a local instantiation is performed, new program
variables $\vecc$ are added. Performing a local instantiation on these
new variables now corresponds to instantiation of bound 2. By
iteratively applying local instantiations, where each iteration adds
new program variables (corresponding to existential quantifiers in the transition relation), we can thus simulate quantifier instantiations of
arbitrary depth.
%\sharon{here too I would mention that these new constants disappear when the TR is constructed}

\bigskip
Overall, the procedure leads to the following claim:
\begin{lemma}
Let $I \in \AE(\Sigma)$ be an inductive invariant for $\delta \in \EPR(\Sigma \uplus \Sigma')$, provable using Bounded-Horizon of bound $k$. Let $\psi \in \E(\Sigma)$ be its existential sub-formula, i.e., $I = \forall \vecx.\ \psi(\vecx)$.
Then it is possible to construct $\deltainstr \in \EPR(\instr{\Sigma} \uplus \instr{\Sigma}')$ where $\instr{\Sigma} = \Sigma \uplus \instr{S} \uplus \{r\}$, $\instr{S}$ being the Skolem constants from $\sk{(\neg I)}$ and $r$ being a fresh relation symbol, such that
    \begin{itemize}

        \item $\deltainstr$ is a sound instrumentation of $\delta$ and $\psi$, and

        \item $\instr{I} = \left(\forall \vecx.\ r(\vecx)\right) \land \left(\forall \vecx. \ \psi(\vecx) \rightarrow r(\vec{x})\right)$ is an inductive invariant for $\deltainstr$ with a Skolemization aware check.
        %\begin{equation}
        %\Ih \land \exists \vec{s}. \ \left(\deltainstr \land I^{*}(\vec{s})\right)
        %\end{equation}
        %is unsatisfiable, where $I^{*}(\vec{s}) = \neg r(\vec{s}) \lor \neg(\forall \vec{x}. \ \psi'(\vec{x}) \rightarrow r'(\vec{x}))$ is a specific (partial) Skolemization of $\neg \Ih'$.
        %Note that $\Ih \in \Univ(\Sigma \uplus \vec{s})$ because $\psi \in \exists^*(\Sigma)$.

\end{itemize}
\end{lemma}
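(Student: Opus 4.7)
The strategy is to compile a successful bound-$k$ instantiation proof of $I$ into a finite sequence of local instantiations (in the sense of \Cref{def:local-instantiation}) that, together with the universal invariant $\Ih$, establishes inductiveness under the Skolemization-aware check. Write $\psi(\vec{x}) = \exists \vec{y}.\ \varphi(\vec{x},\vec{y})$, so $\sk{I} = \forall \vec{x}.\ \varphi(\vec{x},\vec{f}(\vec{x}))$ for fresh Skolem functions $\vec{f}$. Since $\delta\in\EPR$ and $\neg I'\in\exists^*$, the Skolemizations $\sk{\delta}$ and $\sk{(\neg I')}$ introduce only constants (the latter being exactly $\instr{S}$). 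Success of Bounded-Horizon of bound $k$ therefore yields a finite set $T\subseteq\bhterms{k-1}$ of instantiating tuples, built from $\consts{\sk{\delta}}\cup\instr{S}$ with at most $k-1$ nestings of $\vec{f}$, such that $\bigwedge_{\vec{t}\in T}\varphi(\vec{t},\vec{f}(\vec{t}))\sland\sk{\delta}\sland\sk{(\neg I')}$ is unsatisfiable.

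\textbf{Construction of $\deltainstr$.} Enumerate $T$ in non-decreasing order of term-depth. For each $\vec{t}\in T$, prepend to (some action of) $\delta$ a local instantiation $\textbf{local}~\vec{c}_{\vec{t}}:=*;\ \ASSUME\ r(\vec{t})\rightarrow\varphi(\vec{t},\vec{c}_{\vec{t}})$, where the components of $\vec{t}$ are understood as follows: depth-$0$ entries are either original constants of $\delta$ (already program variables) or elements of $\instr{S}$ (exposed as special program variables $sk_x$ by the Skolemization-aware check); any deeper sub-application $\vec{f}(\vec{t}')$ appearing inside $\vec{t}$ has already been represented by a fresh program variable $\vec{c}_{\vec{t}'}$ introduced when $\vec{t}'$, which is strictly shallower, was processed. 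By \Cref{lem:local-instantiation}, each step preserves soundness, so $\deltainstr$ is a sound instrumentation of $\delta$ and $\psi$. Moreover, $\deltainstr\in\EPR$: the added commands yield quantifier-free assume-constraints modulo fresh variables, which translate to a single block of existentials prepended to a universal matrix in the transition relation.

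\textbf{Inductiveness of $\Ih$.} Under the first conjunct $\forall \vec{x}.\ r(\vec{x})$ of $\Ih$, every guard $r(\vec{t})$ inserted above is trivially true, so each $\ASSUME$ collapses to the conjunct $\varphi(\vec{t},\vec{c}_{\vec{t}})$ in the transition relation. In the Skolemization-aware check, the shared existentials in $\deltainstr\sland\neg\theta'(\vec{s})$ are instantiated by the $sk_x$-variables that also witness the post-state violation, so the check reduces precisely to the satisfiability of $\bigwedge_{\vec{t}\in T}\varphi(\vec{t},\vec{c}_{\vec{t}})\sland\sk{\delta}\sland\sk{(\neg I')}$, with each $\vec{c}_{\vec{t}}$ syntactically playing the role of $\vec{f}(\vec{t})$. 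This is the bound-$k$ instantiation set assumed to be unsatisfiable; hence $\Ih$ is inductive for $\deltainstr$. The second conjunct $\forall \vec{x}.\ \psi(\vec{x})\rightarrow r(\vec{x})$ is an $\AE\to\forall$ formula that is trivially preserved because $\Ih$ forces $r$ to hold universally in the post-state as well.

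\textbf{Main obstacle.} The principal subtlety is the depth-stratified construction: we must guarantee that every Skolem-function sub-application $\vec{f}(\vec{t}')$ occurring inside a $\vec{t}\in T$ has already been materialized as a program variable $\vec{c}_{\vec{t}'}$ in scope at the point where the local instantiation for $\vec{t}$ is inserted. Ordering $T$ by depth handles the temporal dependency, and composing all inserted local instantiations at the head of a single action (together with the Skolemization-aware treatment of $\instr{S}$) handles scoping; establishing that the resulting syntactic match between $\vec{c}_{\vec{t}}$ and $\vec{f}(\vec{t})$ yields an \emph{exact} correspondence of the two satisfiability problems is where the proof does its real work.
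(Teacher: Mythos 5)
Your construction of $\deltainstr$ by depth-stratified local instantiations matches the paper's, but there is a genuine gap in the inductiveness argument: nothing in your $\deltainstr$ constrains the post-state relation $r'$. The local instantiations only add pre-state constraints of the form $r(\vec{t}) \to \varphi(\vec{t},\vec{c}_{\vec{t}})$, and $\delta$ itself does not mention $r$ or $r'$. Since $\neg\Ih'$ is equivalent to $\exists \vec{x}.\ \neg r'(\vec{x})$, any transition of $\delta$ out of a state satisfying $I$, extended by interpreting $r$ as total in the pre-state (so that $\Ih$ holds and the inserted $\ASSUME$s are dischargeable by witnesses for $\psi$) and $r'$ as false on one arbitrary tuple in the post-state, satisfies $\Ih \sland \deltainstr \sland \neg\Ih'$ --- so $\Ih$ is simply not inductive for the $\deltainstr$ you build, no matter how the bounded instantiations went. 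The paper closes this hole by additionally conjoining to $\deltainstr$ the (universal, hence EPR-compatible) clause $\forall \vec{x}.\ \psi'(\vec{x}) \rightarrow r'(\vec{x})$; this is what lets one pass from $\neg r'(\vec{s})$, which is all that $\neg\Ih'$ supplies, to $\neg\psi'(\vec{s})$, i.e.\ to $\sk{(\neg I')}$ --- exactly the step your phrase ``the check reduces precisely to the satisfiability of $\ldots \sland \sk{(\neg I')}$'' glosses over. Soundness of this extra clause under \Cref{def:sound-instrumentation} is immediate, since it becomes a tautology under the substitution $[\psi'/r']$.

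Relatedly, your closing sentence --- that the second conjunct $\forall\vec{x}.\ \psi(\vec{x}) \rightarrow r(\vec{x})$ is ``trivially preserved because $\Ih$ forces $r$ to hold universally in the post-state as well'' --- is circular: $\Ih$ constrains only the pre-state, and whether $r'$ holds universally in the post-state is precisely what the inductiveness check must establish (and what a counterexample to induction denies). Once the missing clause is added, the remainder of your argument (collapsing the guards $r(\vec{t})$ using $\forall\vec{x}.\ r(\vec{x})$, and matching each $\vec{c}_{\vec{t}}$ with $\vec{f}(\vec{t})$ by interpreting the Skolem functions according to the hierarchy of introduced variables, which the paper carries out via a Herbrand-model argument) does go through.
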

\begin{proof}[Proof (sketch)]
Construct $\deltainstr$ by performing local instantiations, iteratively constructing variables corresponding to the terms used by the instantiations of $I = \forall x. \ \psi(x)$ required to prove that $\sk{I} \land \sk{\delta} \land \sk{(\neg I')}$ is unsatisfiable. We translate instantiations for the Skolem constants from $\sk{(\neg I')}$ to the corresponding Skolem constants of $\sk{(\neg \Ih)}$: each existential quantifier in $\neg I'$ comes from a universal quantifier in $I$, which is also present in $\forall \vec{x}. \ r(\vec{x})$ which is part of $\Ih$.
Instantiations of higher depth are obtained through local instantiation over the terms corresponding to the base terms, i.e.\ the instantiation for a term $f(t_1,\ldots,t_n)$ is simulated by a local instantiation over variables $c_{t_1},\ldots,c_{t_n}$ which were introduced for the instantiation of $t_1,\ldots,t_n$. We also conjoin to $\deltainstr$ the clause $\forall \vec{x}. \ \psi'(\vec{x}) \rightarrow r'(\vec{x})$.

The proof is by contradiction: assume that $\Ih \land \deltainstr \land \neg\Ih$ is satisfiable. Without loss of generality\footnote{
The theory of equality, which is universally quantified, can be conjoined to the verification conditions, and thus the rest of the reasoning can be performed in first-order logic without equality.
}, take a Herbrand model $\instr{\A}$ of this formula. From this we construct a model $\A$ satisfying the instantiations of $\sk{I} \land \sk{\delta} \land \sk{(\neg I')}$: the domain and interpretation of constants and relations (omitting $r,r'$) are without change. The key point is that the Skolem functions are interpreted according to the hierarchy between variables introduced through local instantiation.
(Note that $\sk{(\neg I)}$ is satisfied by $\A$, which reads $\A \models \sk{(\exists \vec{x}. \ \neg\psi'(\vec{x}))}$, because $\instr{A} \models \sk{(\exists \vec{x}. \ \neg r'(\vec{x}))}$ and $\instr{A} \models \forall \vec{x}. \ \neg r'(\vec{x}) \rightarrow \neg\psi'(\vec{x})$ (the latter is part of $\Ih$ and guaranteed to hold through $\deltainstr$).)
\yotam{Sharon, how incomprehensible is this?\ldots}
\end{proof}

\ultpara{Illustrating Example}

\begin{figure}[t]
%\scriptsize
%\footnotesize
%\small
 \[
   \begin{array}{l}
     \DERIVED{r_1(x, y)}{\exists z. \; \mReq(x, z) \land \mDB(z,y)}
     \\
     \DERIVED{r_2(x, y)}{\exists z. \; \mT(x, z) \land \mReq(z,y)}
     \\
     \DERIVED{r_3(x, y)}{\exists z. \; \mDBReq(x, z) \land \mDB(z, y)}
     \\
     \INSTRCOMMENT
     \INVARIANT
     \Ih =
     \forall \mvU, \mvP. \;
     \mResp(\mvU, \mvP) \to r_1(\mvU, \mvP) \sland \\
     \INSTRCOMMENT
     %
     %\hspace{1.55cm}
     \hspace{1cm}
     \forall \mvID, \mvR. \;
     \mDBReq(\mvID, \mvR) \to r_2(\mvID, \mvR) \sland \\
     \INSTRCOMMENT
     \hspace{1cm}
     \forall \mvID, \mvP. \;
     \mDBResp(\mvID, \mvP) \to r_3(\mvID, \mvP) \sland \\
     \INSTRCOMMENT
     \hspace{1cm}
     \forall \mvID, \mvU_1, \mvU_2. \;
     \mT(\mvID, \mvU_1) \land \mT(\mvID, \mvU_2) \to \mvU_1 = \mvU_2
     \\
   \ACTION{server\_process\_db\_response}{\mvID, \mvP} \\
     \INDENT
     \COMMENT{instantiate $r_3$ on $(\mvID, sk_p)$ (depth 1)} \\ % chktex 36
     \INDENT\INSTRCOMMENT
     c_1 := * \CSEP \\
     \INDENT\INSTRCOMMENT
     \ASSUMEM{r_3(\mvID, sk_p) \to \mDBReq(i,c_1) \land \mDB(c_1,sk_p)} \CSEP \\
     \INDENT
     \COMMENT{instantiate $r_2$ on $(\mvID, c_1)$ (depth 2)} \\ % chktex 36
     \INDENT\INSTRCOMMENT
     c_2 := * \CSEP \\
     \INDENT\INSTRCOMMENT
     \ASSUMEM{r_2(\mvID,c_1) \to \mT(\mvID,c_2) \land \mReq(c_2,c_1)} \CSEP \\
     \INDENT
     \ldots \\
     \ENDACTION \\

   \ACTION{check}{\mvU, \mvP} \\
     \INDENT
     \COMMENT{instantiate $r_1$ on $(\mvU, \mvP)$ (depth 1)} \\ % chktex 36
     \INDENT\INSTRCOMMENT
     c_3 := * \CSEP \\
     \INDENT\INSTRCOMMENT
     \ASSUMEM{r_1(\mvU, \mvP) \to \mReq(\mvU,c_3) \land \mDB(c_3,\mvP)} \CSEP \\
     \INDENT
     \ldots \\
     \ENDACTION
   \end{array}
 \]
 \caption{An illustration of instrumentation by local instantiations
   for the example of \Cref{fig:client-server-db}.  The
   instrumentation adds three instrumentation relations $r_1,r_2,r_3$,
   and performs three local instantiations in order to prove that the
   invariant is inductive. Note that an instantiation depth of 2 is
   used, in accordance with the fact that the original invariant is
   provable using bound 2 but not bound 1. The complete model corresponding to
   this Figure appears in~\cite{additionalMaterials} (file \texttt{client\_server\_db\_instr.ivy}).}%
 \label{fig:client-server-db-instr}
\end{figure}

%%% Local Variables:
%%% mode: latex
%%% TeX-master: "main.tex"
%%% End:

\Cref{fig:client-server-db-instr}
illustrates the local instantiation procedure on the example
of \Cref{fig:client-server-db}.  Recall that the $\AE$ invariant $I$
of \Cref{fig:client-server-db} is not provable using Bounded-Horizon
of bound 1, but is provable using Bounded-Horizon of bound 2. The
instrumentation presented in \Cref{fig:client-server-db-instr}
introduces three instrumentation relations to encode the existential
parts of $I$, thereby producing the instrumented universal invariant
$\Ih$.

To prove the inductiveness of $\Ih$, we use local instantiations in
the actions \texttt{check} and \texttt{server\_process\_db\_response}.
In \texttt{server\_process\_db\_response},
we instantiate the definition of $r_3$ on $(\mvID, sk_p)$, and assign
the existential witness to $c_1$. Intentionally, $c_1$ gets the
request that was sent from the server to the DB that led to the
response $sk_p$ being sent from the DB to the server. $sk_p$ is the
response that supposedly causes a violation of the invariant when the
action \texttt{server\_process\_db\_response} is executed (the
instantiations are used to prove that a violation does not occur).
This instantiation is of depth 1. Next, we make an instantiation of
the definition of $r_2$ on $(\mvID,c_1)$ and obtain a new existential
witness $c_2$. The use of $c_1$ here makes this instantiation depth 2.
The \texttt{check} action includes another instantiation of depth 1,
which is simply used to prove that the $\ABORT$ cannot happen
(similarly to rewriting a program condition).

The reader can observe that the local instantiations introduced during
the instrumentation process closely correspond to the instantiations
required to prove the original $\AE$ invariant $I$
(see \Cref{sec:bounded-horizon}).

It is important to note that the process of instrumentation by local instantiations
discussed here is different in spirit from those
of \Cref{sec:bounded-horizon-instrumentation}: instrumentation by
local instantiation consists of almost nothing but adding existential
quantifiers to the transition relation, as opposed to the condition
in \Cref{instrumentationNoNewExists} where we do not allow the
instrumentation to add new existential quantifiers.

\section{Partial Models for Understanding Non-Inductiveness}%
%\section{Partial Models for Increased Usability of Deductive Verification}
\label{sec:partial-models}

%\sharon{For a better flow, I tried to rephrase and not talk here about VC, but about inductiveness.  The previous version taht talks about verification conditions is in partial-model2.tex and can maybe  go to intro}

%\yotam{Do we need to stress the fact that the user constructs both the program and the invariant together? Is it always true? Does it matter? Probably cite Dafny}
When conducting SMT-based deductive verification (e.g., using Dafny~\cite{dafny}), the user constructs both the formal representation of the system and its invariants.
In many cases, the invariant $I$ is initially  not inductive w.r.t.\ the given program, due to a bug in the program or in the invariant. Therefore, deductive verification is typically an iterative process in which the user attempts to prove inductiveness, and when this fails the user adapts the program, the invariant, or both.
%During the process of deductive verification using an SMT solver, the user develops both the program and the invariants, and discharges multiple queries checking validity of the verification condintions.
%In many of those queries, the verification conditions are not valid, due to a bug in the program or in the invariant.

In such scenarios, it is extremely desirable to present the user with a \emph{counterexample to induction} in the form of a state that satisfies $I$ but makes a transition to a state that violates it. Such a state can be obtained from a model of the formula $\inducformula = I \wedge \delta \wedge \neg I'$ which is used to check inductiveness. It explains the error, and guides the user towards fixing the program and/or the invariant~\cite{dafny,DBLP:conf/pldi/FlanaganLLNSS02}.
However, in many cases where the check involves quantifier alternation, current SMT solvers are unable to produce counterexamples.
Instead, SMT solvers usually diverge or report ``unknown''~\cite{ge2009complete,ReynoldsCAV13}.
In such cases, Bounded-Horizon instantiations can be used to present a concrete logical structure which is comprehensible to the user, obtained as a model of the (finite) instantiations of the formula $\inducformula$. % that encodes violation of inductiveness. % (see \Cref{sec:bounded-horizon}).
%invalidity of the verification condition (see \Cref{sec:bounded-horizon}).
%which is comprehensible to the user.
While this structure is not a true counterexample (as it is only a model of a subset of the instantiations of the formula), it can still guide the user in the right direction towards fixing the program and/or the invariant.
We illustrate this using two examples.

\begin{figure}[t]
\begin{center}
\begin{tabular}{p{0.45\textwidth} p{0.45\textwidth}}
\begin{math}
   \begin{array}{l}
   \texttt{relation} \ \mPending(m,n)
   \\
  \texttt{init} \ \forall m,m. \, \neg\mPending(m,n)
  \\
   \ldots \; \COMMENT{ring topology} \\
   \\
   \ACTION{send\_packet}{\mvN} \\
     \INDENT
     \ASSUMEM{\mRingNext}{(\mvN, \mvM)} \\
     \INDENT
     \INSERTM{\mPending}{(\mvN, \mvM)} \\
     \INDENT
     \INSERTM{\mSent}{\mvN} \\
   \ENDACTION
   \end{array}
\end{math}
&
  \begin{math}
  \begin{array}{l}
   \\
   \ACTION{receive\_packet}{\mvN, \mvM} \\
     \INDENT
     \ASSUMEM{\mPending(\mvM, \mvN)} \\
     \INDENT
     \REMOVEM{\mPending}{(\mvM, \mvN)} \\
     \INDENT
     \IF \; \mvM = \mvN \; \THEN\\
     \INDENT \INDENT
	  \INSERTM{\mLeader}{\mvN} \\
     \INDENT
     \ELSE \\
     \INDENT \INDENT
     \IF \; \mvN < \mvM \; \THEN \\
     \INDENT \INDENT \INDENT
     \ASSUMEM{\mRingNext}{(\mvN,\mvNext)} \\
     \INDENT \INDENT \INDENT
     \INSERTM{\mPending}{(\mvM, \mvNext)} \\
     \INDENT \INDENT
     \ELSE \quad \COMMENT{do not forward} \\
     \ENDACTION \\
   \end{array}
\end{math}
\end{tabular}
\end{center}
\caption{
  A sketch of the leader election protocol discussed in this section (the complete program appears in~\cite{additionalMaterials}, file \texttt{ring\_leader\_termination.ivy}).
}%
\label{fig:leader-code}
\end{figure}
%%% Local Variables:
%%% mode: latex
%%% TeX-master: "main.tex"
%%% End:

\begin{figure}[t]
\centering
\begin{subfigure}[b]{0.30\textwidth}
    \centering
    \includegraphics[trim={10pt 3pt 10pt 3pt},clip,scale=0.4]{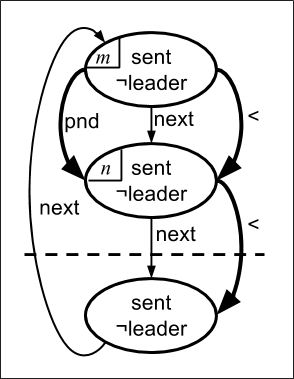}
    \caption{}\label{tacas-leader-1}
\end{subfigure}
\begin{subfigure}[b]{0.30\textwidth}
    \centering
    \includegraphics[trim={10pt 3pt 10pt 3pt},clip,scale=0.4]{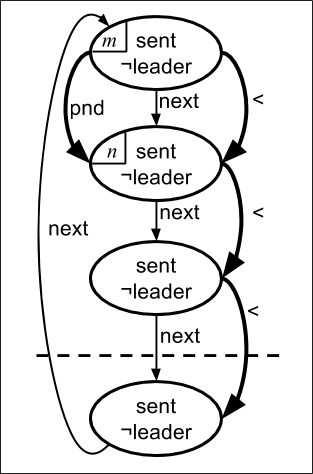}
    \caption{}\label{tacas-leader-2}
\end{subfigure}
\begin{subfigure}[b]{0.30\textwidth}
    \centering
    \includegraphics[trim={10pt 3pt 10pt 3pt},clip,scale=0.4]{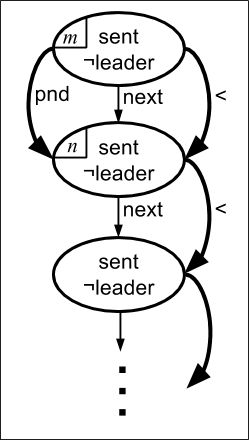}
    \caption{}\label{tacas-leader-3}
\end{subfigure}
\caption{\label{fig:leader-partial-models}
Leader-election in a ring protocol as an illustration of the use of partial models for incorrect programs and invariants.
(\textsc{a}), (\textsc{b}) show partial models of bound 1 and 2, respectively, and (\textsc{c}) illustrates an infinite structure that explains the root cause of the non-inductiveness.
 }
\end{figure}
%%% Local Variables:
%%% mode: latex
%%% TeX-master: "main.tex"
%%% End:

%We illustrate this using the leader-election example \yotam{Probably can be omitted: (mentioned in \Cref{sec:examples})}\sharon{ Didn't understand what you suggest to remove}, where such structures assisted us in verifying the protocol.
%\sharon{if we first have the "successful examples", I would make the connection here and say: "using the leader-election example reported in \Cref{}." and then add the following sentence  "We first describe the protocol in more detail." and slightly rephrase}
%We first describe the example in more detail.
%We illustrate this using two examples.

\subsection{Leader Election in a Ring} Our first example is a simple leader-election protocol in a ring~\cite{chang1979improved}, whose model is presented in \Cref{fig:leader-code}.
%\Cref{fig:leader-partial-models}(a) presents a model of a simple leader-election protocol in a ring~\cite{chang1979improved}.
The protocol assumes that nodes are organized in a directional ring topology with unique IDs, and elects the node with the highest ID as the leader.
Each node sends its own ID to its successor, and forwards messages when they contain an ID higher than its own ID\@.
A node that receives its own ID is elected as leader.
We wish to prove a termination property which states that once all nodes have sent their ID, and there are no pending messages in the network, there must be an elected leader.
To verify this we use a relational model of the protocol, similar to~\cite{pldi/PadonMPSS16}, and specify the property via the following formula:
%wish to prove a termination property which claims that once all nodes have sent their ID, and there are no pending messages in the network, then there is an elected leader.
%Formally, this property is given by the following formula:
\begin{equation}
\label{pm:le-term}
\left(\exists n. \ \mLeader(n)\right) \lor \left(\exists n_1,n_2. \ \neg\mSent(n_1) \lor \mPending(n_1,n_2)\right)
\end{equation}
A natural attempt of proving this using an inductive invariant is by conjoining \Cref{pm:le-term} (which is not inductive by itself) with the following property (this was the authors' actual next step in proving this termination property):
\begin{equation}
\label{pm:ae-inv}
%\forall n_1. \ \mSent(n_1) \land \neg\mLeader(n_1) \rightarrow \left(\exists n_2. \ \mIdgreater(n_2,n_1) \lor \mPending(n_1,n_2)\right)
\forall n_1. \ \mSent(n_1) \land \neg\mLeader(n_1) \rightarrow \left(\left(\exists n_2.\ \mPending(n_1,n_2)\right) \lor \left(\exists n_2.\ n_1 < n_2\right)\right)
\end{equation}
meaning that if a node has sent its own ID but has not (yet) become leader, then there is either a message pending in the network with the node's ID, or a node with a higher ID\@.

Alas, the conjunction of \Cref{pm:le-term,pm:ae-inv} is still not an inductive invariant for the protocol (as we explain below).
Since \Cref{pm:ae-inv} contains $\AE$ quantification, the associated %verification condition
inductiveness check is outside of the decidable EPR fragment.
Indeed, Z3 diverges when it is used to check $\inducformula$.
%The underlying reason for this is that the verification condition is valid on all finite structures, but it has an infinite counter model, which is not unusual for $\AE$ formulas.
This is not surprising since the formula has no satisfying finite structures, but has an infinite model (a scenario that is not unusual for $\AE$ formulas).
%\footnote{Since we use tools that check general (un)satisfiability, which is not limited to finite structures, we are only able to recognize that an invariant is inductive if it also has no infinite counterexamples to induction.}

On the other hand, applying Bounded-Horizon (with any bound) to $\inducformula$  %to this verification condition
results in a formula that has finite models. %(\Cref{lem:bh-sound}),
These concrete models %whose concrete finite model is a
are \emph{partial models} of $\inducformula$.
%the formula encoding the violation of the verification condition.
%
%leads to \emph{partial models}. These are concrete structures that comprise models
%of the formula obtained by the Bounded-Horizon instantiations.
Figs.~\ref{fig:leader-partial-models}(a) and (b) show  partial models (restricted to the pre-states) obtained with bounds of 1 and 2, respectively, on this example\footnote{Ivy~\cite{DBLP:conf/sas/McMillanP18}, a language and tool for the verification of distributed protocols, can visualize counterexamples with relations of arity up to 2~\cite{pldi/PadonMPSS16}. The dashed line marks the boundary between elements on which all $\AE$ clauses holds and those on which it does not. It is meaningful when the $\AE$ part consists of a single universal quantifier over the existential formula, and is readily computed automatically.}: they show (finite) rings in which all elements have advertised themselves (sent), none are considered leaders, and there is a message pending to $n$ with the ID of $m$, which $n$ would not forward as $n$'s ID is greater than $m$'s. Therefore, once the message is received there are no pending messages and no leader, which contradicts the safety property. % chktex 36

These models are not true counterexamples to induction: the sub-formula of \Cref{pm:ae-inv} residing under the universal quantifier does not hold for all the elements of the domain.
It does, however, hold for all elements with which the quantifier was instantiated, which are the elements above the dashed line in the figure. For these, there exists a node with a higher ID, which is mandated by the invariant of \Cref{pm:ae-inv} (unless corresponding messages are present).
%The elements above the dashed line are characterized by the fact that they have all sent their own ID, which was blocked by their successor that has a higher ID, so none of them is the leader.
Intuitively, these elements have all sent their own ID, which was blocked by their successor that has a higher ID, so none of them is the leader.
In a finite model, this has to end somewhere, because one of the nodes must have the highest ID\@. Hence, no finite counter-model exists. However, extrapolating from \Cref{fig:leader-partial-models}(a) and (b), % chktex 36
we can obtain the infinite model depicted in \Cref{fig:leader-partial-models}(c). This model represents an infinite (``open'') ring in which each node has a lower ID than its successor. % chktex 36
This model is a true
%counterexample to %the verification condition generated by
%the inductiveness of the invariant that consists of  \sharon{maybe bad phrasing. Should a CTI be a real state? if so, this is not a CTI.}
model of the formula $\inducformula$ generated by the invariant in \Cref{pm:le-term,pm:ae-inv}, but the fact that it is infinite prevented Z3 from producing it.
%\footnote{Since we check inductiveness by checking  general (un)satisfiability, which is not limited to finite structures, the only way to be certain that an invariant is inductive is to exclude infinite counterexamples to induction as well.}
 %\yotam{Not sure we can claim that this is the reason, no? Perhaps say that it therefore makes sense that it is hard for Z3 to find}.

\TODO{R1: turn the following into a paragraph trying to highlight the general methodology}
Since we use tools that check general (un)satisfiability, which is not limited to finite structures, the only way to prove that an invariant is inductive  is to exclude infinite counterexamples to induction as well. % chktex 36
Using Bounded-Horizon instantiations, we are able to obtain meaningful partial models that provide hints to the user of what is missing.
In this case, the solution is to add an axiom to the system model which states that there is a node with maximal ID:\@
$\exists n_1. \ \forall n_2. \ n_2 \leq n_1$. With this additional assumption, the %verification condition is valid,
formula $\inducformula$ is unsatisfiable so the invariant is inductive, and this is proven both by Z3's instantiation heuristics and by Bounded-Horizon with a bound of 1.
This illustrates the usefulness of Bounded-Horizon when the invariant is not inductive. %verification conditions are not valid.
%\sharon{the following is repetitive and redundant. Maybe move to intro?} In such cases, Bounded-Horizon lets us present meaningful concrete models to the user, instead of the usual ``unknown'' or ``timeout'' that are currently produced by state-of-the-are SMT solvers for these instances.

\subsection{Trusted Chain}
Another example which exhibits similar behavior is that of a \emph{trusted chain}.
In this example, nodes are organized in a linear total order between nodes $s$ and $t$.
A node is marked trusted when it receives a message. When a node processes a message it propagates the message to its successor.
The safety property requires that all nodes are trusted when there are no pending messages.
The invariant further states that if a node is trusted then its successor is trusted or has a pending message. Referring to the successor in the linear total order introduces $\AE$ quantification\footnote{This can be understood as saying ``\emph{for every} pair of nodes, either \emph{exists} a node between them or\ldots''.}, as well as that if a message is pending to a node after $s$ then $s$ is already trusted.
This invariant is \emph{not} inductive due to a counterexample with an \emph{infinite} chain of trusted nodes and an $\omega$ untrusted node (similar to the construction in \Cref{thm:undecidability-infinite}).
Partial models for this case would have the form of a finite chain in which all nodes are trusted apart from $t$ and nodes lying ``beyond the horizon''---meaning, $\AE$ invariant was not instantiated on them---including $t$'s predecessor.\footnote{Ivy seeks a counterexample over a domain of minimal cardinality~\cite{pldi/PadonMPSS16}, in which case $t$ and $t$'s predecessor would be the \emph{only} untrusted nodes.}
In this case, the invariant can be made inductive by adding an axiom expressing an induction principle of being trusted over the order (``if $s$ is trusted, and every successor of a trusted node is trusted, then all nodes are trusted''). This fix causes the invariant to become inductive, and this invariant is now provable using Bounded-Horizon of bound 1.

\bigskip
\yotam{Sharon, rephrased this}
Generally, using the Bounded Horizon technique with increasing bounds produces a sequence of partial counter-models. This is unlike the usual scenario upon the solver's divergence, where the user is typically provided with no evidence of what went wrong. By analyzing this sequence of partial counter-models, the user can extrapolate to an infinite counter-model, and identify a strengthening (e.g., an axiom) that will overrule it.
%Generally, the Bounded Horizon with increasing bounds is an approach to mitigate the non-termina\-tion of the solver, which provides the user with no evidence of what went wrong, by producing a sequence of partial counter-models through bounded instantiations with larger and larger bounds. It is then the role of the user to extrapolate from these partial models to an infinite counter-model and to identify a strengthening (e.g., an axiom) that will overrule it. \yotam{added}

%\sharon{What I find confusing is the fact that we really care about finite satisfiability and not general satisfiability (since every instance of the system is finite), so in this sense the invariant was inductive even before the fix. I think we somehow need to explain that
%Since we check inductiveness by checking  general (un)satisfiability, which is not limited to finite structures, the only way to be certain that an invariant is inductive is to exclude infinite counterexamples to induction as well.
%
%we use SMTs that check general satisfiability, and therefore we cannot distinguish between unsatisfiability and cases where satisfiability is only by means of infinite structures. So our only resort is to exclude infinite models as well. I am not sure where this should go. Maybe into a footnote in this example}

\commentout{

\ultpara{Meaning of Partial Models}

While incomplete, presenting these models to the user can help them to understand the problem: extrapolating on the elements shown in the structures, the ``blurry'' part of the domain \TODO{explain what this means} must have that $\mSent$ is true and that there are no pending messages, otherwise the property in \Cref{pm:le-term} would hold.
It therefore must hold that every node has a node with a higher id. \TODO{explain how to see in model}

This points in the direction of a true, infinite counterexample in which there is no node with a maximal id, and therefore no leader is elected by the protocol.
Note that the invariant does hold for finite structures, explaining the difficulty of automatically finding a true counterexample in this case.

Following this understanding the user may strengthen the invariant with the existential property, meaning that there is a leader with a maximal id in the network, produces an inductive invariant which proves the termination property.

In contrast, when the invariant is not inductive Bounded-Horizon always terminates, with the result that the invariant may not be inductive (\Cref{lem:bh-sound}). However, models of Bounded-Horizon instantiations may not be true models of the original formula, but \emph{partial-models}.
In a partial model the invariant does not hold for all elements of the domain, but does hold for (tuples of) elements pointed to by constants.
Intuitively, increasing the bound expands the ``zone of confidence'', on which the invariant holds, to wider and wider distances in the domain (measured by applications of the Skolem functions).
\yotam{Probably very confusing, problem with informality?}
\yotam{this doesn't really explain 1-alternation partial models, in which also the AE may hold in the post state although it shouldn't}

}

%\input{leader_termination_example}

%%% Local Variables:
%%% mode: latex
%%% TeX-master: "main.tex"
%%% End:

%\input{existence_low_bound_invariants}

\section{Implementation and Evaluation}%
%\section{Applications and Implementation}
\label{sec:implementation}
In this section, we describe our implementation of Bounded-Horizon of bound 1 and evaluate it on several correct and incorrect examples.
While discussing the examples, we note that certain types of ghost code that records properties of the history of an execution can be viewed as a special case of instrumentation, and demonstrate how the Bounded-Horizon technique circumvents the need for augmenting the program with such ghost code.

\subsection{Implementation}%
\label{sec:implementation-encoding}
We implemented a prototype of Bounded-Horizon of bound 1 on top of Z3~\cite{de2008z3} and used it within Ivy~\cite{DBLP:conf/sas/McMillanP18} and the framework of~\cite{CAV:IBINS13}.

Our implementation works by adding ``guards'' that restrict the range of universal quantifiers to the set of constants where necessary.
Technically, recall that we are considering the satisfiability of $\inducformula = \sk{I} \land \sk{\delta} \land \sk{(\neg I')}$.\footnote{
	Skolemization is performed via Z3, taking advantage of heuristics that reduce the number of different Skolem functions.}
 Let $\forall x. \ \theta$ be a subformula of $\inducformula$. If $\theta$ contains function symbol applications\footnote{
	This in fact implements the approximation as of \Cref{eq:bounded-horizon-approx}. The exact bound 1 per \Cref{eq:bounded-horizon-def} can be implemented by a more careful consideration of which universally quantified variables should be restricted, but this was not necessary for our examples.
}, we transform the subformula to
$\forall x. \ \bigl(\bigvee_{c}{x = c}\bigr) \rightarrow \theta$ where $c$ ranges over $\consts{\inducformula}$.
The resulting formula is then dispatched to the solver.
This is a simple way to encode a bound of 1 on the depth of instantiations performed (instead of trying instantiations of higher and higher depth),
while leaving room for the solver to perform the necessary instantiations cleverly.
The translation enlarges the formula by $O(\text{\#}\text{Consts} \cdot \text{\#}\forall)$ although the number of bounded instantiations grows exponentially with $\text{\#}\forall$.
The exponential explosion is due to combinations of constants in the instantiation, a problem we defer to the solver.
%quantifiers times num constants while num instantiations grows exponentially \yotam{this is just a sketch}
%Note that the number of instantiations grows exponentially with the number of universal quantifiers, but the translation handles every quantifier separately, and the task of coping with the large number of possible instantiations is delegated to the solver.
%Note that the translation does not need to consider combinations of constants even when there are multiple universal quantifiers.
%\iflong
%\else
%(Further details on the implementation are avaiable in the extended version~\cite{extendedVersion}).
%\fi

%Z3 terminates on the resulting formula although it contains universal quantifiers and function symbols.
%This is because Z3 generates instantiations iteratively. Depth 1 terms are subject to the same constraints in the formula as are the constants, so Z3 will not attempt to instantiate in a way that generates instantiations of higher depth~\cite{nikolajPersonal}.
Our encoding restricts the constants necessary to decide satisfiability of the formula while avoiding an exponential blowup in the translation. The task of exploring the space of possible instantiations is left for Z3's instantiation heuristics. It not immediate, however, that Z3 is guaranteed to terminate on the resultant formula, which syntactically does not fall into a decidable class (despite of encoding bounded instantiations). Fortunately, employing Model-Based Quantifier Instantiation~\cite{ge2009complete}, Z3 is guaranteed to terminate on this formula, as desired.
This is because during the Model-Based Quantifier Instantiation process every instantiation of a universal formula has the same truth value in the model as an instantiation using one of the existing ground terms
%(constants and then bound 1 terms).
(constants and then $\bhterms{1}$ terms).
%\yotam{Should include the parenthesis?}. \sharon{I think that yes}
%Z3's instantiation engine will produce instantiations using existing terms before creating new terms~\cite{nikolajPersonal}.
Z3's instantiation engine produces instantiations using existing terms rather than create superfluous new terms, and so must terminate on the formulas our procedure produces~\cite{nikolajPersonal}.

An alternative approach of implementation is to integrate the instantiation bound with the solver's heuristics more closely (see~\cite{decidinglocalCAV15}).
\yotam{Sharon, what do you think about this sentence?}
It would also be interesting to efficiently integrate this approach with resolution based first-order theorem-provers such as Vampire~\cite{DBLP:conf/cav/KovacsV13}.

\commentout{
	\subsection{Benchmarks}

	Before we describe the examples we used in our evaluation, we note that certain types of ghost code that records properties of the history of an execution can be viewed as a special case of instrumentation.
	%This implies that the guarantees of the Bounded-Horizon technique, formulated in \Cref{}, are applicable in such cases.
	Our examples include a program where this idea is applied.

	%We implemented a prototype of Bounded-Horizon instantiations on top of Z3~\cite{de2008z3} and used it within Ivy~\cite{pldi/PadonMPSS16}. We applied the procedure to the correct examples of \Cref{sec:examples} and the incorrect example of \Cref{sec:partial-models}. An initial evaluation of the method's performance appears in \Cref{tab:implementation-table}.
	%We implemented a prototype of Bounded-Horizon of bound 1 on top of Z3~\cite{de2008z3} and used it within Ivy~\cite{pldi/PadonMPSS16} and the framework of~\cite{CAV:IBINS13}.

	\TODO{move the subsection on derived relations over the history to repalce the paragraph after the list of examples, and re-augment it with the firewall example}
}

\subsection{Examples}
We applied the procedure to the incorrect examples of \Cref{sec:partial-models}, and also successfully verified several correct programs and invariants using bound 1. These examples are (the examples' code can be found in~\cite{additionalMaterials}):

\begin{itemize}
    \item The client-server example of \Cref{fig:client-server}.

    \item List reverse~\cite{CAV:IBINS13}, where the invariant states
      that the $n$ edges (``next'' pointers) are reversed.  The
      invariant is $\forall^* \exists^*$ due to the encoding of $n$
      via its (reflexive) transitive closure $\nstar$ as explained in~\cite{CAV:IBINS13}.

    \item Learning switch~\cite{DBLP:conf/pldi/BallBGIKSSV14}, in which the routing tables of the switches in the network are automatically constructed through observing the source packets arriving on each switch's links.
    The invariant states that in every routing path, every routing node has a successor.

    \item Hole-punching firewall~\cite{DBLP:conf/pldi/BallBGIKSSV14}, in which the firewall allows a packet from an external host to enter the network only if the host is recorded as trusted by the firewall.
      The invariant states that if a packet from an external host then is allowed, then there previously was an internal host that contacted the external host.
      We explored two modeling alternatives: using a ghost history relation, or existentially
      quantifying over time. We elaborate on this topic and this example below.

    \item Trusted chain, as explained in \Cref{sec:partial-models}.

    \item Leader election in a ring~\cite{chang1979improved,pldi/PadonMPSS16} with the
      invariant discussed in \Cref{sec:partial-models}. (See \Cref{sec:partial-models} for full details.)
\end{itemize}

%%% Local Variables:
%%% mode: latex
%%% TeX-master: "main.tex"
%%% End:

%\yotam{Note --- new paragraph. It is reasonable here or too sidetracking?}
%\ultpara{Ghost Code and Derived Relations Over the History}
%\ultpara{Ghost Code} % as Derived Relations Over the History}
\ultpara{Derived Relations Over the History}
\def\tnow{t_\text{now}}
\def\tsort{\textit{Time}}
\def\tbefore{\textit{before}}
Sometimes expressing verification conditions requires modifications to the program in which the state is augmented with additional relations, but these relations cannot be defined as derived relations over the core program relations.
Often the reason is that the property of interest depends not only on the current program state, but also on previous states in the execution history.
Such a scenario occurs in the example of the \emph{hole-punching firewall}.
In this example, a firewall controls packets entering and leaving the organization's network. Packets from outside the organization are allowed if they originate from a host that is considered trusted. A host $\textit{host}_{o}$ is considered trusted if some host $\textit{host}_{i}$ in the organization's network has previously sent a packet to $\textit{host}_{o}$.
This correctness condition depends on previous states in the execution history, and cannot be expressed in an inductive invariant without changing the vocabulary.

A common method to overcome this problem is to introduce \emph{ghost state} to record historical information, and \emph{ghost code} to mutate the ghost state, implementing this book-keeping. In the hole-punching firewall example, the user can add a relation $\textit{ever-pending}(\textit{host}_1,\textit{host}_2)$, and add ghost code that adds the tuple $(\textit{host}_1,\textit{host}_2)$ to $\textit{ever-pending}$ whenever $\textit{host}_1$ sends a packet to $\textit{host}_2$.

We observe that it is sometimes possible to think of such ghost relations as standard derived relations, defined over a vocabulary in which past states are available, and view the procedure of adding appropriate ghost code as an instrumentation by a derived relation as defined in this paper (\Cref{def:sound-instrumentation}). The idea is as follows:

In our first-order setting, we can lift the vocabulary to encode the time explicitly with a classical encoding (see e.g.~\cite{Abadi:1989:PTP:66447.66449}), by adding a new parameter $t$
%of sort $\tsort$
to every relation and constant symbol.
In the transition relation, references to $p(\cdot)$ are replaced by $p(\tnow, \cdot)$ and $p'(\cdot)$ by $p(\tnow', \cdot)$, where
$\tnow$ is a new constant that represents the current time. $\tnow$ is incremented in every transition (according to some total order on time).
The transition relation also needs to include the requirement that the previous states are not modified by the current transition, meaning that
$\forall t < \tnow. \ \forall \vec{x}. \ p(t,\vec{x}) \leftrightarrow p'(t,\vec{x})$.
Call this modified transition relation $\delta_t$.

%The example of the relation that encodes whether $\textit{host}_1$ has ever sent a packet to $\textit{host}_2$ can now be expressed by the derived relation
%$\psi(\textit{host}_1, \textit{host}_2) \equiv \exists t. \ t \leq \tnow \land \textit{pending}(t, \textit{host}_1, \textit{host}_2)$.
%When the derived relation --- now defined over the entire execution history --- can be expressed as a combination of universal and existential properties, the results of this paper apply, by viewing the ghost code as an instrumentation of $\delta_t$ and the invariant.
%If adding the ghost code does not add existential quantifiers, then Bounded-Horizon with a low-bound is guaranteed to prove the inductiveness of the invariant expressed over $\delta_t$ (see \Cref{univ:psi-af-bound-2}), without adding the ghost code.
%\TODO{cite references from Ernie Cohen}

We now proceed to express the ghost relation as a derived relation over the history.
In the example of the hole-punching firewall, the relation $\textit{ever-pending}$ can now be expressed by the derived relation
$\psi(\textit{host}_1, \textit{host}_2) \equiv \exists t. \ t \leq \tnow \land \textit{pending}(t, \textit{host}_1, \textit{host}_2)$.

With this construction, we can directly use the derived relation in the inductive invariant instead of the ghost relation.
Viewing the addition of the ghost relation and ghost code as a sound instrumentation,
\Cref{lem:original-sub-inductive} implies that an inductive invariant for the program with ghost state induces an inductive invariant for $\delta_t$. This invariant is defined over the vocabulary that records the entire history, but without the ghost relation symbol.

The value of this point of view is that the results of this paper imply that in certain cases the user can prove an invariant expressed over the history using bounded instantiations with a low bound, \emph{without resorting to ghost code manipulations}.
The reasoning is as follows:
If adding the ghost code does not add existential quantifiers, then augmenting the program with the ghost code can be thought of as instrumentation without additional existentials.
If, additionally, the derived relation --- now defined over the entire execution history --- can be expressed as a combination of universal and existential properties, \Cref{univ:psi-af-bound-2} applies, showing that Bounded-Horizon with a low-bound is guaranteed to prove the inductiveness of the invariant expressed for $\delta_t$, with no need to add the ghost code.

These conditions are satisfied by the hole-punching firewall example. We manually performed the transformation of the transition relation to a vocabulary over the history in Ivy, and successfully proved the inductive invariant with Bounded-Horizon of bound 1.

% Thus, assume a binary relation $\tbefore(t_1: \tsort, t_2: \tsort)$ axiomatized to be a total ordering.
% \begin{enumerate}
% 	\item Every relation $r(x_1, \ldots x_m)$ is transformed to $r(t, x_1, \ldots x_m)$.
% 	\item Every constant $c$ becomes a function $c(t)$. Note that it is not a problem because $c$ is a function from $\tsort$ to some other sort, so it is stratified. TODO explain
% 	\item $\delta_t$ is obtained from $\delta$ by replacing each reference to $r(x_1, \ldots, x_m)$ by
% \end{enumerate}
%%% Local Variables:
%%% mode: latex
%%% TeX-master: "main.tex"
%%% End:

\subsection{Evaluation}
\TODO{R1: don't call it ``initial''. Make it clearer that the goal of the evaluation is to show low overhead and not better performance}
In this section, we attempt to answer the following questions regarding the applicability of Bounded-Horizon of bound 1:
\begin{itemize}
	\item Is the method sufficiently \textbf{powerful} to prove correct inductive invariants of interesting programs? (\Cref{sec:eval-power})

	\item Can our implementation achieve quick \textbf{termination} (with a partial counterexample) when standard (complete) methods diverge due to an infinite counterexample? (\Cref{sec:eval-sat})

	\item Is the \textbf{overhead} associated with Bounded-Horizon on correct invariants reasonable compared to a baseline implementation (which does not bound instantiation a-priori)? (\Cref{sec:eval-overhead})
\end{itemize}

\noindent
\Cref{tab:implementation-table} compares the running time of our implementation of Bounded-Horizon of bound 1 with Z3 as a baseline. ``---'' means the solver did not terminate in a \timeout{} timeout.
We elaborate on each of the questions referring to the results of this table.

\noindent\makebox[\textwidth]{
\begin{threeparttable}[ht]
\centering
\footnotesize
\caption{Experimental results.\label{tab:implementation-table}}
\begin{tabular}{lcccccccccccc}
{Program} & {\#$\forall$} & {\#Func} & {\#Consts} & {\#$\forall^{\downarrow}$} & {B1 Total} & {B1 Solve} & {Baseline Z3}
%{Program} & {\#Lines} & {\#Func} & {Time} & {\shortstack{\#Sum.\\ (\#Tot)}} &
%{Max\ F.} & {\#Z3} & {\shortstack{Max\\ sum.}} & {D.} & {W.} & {N.}\\
\\
\toprule
Client-server & 14 & 1 & 15 & 2 & 58 ms & 3 ms & 3 ms
\\
List reverse & 47 & 3 & 15 & 4 & 319 ms & 211 ms & 50 ms
\\
Learning switch & 70 & 1 & 7 & 37 & 2004 ms & 83 ms & 91 ms
\\
Hole-punching firewall with ghost & 15 & 1 & 18 & 3 & 354 ms & 14 ms & 14 ms
\\
Hole-punching firewall $\exists$ time & 32 & 2 & 21 & 3 & 485 ms & 14 ms & 14 ms
\\
Trusted chain (correct) & 27 & 1 & 19 & 2 & 435 ms & 30 ms & 37 ms
\\
Leader-election in a ring (correct) & 41 & 1 & 21 & 1 & 517 ms & 33 ms & 47 ms
\\
\midrule
Trusted chain (incorrect) & 23 & 15 & 1 & 2 & 393 ms & 95 ms & --- % chktex 8
\\
Leader-election in a ring (incorrect) & 40 & 1 & 20 & 1 & 899 ms & 417 ms & --- % chktex 8
\\
\bottomrule
%simple\_loop & 11 & 1 & 66 & 4(15) & 7 & 706 & 7 & 0 & 1 & 1\\ % simple loop
%call\_outside\_loop & 17 & 2 & 89 & 8(20) & 7 & 1507 & 12 & 0 & 1& 1\\ % simple loop + function outside loop
%call\_in\_loop & 17 & 2 & 91 & 8(20) & 13 & 1816 & 10 & 0 & 1& 1\\ % loop -> function (no update)
%update\_call\_in\_loop & 18 & 2 & 142 & 8(20) & 13 & 2726 & 17 & 1 & 1 & 1\\ % loop -> function (update)
%call\_depth2\_in\_loop & 22 & 3 & 177 & 11(22) & 15 & 3433 & 17 & 2 & 1& 1\\ % loop -> function -> function (update)
%call\_depth2 & 22 & 3 & 97 & 5(22) & 14 & 1712 & 1 & 2 & 1 & 1\\ % function -> function (update)
%\hline
%\\
%sm & 53 & 6 & 514 & 8(27) & 11 & 5263 & 14 & 1 & 2 & 1\\
%div & 27 & 2 & 291 & 7(22) & 11 & 2767 & 15 & rec & 2 & rec\\
%worklist & 26 & 5 & 577 & 8(26) & 17 & 7475 & 41 & 3 & 2 & 1 \\
%map\_test & 40 & 3 & 306 & 10(24) & 14 & 4753 & 16 & 1 & 1 & 1\\
%c & 18 & 3 & 33 & 8(18) & 6 & 791 & 8 & 2 & 2 & 0\\
%flatten & 45 & 3 & 800 & 14(27) & 18 & 10412 & 39 & 2 & 1 & 2\\
%\hline
%\\
%c\_error & 18 & 3 & 35 & --- & 5 & 662 & --- & 2 & 2 & 0\\
%flatten\_error & 45 & 3 & 790 & --- & 15 & 8746 & --- & 2 & 1 & 2\\
%\hline
\end{tabular}
        \begin{tablenotes}
        	\item \textbf{B1 Total} is the time in milliseconds for the bound 1 implementation.
        	It is compared to \textbf{Baseline Z3} which is the solving time in milliseconds of $\inducformula$ as is (with quantifier alternation) by Z3.
        	``---'' means the solver did not terminate in a \timeout{} timeout.
        	\textbf{B1 Solve} measures the solving time of the formula restricted to bound 1, which demonstrates that most of the overhead occurs when constructing the formula.
        	\textbf{\#$\forall$} is the number of universal quantifiers in $\inducformula$, \textbf{\#Func} the number of different Skolem function symbols, and \textbf{\#Consts} the number of constants.
        	{\#$\forall^{\downarrow}$} is the number of universally quantified variables that were restricted in the bound 1 check.
        	Measurements were performed on a Linux 4.15.0 VM running on a 2.9GHz Intel i7-7500U CPU, except for client-server and list reverse which were measured on a 3.5GHz Intel i5-4690 CPU with 8GB RAM running Linux 3.13 x86\_64. % chktex 8 chktex 29
        	%\yotam{The reverse list example now also used push() and pop() on the solver like the rest. Still the only one with a significant gap from the baseline, don't know what's different in this system}
        \end{tablenotes}
\end{threeparttable}}

\bigskip

\para{Summary of Results.}
The results are encouraging because they suggest that the termination strategy of Bounded-Horizon, at least for bound 1, can be combined with existing instantiation techniques to ensure termination with only a slight performance penalty. Bounded-Horizon successfully terminating on incorrect examples with a partial example suggests that the Bounded-Horizon termination criterion may indeed be useful for ``sat'' instances on which the solver may diverge.

\subsubsection{Bound 1 Suffices for Interesting Programs and Invariants}%
\label{sec:eval-power}

Bound 1 successfully proves the invariant inductive in all correct examples list in \Cref{tab:implementation-table}. This demonstrates in practice the power of bounded instantiations shown theoretically in \Cref{sec:bounded-horizon-instrumentation}.

\subsubsection{Bound 1 Terminates on Incorrect Invariants}%
\label{sec:eval-sat}

In this section we consider the incorrect invariants of leader election termination in a ring and trusted chain, outlined in \Cref{sec:partial-models}.
On these examples, our encoding of bounded instantiations allowed Z3 to terminates in seconds.
In contrast, baseline Z3 does not converge in \timeout{} on either of these examples.
The same happens with the first-order theorem prover Vampire~\cite{DBLP:conf/cav/KovacsV13}, invoked on an SMTLIB2 translation of the verification condition.
This demonstrates that the encoding of bounded instantiations outlined in \Cref{sec:implementation-encoding} allows Z3 to terminate efficiently, although to do so the solver must in theory exhaust all possible bounded instantiations.

\subsubsection{Bound 1 Has Reasonable Overhead}%
\label{sec:eval-overhead}

In this section we consider the correct invariants in \Cref{tab:implementation-table}. Both Bounded-Horizon and baseline Z3 successfully prove these inductive correct.
Bounded-Horizon of bound 1 introduces an overhead in this computation; this is to be expected since the chief benefit of our approach is the termination guarantee on incorrect examples.
We deem this overhead a reasonable price to ensure termination (in general the user does not know in advance whether the invariant is inductive or not before querying the solver). Comparing the columns \textbf{B1 Total} and \textbf{B1 Solve} with \textbf{Baseline Z3} in \Cref{tab:implementation-table}, it is apparent that most of the overhead is due to the transformation of the formula before it is sent to the solver, which may be improved by further engineering.

\section{Related Work}%
\label{sec:related}

\para{Quantifier Instantiation}
The importance of formulas with quantifier-alternation for program
verification
%was noticed by many works. This
has led to many developments in the SMT and theorem-proving communities that aim to allow automated
reasoning with such formulas.
The Simplify system~\cite{simplifyJACM05} promoted the practical usage of quantifier
triggers, which let the user affect the quantifier instantiation procedure in a
clever way. %, and demonstrated its practical significance.
Similar methods are integrated into modern SMT solvers such as Z3~\cite{de2008z3}.
Recently, a method for annotating the source code with triggers has
been developed for Dafny~\cite{leinotriggerCAV16}.
The notion of instantiation depth is related to the notions of matching-depth~\cite{simplifyJACM05} and instantiation-level~\cite{DBLP:journals/amai/GeBT09} which are used for prioritization within the trigger-based instantiation procedure.

In addition to user-provided triggers, many automated heuristics for
quantifier instantiation have been developed, such as Model-Based Quantifier Instantiation~\cite{ge2009complete}.
Even when quantifier instantiation is refutation-complete, it is still important and challenging to handle
the SAT cases, which are especially important for program verification.
Accordingly, many
works (e.g.,~\cite{ReynoldsCAV13}) consider the problem of model finding.

Local Theory Extensions and Psi-Local Theories~\cite{DBLP:conf/cade/Sofronie-Stokkermans05,IhlemannJS08TACAS08,decidinglocalCAV15}
%attempt to
identify settings in which limited quantifier instantiations are complete.
They show that completeness is achieved exactly when every partial model can be extended to a (total) model.
In such settings Bounded-Horizon instantiations are complete for invariant checking.
%However, the approach of Bounded-Horizon instantiations can also be useful in
%situations where completeness cannot be guaranteed.
However, Bounded-Horizon can also be useful when completeness cannot be guaranteed.

Classes of SMT formulas that are decidable by complete instantiations have been studied by~\cite{ge2009complete}.
In the uninterpreted fragment, a refined version of Herbrand's Theorem generates a finite set of instantiations when the dependencies are stratified.
Bounded-Horizon is a way to bound unstratified dependencies.

Finally, first-order theorem provers, such as Vampire~\cite{DBLP:conf/cav/KovacsV13} and SPASS~\cite{DBLP:books/el/RV01/Weidenbach01}, employ other proof techniques such as resolution. These techniques are also prone to divergence on formulas with quantifier-alternation and in general unable to generate infinite counterexamples. \yotam{Sharon, (2) can you review?}

\para{Natural Proofs}
Natural proofs~\cite{Qiu0SM13} provide a sound and incomplete proof technique for deductive verification.
The key idea is to instantiate recursive definitions over the terms appearing in the program.
Bounded-Horizon is motivated by a similar intuition, but focuses on instantiating quantifiers in a way that is appropriate for the EPR setting.

\para{Decidable Logics} Different decidable logics can be used to check inductive invariants.
For example, Monadic second-order logic~\cite{DBLP:conf/tacas/HenriksenJJKPRS95} obtains decidability by limiting the underlying domain to consist of trees only, and in particular does not allow arbitrary relations, which are useful to describe properties of programs.
There are also many decidable fragments of first-order logic~\cite{borger2001classical}.
Our work aims to transcend the class of invariants checkable by a reduction to the decidable logic EPR\@.
We note that the example of \Cref{sec:partial-models} does not fall under the Loosely-Guarded Fragment of first-order logic~\cite{hodkinson2002loosely}
due to a use of a transitivity axiom, and does not enjoy the finite-model property.

\para{Abstractions for verification of infinite-state systems}
Our work is closely related to abstractions of infinite-state systems.
These abstractions aim at automatically inferring inductive invariants in a sound way.
We are interested in checking if a given invariant is inductive either for automatic and semi-automatic verification.

The View-Abstraction approach~\cite{DBLP:conf/vmcai/AbdullaHH13,abdulla2014block,abdulla2015parameterized}
defines a useful abstraction for the verification of parameterized
systems.
This abstraction is closely related to universally
quantified invariants. An extension of this approach~\cite{abdulla2014block} adds contexts to the abstraction, which
are used to capture $\AE$ invariants in a restricted setting where nodes have
finite-state and are only related by specific topologies.
Our work is
in line with the need to use $\AE$ invariants for verification, but
applies in a more general setting (with unrestricted high-arity
relations) at the cost of losing completeness of invariant checking.
%http://www.diva-portal.org/smash/get/diva2:846306/FULLTEXT01.pdf

Our work is related to the TVLA system~\cite{Lev-AmiS00,TOPLAS:SRW02} which allows the programmers
to define instrumentation relations.
TVLA also employs finite differencing to infer sound
update code for updating instrumentation relations~\cite{TOPLAS:RepsSL10}, but generates non-EPR formulas and does not guarantee completeness.
The focus operation in TVLA implements materialization which resembles quantifier-instantiation.
%\Cref{sec:bounded-horizon-instrumentation} suggests that one can avoid instrumentation for powerful enough focus operations.
%In practice, TVLA instrumentation relations are useful to incorporate domain knowledge about classes of programs.
%For example,
TVLA shows that very few built-in instrumentation relations can be used to verify many different programs.
\commentout{
The local instantiations in \Cref{sec:instrumentation-revisited} are somewhat analogous to
the materialization triggered by focus.
}

\para{Instrumentation and Update Formulas}
The idea of using instrumentation relations and generating update formulas is not limited to TVLA
and was also used for more predictable SMT verification~\cite{LahiriQ06,LahiriQ08}.

\para{Instrumentation for Decidable Logics}
The technique of instrumentation by derived relations to allow decidable reasoning is further discussed in a recent work~\cite{paxosEpr}. Several variants of Paxos are proved safe using models in the decidable logic of EPR with stratified functions, whose use is enabled by instrumentation.
Efficiently implementing a bounded instantiations scheme that bounds instantiations only where they are unstratified is an interesting challenge, in hope of relieving the need for instrumentation also for such cases.

%\iflong
%\TODO{Synthesis: ``I suppose, generating program instrumentations via instantiation can be seen as a form of synthesis.'' --- POPL reviewer 2}
%\fi
%\TODO{Ironfleet and invariant quantifier hiding}

%%% Local Variables:
%%% mode: latex
%%% TeX-master: "main.tex"
%%% End:

\section{Conclusion}%
\label{sec:conclusion}
% TODO: expand conclusion?
We have provided an initial study of the power of bounded instantiations for tackling quantifier alternation.
This paper shows that quantifier instantiation with small bounds can simulate instrumentation.
This is a step in order to eliminate the need for instrumenting the program, which can be error-prone.
%The other direction, i.e.\ simulating quantifier instantiation with instrumentation, is shown in \Cref{sec:appendix-instrumentation-revisited}.%because of its limited practical applicability.
\iflonglong%
The other direction, i.e.\ simulating quantifier instantiation with instrumentation, was also presented for conceptual purposes, although it is less appealing from a practical point of view.
\else
\iflong%
The other direction, i.e.\ simulating quantifier instantiation with instrumentation, is also possible but is less appealing from a practical point of view, and is presented in \Cref{sec:appendix-instrumentation-revisited}.%because of its limited practical applicability.
\else
The other direction, i.e.\ simulating quantifier instantiation with instrumentation, is also possible but is less appealing from a practical point of view, and is presented in the extended version~\cite{extendedVersion}.
\fi
\fi

We are encouraged by our initial experience that shows that various protocols can be proven with small instantiation bounds, and that partial models are useful for understanding the failures of the solver to prove inductiveness. Some of these failures correspond to non-inductive claims, especially those due to infinite counterexamples.
In the future we hope to leverage this in effective deductive verification tools, and explore meaningful ways to display infinite counterexamples to the user.
Other interesting directions include further investigation into the automation of program transformations for the purpose of verification (of which instrumentation is an example), including types of ghost code, and the use of Bounded-Horizon for automatically inferring invariants with quantifier-alternation.
%%% Local Variables:
%%% mode: latex
%%% TeX-master: "main.tex"
%%% End:

\para{Acknowledgments.}
We would like to thank Nikolaj Bj{\o}rner, Shachar Itzhaky, and Bryan Parno for helpful discussions, Gilit Zohar-Oren for help and feedback, and the anonymous referees whose comments have improved the paper.
The research leading to these results has received funding from the European Research Council
under the European Union's Seventh Framework Programme (FP7/2007--2013) / ERC grant agreement no [321174], and the European Union's Horizon 2020 research and innovation programme (grant agreement No [759102-SVIS]).
This research was partially supported by BSF grant no. 2012259, and by Len Blavatnik and the Blavatnik Family foundation,
the Blavatnik Interdisciplinary Cyber Research Center, Tel Aviv University,
the United States-Israel Binational Science Foundation (BSF) grants No.\ 2016260 and 2012259,
and the Israeli Science Foundation (ISF) grant No.\ 2005/17.
%%% Local Variables:
%%% mode: latex
%%% TeX-master: "main.tex"
%%% End:

%% This adds a line for the Bibliography in the Table of Contents.
\addcontentsline{toc}{chapter}{Bibliography}
%% *** Set the bibliography style. ***
%% (change according to your preference/requirements)
\bibliographystyle{alpha}
%% *** Set the bibliography file. ***
%% ("thesis.bib" by default; change as needed)
\bibliography{refs}

%% *** NOTE ***
%% If you don't use bibliography files, comment out the previous line
%% and use \begin{thebibliography}...\end{thebibliography}.  (In that
%% case, you should probably put the bibliography in a separate file and
%% `\include' or `\input' it here).

\clearpage
\appendix
\def\maxcol{\textit{max}}

\section{Undecidability}%
\label{sec:undecidability}

\TODO{R1: check potential confusion pointed out by R1}

For a universal formula $I \in \forall^*(\Sigma)$, the formula $I \land \delta \land \neg I'$ is in EPR (recall that $\delta$ is specified in EPR). Hence, checking inductiveness amounts to checking the unsatisfiability of an EPR formula, and is therefore decidable. The same holds for $I \in AF(\Sigma)$. However, this is no longer true when quantifier alternation is introduced. For example, checking inductiveness of $I \in \forall^* \exists^* (\Sigma)$ amounts to checking unsatisfiability of a formula in a fragment for which satisfiability is undecidable.
In this section we show that checking inductiveness of $\AE$ formulas is indeed undecidable, even when the transition relation is restricted to $\EPR$.
The undecidability of the problem justifies sound but incomplete algorithms for checking inductiveness, one of which is the Bounded-Horizon algorithm (defined in \Cref{sec:bounded-horizon}) which we study in this paper.

\ultpara{Finite and infinite structures}
We begin by showing that the problem is undecidable when structures, or program states, are assumed to be finite.
This is the intention in most application domains~\cite{DBLP:books/daglib/0095988} (including the examples in \Cref{sec:implementation}), especially when the program does not involve numerical computations.
Nevertheless, in this paper we mostly concern ourselves with the problem of checking inductiveness when structures may also be infinite.
This is because SMT-based deductive verification relies on proof techniques from standard first-order logic, whose semantics are defined over \emph{general} structures, i.e.\ both finite and infinite.
We thus establish an undecidability result for this setting as well.
It is interesting to note that the discrepancy between the intended finiteness of the domain and the proof techniques, which cannot incorporate this assumption, re-emerges in \Cref{sec:partial-models}.

We refer to \emph{inductiveness over finite structures} when the validity of $I \land \delta \rightarrow I'$ is considered over finite structures, and
to \emph{inductiveness over general structures} when it is considered over both finite and infinite structures.

\ultpara{Scope of the proofs.}
%Given a tiling problem we produce a transition relation with an invariant that is inductive iff there is/isn't a finite tiling (depending on the direction that is enumerable).
%The undecidability proofs of this section are by reductions from the halting problem, encoded via tiling.
The undecidability proofs of this section are by reductions from tiling problems.
%We briefly recall the definition of a tiling problem and the reduction from the halting problem to tiling, on which we will base the undecidability proofs.
Although technically it is also possible to prove the results by a trivial reduction from the satisfiability of $\AE$ formulas, since invariants for the transition relation $\true$ are necessarily either valid or unsatisfiable, we believe that the proofs presented here demonstrate the intuition behind the inherent difficulty of checking inductiveness of $\AE$ formulas in a more profound and robust way; the reduction using the transition relation $\true$ leaves more room for questions about decidability of the problem w.r.t.\ classes of transition relations more realistic and structured than $\true$. In contrast, the reductions we present here use transition systems (based on tiling problems) that we consider rather realistic, so the undecidability of checking $\AE$ invariants seems inherent rather than an artifact of just degenerate transition relations.

To further provide intuition, we prove the undecidability of the overarching problem of checking not only inductiveness of a candidate $I$, but also that $I$ holds initially and implies a given safety property, as is typically done when inductive invariants are used as a means to verify safety of a transition system.
%
% a closely related problem, that of checking \emph{inductive invariants for safety of transition systems}.
Formally, the problem of \emph{checking inductive invariants for safety of transition systems} is defined as follows. Given a transition relation $\delta$ (over $\Sigma \uplus \Sigma'$), a sentence $\Init$ (over $\Sigma$) describing the set of initial states, a sentence $\PhiProp$ (over $\Sigma$) describing the safety property, and a sentence $I$ (the candidate inductive invariant), the problem is to check whether $\Init \rightarrow I$ (initiation), $I \land \delta \rightarrow I'$ (consecution), and $I \rightarrow \PhiProp$ (safety) are valid (over finite or general structures).
%We refer to \emph{inductiveness over finite structures} when the validity in question is over finite structures, and \emph{inductiveness over general structures} when it is over both finite and infinite structures.
We will consider this problem when $I \in \AE$, $\Init \in \Univ$, $\PhiProp \in \Univ$ and our reductions will
%We will %further %restrict the problem to
%consider instances
generate instances where $\Init \rightarrow I$ and $I \rightarrow \PhiProp$ are valid (so it only remains to check whether $I \land \delta \rightarrow I'$ is valid).
%\sharon{changed to "under these restrictions"}\yotam{Changed to "with these restrictions"}
With these restrictions, the undecidability of the problem of checking inductive invariants for safety of transition systems over general structures implies the undecidability of the problem of checking inductiveness as used elsewhere in this paper.
%\sharon{I removed it since this means over general structures but the paragraph still talks about both cases. Can add back if you'd like}
%\yotam{added back because I want to stress that inductiveness over finite structures is discussed only in this section. Also added reduction from checking inductiveness \emph{over general structures}} % as used elsewhere in this paper.
%We refer to \emph{inductiveness over finite structures} when the validity in question is over finite structures, and \emph{inductiveness over general structures} when it is over both finite and infinite structures.
%The problem of checking inductiveness as we defined in \Cref{sec:preliminaries} can be reduced to this problem by taking $\Init = \PhiProp = \true$.
%This problem can be reduced to the problem of checking inductiveness as we defined in \Cref{sec:preliminaries} by checking inductiveness of the $\AE$ formula $(\Init \lor I) \land \PhiProp$.\footnote{
%	The reduction assumes that $\PhiProp$ is a conjunct in $I$; this holds for the invariants constructed in our proofs, and so they establish undecidability also with this restriction on $I$.
%}
%Thus, the undecidability of the problem of checking inductive invariants for safety of transition systems over general structures implies the undecidability of the problem of checking inductiveness as used elsewhere in this paper.

We now proceed to state the undecidability results and their proofs. We present different reductions for the problems of inductiveness over finite structures and over general structures.
Both reductions are from the halting problem,
albeit in opposite directions:
Over finite structures, the invariant is inductive iff there is no halting tiling iff the machine does not halt.
Over general structures, the invariant is inductive iff there is no infinite lower triangular tiling iff the machine halts.
This reflects Trakhtenbrot's theorem: over finite structures, model enumeration is possible but not proof enumeration, while over general structures the situation is reversed.

\subsection{Inductiveness Over Finite Structures}
\begin{theorem}%
\label{thm:undecidability-finite}
It is undecidable to check given $I \in \AE(\Sigma)$ and $\delta \in \EPR(\Sigma)$ whether $I$ is inductive for $\delta$ over finite structures.
\end{theorem}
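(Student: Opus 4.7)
The plan is to reduce from the halting problem of Turing machines on the empty input (a $\Sigma^0_1$-complete problem), using the classical tiling encoding: given a TM $M$, one constructs a finite tile set $T_M$ with horizontal/vertical adjacency constraints and a designated halt tile $t_H$ such that $M$ halts on the empty input iff there exists a finite valid tiling of some rectangular grid prefix (with fixed initial boundary) containing $t_H$. We will design an EPR transition relation $\delta$ and an $\AE$ candidate invariant $I$ over a relational vocabulary $\Sigma$ so that counterexamples to $I \land \delta \to I'$ over finite structures correspond exactly to such finite halting tilings; hence $I$ will be inductive over finite structures iff $M$ does not halt, which is $\Pi^0_1$-complete and therefore undecidable.

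For the construction itself, I would let structures represent (partial) tilings of a rectangular prefix of the computation grid. The vocabulary $\Sigma$ includes a unary predicate $T_\tau$ for each $\tau \in T_M$, binary relations modeling the row/column grid structure, and markers for the initial boundary. The transition relation $\delta$ encodes a nondeterministic one-row extension: existentially pick a fresh row of cells (possibly via new constants and existentially quantified variables for their tile assignments), then universally assert horizontal adjacency within the new row and vertical adjacency with the previous top row. This yields $\delta \in \exists^*\forall^*(\Sigma, \Sigma')$. The candidate invariant $I$ asserts (i) the tiling is globally adjacency-consistent, (ii) no cell carries $t_H$ (both universal conjuncts), and (iii) structural completeness conditions like ``every grid position carries some tile'' and ``there is a top row'', whose existentials place $I$ genuinely in $\AE$.

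Under this encoding, a finite counterexample to induction is a finite valid non-halted tiling of a grid prefix admitting a $\delta$-extension that falsifies $I$ on the successor state; since $\delta$ already enforces adjacency consistency of the extension, the only way $I$ can break is through appearance of $t_H$ in the new row. Thus finite counterexamples biject with finite halting computations of $M$, giving the reduction. Finally, to match the paper's stated framework one also provides $\Init \in \forall^*$ (the empty-tape initial boundary) and $\PhiProp \in \forall^*$ (``no cell carries $t_H$''), both trivially implied by $I$ by construction, so that the reduction simultaneously establishes undecidability of checking inductive invariants for safety.

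The main obstacle I anticipate is calibrating the quantifier structure while keeping the encoding faithful: $I$ must be in $\AE$ (not purely universal, so we must include genuine existentials such as ``every cell has some tile'' and rule out degenerate finite models where the grid is ill-formed), $\delta$ must stay in EPR with no function symbols, and spurious finite structures satisfying $I \land \delta \land \neg I'$ without encoding a real bounded TM computation must be excluded by adding universal well-formedness axioms (linearity of the row/column orders, functionality of the tiling assignment via $T_\tau \land T_{\tau'} \to \tau = \tau'$, etc.) — all of which remain universal and hence preserve $I \in \AE$ and $\delta \in \EPR$.
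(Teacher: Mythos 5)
Your overall strategy matches the paper's: a reduction from the halting problem via finite tilings, arranged so that $I$ is inductive over finite structures iff the machine does not halt, with initiation and safety made trivially valid. The gap is in where you place the existential quantifiers, and as written it is fatal. The conditions you designate as carrying the $\forall^*\exists^*$ structure do not actually do so: ``every grid position carries some tile'' is a finite disjunction $\bigvee_{\tau} T_\tau(i,j)$ over the fixed tile set and hence universal, and ``there is a top row'' is a plain $\exists^*$ sentence; meanwhile you explicitly classify the adjacency-consistency and no-halt-tile conditions as universal conjuncts. The resulting $I$ is alternation-free, so $I \land \delta \land \neg I'$ is (equivalent to) an EPR sentence. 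But EPR has the small-model property, so if a counterexample to induction existed at all, one would exist over a domain whose size is bounded by the number of constants and existentials in the formula --- i.e., your reduction would show that $M$ halts iff it halts within a fixed number of steps, which is absurd. Concretely, the reduction fails in the direction ``counterexample $\Rightarrow$ halting'': restricting a model to a small subdomain detaches the immediate-successor relation from the underlying linear order (a substructure of a linear order is still a linear order, but consecutive elements of the substructure need no longer be related by $\mathit{succ}$), so your universal well-formedness axioms are satisfied vacuously by ``gappy'' pseudo-tilings that encode no computation, and no universal axiom can exclude them.

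The missing idea --- which is exactly the heart of the paper's proof --- is a genuinely $\forall^*\exists^*$ propagation condition: every placed tile at a non-maximal active position \emph{has} a successor position in the grid order carrying a conforming tile. Existence of successors/predecessors in the order cannot be expressed universally, and over \emph{finite} structures this clause lets one argue by induction along the order that the tiled region is a contiguous, gap-free computation prefix growing from the designated initial configuration; this is simultaneously what makes the reduction sound and what pushes the instance outside the decidable alternation-free fragment. (The paper's other mechanical choices --- placing one tile per transition and tracking a ``maximal location'' marker, rather than appending a whole row --- are conveniences for keeping $\delta$ comfortably in EPR; your row-at-a-time variant could likely be made to work, but only after the invariant is repaired as above.)
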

The proof is based on a reduction from a variant of tiling problems. We start by defining the specific tiling problem used in the proof of this theorem:
\begin{defiC}[\cite{CSL:ImmermanRRSY04}]%
\label{def:tilability-halting}
A \emph{halting-tiling problem} consists of a finite set of tiles $T$ with designated tiles $T_{\text{start}}, T_{\text{halt}} \in T$, along with horizontal and vertical adjacency relations $\mathcal{H},\mathcal{V} \subseteq T \times T$.
A \emph{solution} to a halting-tiling problem is an arrangement of instances of the tiles in a finite rectangular grid (``board'') such that the tile $T_{\text{start}}$ appears in the top left position, the tileƒ $T_{\text{halt}}$ appears in the end of a row (the rightmost position in some row), and the adjacency relationships $\mathcal{H},\mathcal{V}$ are respected, meaning: if a tile $t_2$ appears immediately to the right of $t_1$ it must hold that $(t_1, t_2) \in \mathcal{H}$, and if a tile $t_2$ appears immediately below $t_1$ it must hold that $(t_1, t_2) \in \mathcal{V}$.
\end{defiC}
The problem is undecidable~\cite{CSL:ImmermanRRSY04}.
The proof is by a reduction from the halting problem: given a Turing machine we can compute a halting-tiling problem such that the problem has a solution iff the machine halts (on the empty input).
In the reduction, rows represent the tape of the Turing machine as it evolves over time (computation steps).
The tiles encode the location of the head and the current (control) state of the machine.
The horizontal and vertical constraints ensure that successive tiled rows correspond to a correct step of the machine, and the locality of constratins is possible by the locality of computation in a Turing machine. See~\cite{BGG} for further details.

\begin{proof}[Proof of \Cref{thm:undecidability-finite}]
%The proof is by a reduction from the non-halting problem expressed via tiling
The proof is by a reduction from non-tilability in the halting-tiling problem (\Cref{def:tilability-halting})
to the problem of checking inductive invariants for safety of a transition system over finite structures where the initiation and safety requirements are valid.
We think of the transition relation $\delta$ as incrementally placing tiles in a rectangle.

\ultpara{Vocabulary.}
To express locations on the board, we use a total order for both the horizontal and vertical dimensions of the board.
We add an immediate predecessor relation $j = i-1$ which is true if $j < i$ and there is no element of the order between $j,i$.
We use a constant $0$ for the minimal element of the order. These notions can be defined using a universally quantified formula.\footnote{ The assumption that there \emph{exists} a predecessor is left for the invariant to state explicitly when necessary, as this is the heart of the $\AE$ quantification in the proof.}
%In this proof we use a constant $\maxcol$ that is axiomatized to be the maximal element of the horizontal order.
A \emph{location} is a pair of elements of the order, a \emph{vertical} and \emph{horizontal} component. We sometimes use the term \emph{board order} to refer to the lexicographic order of pairs of elements of the order.

The transition system keeps track of the last tile placed on the board by a relation $M(i,j)$ which is true only for the last updated location. Since the placing of tiles occurs in a sequential manner we also call this board location \emph{maximal}, and a location \emph{active} if it comes before the maximal location in the board order.
The \emph{active area} is the set of active locations.

The state of tiles on the board is represented by a set of relations $\{T_k\}$, one for each tile type, encoding the locations on the board where a tile of type $T_k$ is placed.

In this proof we also use a constant $\maxcol$ to be an element of the total order, representing the width of the rectangle.

\ultpara{Transitions.}
%In every step the transition system places the next %(unique)
%tile that fits in the next location (encoding the next step of the Turing machine).
%A step begins with choosing the next %(unique)
%tile type $T_{\text{next}}$ to place in the next location.
In every step the transition system places a valid tile in the next board location.
The next board location is considered while respecting the width of the rectangle, moving to the next row if the horizontal component of the current tile is $\maxcol$.

Placing a tile of type $T_{\text{next}}$ on the board is done by an EPR update of the (two-vocabulary) form
\begin{equation}\begin{split}
\label{eqn:undecidability-tile-update}
&\forall i,j. \ M'(i,j) \leftrightarrow \left(\left(j = 0 \land M(i-1, \maxcol)\right) \lor \left(j \neq 0 \land j \leq \maxcol \land M(i,j-1)\right)\right)
\\
&\forall i,j. \ T'_{\text{next}}(i, j) \leftrightarrow \left( T_{\text{next}}(i,j) \lor M'(i,j) \right)
\\
&\forall i,j. \ T'_{k}(i, j) \leftrightarrow T_{k}(i, j) \qquad \forall T_k \neq T_{\text{next}}.
\end{split}\end{equation}
%As was mentioned in \Cref{subsec:undecide-tiling}, computing the next tile type, $T_{\text{next}}$, is expressible in $\EPR$ (corresponding to locality of computation).
The transition system nondeterministically chooses a tile $T_{\text{next}}$ type that respects the adjacency relationships.
These relationships are with the tiles in the board location preceding the current location in the horizontal and vertical components, expressible using the immediate predecessor relation and existential quantification on these predecessors. (Note that the existential quantifiers do not need to reside in the scope of universal quantifiers --- they depend only on the current location.)
Because the set of tile types $T$ is finite, expressing the allowed tile types given the two adjacent locations can be done by a quantifier-free formula.
Overall the EPR formula describing a step of the system consists of a disjunction between choices for $T_{\text{next}}$.
Each of these possible choices is described via a conjunction of the guard that makes sure that it is legal to place  $T_{\text{next}}$, and a corresponding update to the relation that is a conjunction of the formulas in \Cref{eqn:undecidability-tile-update}.
%with guards \sharon{guard makes it sound like implication which would turn the EPR to AE. Perhaps say: guards that are conjoined with the update} to make sure that it is legal to place it, each of these possible choices come with a corresponding update to the relation that is a conjunction of the formulas in \Cref{eqn:undecidability-tile-update}.

\ultpara{Initial state.}
Initially we only have $T_{\text{start}}$ placed in the upper-left corner, so $\forall i,j. \ T_{\text{start}}(i,j) \leftrightarrow \left( i=0 \land j=0 \right)$ and $\forall i,j. \ \neg T_k(i,j)$ for every other tile type $T_k$.

\ultpara{Safety property.}
The safety property states that the special tile $T_{\text{halt}}$,
%corresponding to a halting state of the Turing machine,
is not placed on the board in the end of a row (in a $\maxcol$ position) in the active area.

\ultpara{Invariant.}
The invariant states that in the active area we have a valid partial tiling. We require this by a $\AE$ formula saying that for every tile placed in an active location (except for the maximal location) there is a successor tile, placed in the next board location, that conforms to the (local) ajdacency relations.\footnote{
  We specify the requirement in this forward fashion, rather than requiring that every tile has a valid predecessor, in order to easily reuse this invariant in the proof of \Cref{thm:undecidability-infinite}.
}
%\TODO{formalize this to make sure this is forall exists and not forall exists forall}
%
We also conjoin the safety property to the invariant.

%The invariant further states that the special tile $T_{\text{halt}}$, corrseponding to a halting state of the Turing machine, is not placed in any active location on the board.
%We add this to the invariant as well (thus ensuring that the safety requirement is valid).

%\sharon{explain why initiation is also valid?}

%\sharon{better yet: define safety separately from invariant, before invariant. then in invariant explain that it is also conjoined with the safety, and explain why both initiation and safety are valid}

\ultpara{Reduction argument.}
The invariant holds for the initial state, and trivially implies the safety property.

%If the machine halts, a counterexample to induction can be obtained by encoding the last execution step of the machine, that is, placing $T_{\text{halt}}$ on the board in a valid tiling.
If there exists a valid tiling with $T_{\text{halt}}$ in the end of a row, a counterexample to induction can be obtained by encoding this valid tiling in the post-state and that same tiling without $T_{\text{halt}}$, which is the last-placed tile, in the pre-state.

%For the converse, assume that the invariant is not inductive over finite structures, i.e., there exists a finite counterexample to induction, and show that the machine halts. The reasoning is as follows:
For the converse, assume that the invariant is not inductive over finite structures, i.e., there exists a finite counterexample to induction, and show that there exists a solution to the halting-tiling problem. The reasoning is as follows:
A finite state satisfying the invariant induces a valid finite partial tiling (defined by the active area of the board in the structure).
Since the transition system always places a tile that respects the adjacency relations, it is easy to see that a counterexample to induction must be such that the transition places $T_{\text{halt}}$ on the board in the end of a row, and that this also induces a valid partial finite tiling in the post-state.
Thus a finite counterexample to induction implies the existence of a valid finite tiling with $T_{\text{halt}}$ in the end of a row, %which encodes a terminating computation of the machine.
which is a solution to the halting-tiling problem.
%If the machine never halts there cannot be a valid finite partial tiling with $T_{\text{halt}}$ on the board, because it would imply a valid finite run of the machine that ends in a halting configuration.
%Namely \yotam{is it namely?}, a valid finite partial tiling necessarily represents a reachable configuration of the Turing machine. If $T_{\text{halt}}$ is on the board, it implies that a halting configuration is reachable. Therefore $T_{\text{halt}}$ cannot be on the board, and the invariant must be preserved.

%Thus the invariant is inductive iff the Turing machine never halts.
Thus the invariant is inductive iff the halting-tiling problem does not have a solution.
\end{proof}

\subsection{Inductiveness Over General Structures}
\begin{theorem}%
\label{thm:undecidability-infinite}
It is undecidable to check given $I \in \AE(\Sigma)$ and $\delta \in \EPR(\Sigma)$ whether $I$ is inductive for $\delta$ over general (finite and infinite) structures.
\end{theorem}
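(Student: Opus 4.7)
The plan is to reduce from the halting problem by adapting the tiling construction of \Cref{thm:undecidability-finite}, replacing the bounded rectangular tilings of that proof with (possibly infinite) \emph{lower triangular} tilings. Given a Turing machine $M$, I would construct $\delta \in \EPR(\Sigma)$ and $I \in \AE(\Sigma)$ such that $I$ is inductive for $\delta$ over general structures iff $M$ halts on the empty input. The vocabulary is reused from the finite-case proof (total order with minimum $0$, immediate-predecessor relation, tile relations $\{T_k\}$), but with the $\maxcol$ constant dropped so that the intended positions are pairs $(i,j)$ with $j \le i$ and row $i$ encodes step $i$ of $M$'s computation. A fresh nullary (Boolean) relation $b$ is also added.

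The invariant $I$ asserts: (a) $T_{\text{start}}$ is placed at $(0,0)$; (b) no $T_{\text{halt}}$ is placed anywhere; (c) all placed tiles respect the horizontal and vertical adjacency relations; (d) the same forward-form clause used in \Cref{thm:undecidability-finite}, now \emph{without} the exception at the maximal location, stating that every placed tile has a successor tile respecting adjacency (this is the only $\AE$ clause in $I$); and (e) $b$ is false. The transition $\delta$ simply flips $b$ from false to true while leaving every other symbol unchanged, which is plainly EPR. Take $\Init = \forall x.\, x \ne x$ and $\PhiProp = \top$ so that $\Init \to I$ and $I \to \PhiProp$ hold vacuously, satisfying the auxiliary requirements of the surrounding problem statement.

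The reduction argument is then direct: since $\delta$ forces $b'$ to be true, every post-state fails $I'$, so $I \land \delta \land \neg I'$ is satisfiable iff $I$ itself is satisfiable. Any model of $I$ must be infinite (clauses (a) and (d) rule out finite models, since starting from $(0,0)$ one obtains an infinite chain of tiles along the vertical successor) and encodes a valid infinite lower triangular tiling starting at $T_{\text{start}}$ and avoiding $T_{\text{halt}}$; by the standard tiling-to-computation correspondence, such a model exists iff $M$ does not halt. Hence $I$ is inductive iff $M$ halts, yielding undecidability. The main obstacle I anticipate is formulating clause (d) in purely relational $\AE$ form and ensuring no degenerate models satisfy $I$ vacuously (e.g., by an ill-formed order for which the successor relations are trivially satisfied); this is handled by including universal axioms pinning down the total order and immediate-successor/predecessor relations exactly as in the finite proof, combined with the anchor (a), which propagates the existence of tiles to a genuinely infinite chain along $(i{+}1, j)$.
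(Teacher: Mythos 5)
Your reduction is correct, but it takes a genuinely different route from the paper's. You make the transition relation trivial---a single bit-flip of $b$---so that $I \land \delta \land \neg I'$ is satisfiable iff $I$ itself is satisfiable, and you then push all the work into the satisfiability of the $\AE$ sentence $I$, which holds iff a valid infinite lower-triangular tiling (hence a non-halting run of $M$) exists. This is sound: the frame conditions and the order axioms are universal, the forward-successor clause is the only $\AE$ part, and the anchor at $(0,0)$ forces the infinite chain, so $I$ is inductive iff $M$ halts. The paper explicitly anticipates this style of argument and deliberately avoids it: it notes that one can always reduce from $\AE$-satisfiability using a (near-)trivial transition relation, but that such a reduction ``leaves more room for questions about decidability of the problem w.r.t.\ classes of transition relations more realistic and structured.'' The paper's proof instead builds a transition system that actually places tiles one location at a time, and detects non-inductiveness via a predicate $P$ that is propagated along board successors together with a flag $f$: on a finite active area, induction along the board order forces $P$ at the maximal location and $f$ stays true, whereas an infinite tiling admits a model with a row at position $\omega$ where $P$ fails and the transition falsifies $f$. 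What your approach buys is brevity and a very clean equivalence with tiling satisfiability; what the paper's approach buys is robustness of the undecidability claim---the counterexample to induction arises from a ``realistic'' incremental computation rather than from an invariant that is simply unsatisfiable or valid, and the same machinery (the forward-form tiling clause) is shared with the finite-structure proof of \Cref{thm:undecidability-finite}. If you keep your version, you should at least remark on this degeneracy, since a reader may object that your $\delta$ carries no computational content.
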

The proof is based on a reduction from a variant of tiling problems. We start by defining the specific tiling problem used in the proof of this theorem:
\begin{definition}%
\label{def:tilability-infinite}
A \emph{lower-triangular infinite-tiling problem} consists of a finite set of tiles $T$,
along with horizontal and vertical adjacency relations $\mathcal{H},\mathcal{V} \subseteq T \times T$.
A \emph{solution} to a lower-triangular infinite-tiling problem is an arrangement of instances of the tiles in the lower-triangular plane
(i.e., a total function $\{(i, j) \in \Nat \times \Nat \; | \; i \leq j\} \to T$)
where the adjacency relationships $\mathcal{H},\mathcal{V}$ are respected, meaning: if a tile $t_2$ appears immediately to the right of $t_1$ it must hold that $(t_1, t_2) \in \mathcal{H}$, and if a tile $t_2$ appears immediately below $t_1$ it must hold that $(t_1, t_2) \in \mathcal{V}$.
\end{definition}
The problem is undecidable. %(similar to ~\cite{BGG}) \TODO{provide proof argument instead of the citation}.
The proof is by a reduction from the non-halting problem: given a Turing machine we can compute a lower-triangular infinite-tiling problem such that the problem has a solution iff the machine does not halt (on the empty input). %The encoding is similar to the one from \Cref{def:tilability-halting}.
The encoding is similar to~\cite{CSL:ImmermanRRSY04}.

\begin{proof}[Proof of \Cref{thm:undecidability-infinite}]
%The proof is by a reduction from the halting problem expressed via tiling
The proof is by a reduction from non-tilability in the lower-triangular infinite-tiling problem (\Cref{def:tilability-infinite})
to the problem of checking inductive invariants for safety of a transition system (over general structures) where the initiation and safety requirements are valid.

We construct a transition relation similar to the one in the proof of \Cref{thm:undecidability-finite}, with some changes, as described below.

\ultpara{Discussion and motivation.}
\TODO{R2: intuitive explanation unhelpful (if this is where they are referring to)}
To provide some intuition to the difference between the reductions, we remark that both of the proofs in this section are in essence a reduction from the halting (or non-halting) problem.
The proof of \Cref{thm:undecidability-finite} encodes runs of the machine as finite tilings, and asks whether a tiling that represents a terminating computation, encoded by $T_{\text{halt}}$, is possible.
This reduction is no longer adequate when structures may be infinite.
The reason is that an infinite valid partial tiling may not correspond to reachable configurations of the Turing machine, so there may be such an infinite tiling with $T_{\text{halt}}$ even though the Turing machine never halts.\footnote{
    One way to construct such a tiling, using a tile in row $\omega$ of the board, is utilized in the proof that follows.
}

In fact, the reduction in this proof must be in the opposite direction: the invariant should be inductive iff the machine terminates, whereas in the proof of \Cref{thm:undecidability-finite} the invariant is inductive iff the machine does not terminate. This is because satisfiability is recursively-enumerable over finite structures and co-recursively-enumerable over general structures (due to the existence of proofs), which reflects on checking inductiveness through the satisfiability check of the formula $I \land \delta \land \neg I'$.

Thus, we would like to have a counterexample to induction when the machine never halts, i.e.\ has an infinite run. As before, runs of the machine are encoded via tiling, only that now an infinite structure can encode an infinite run of the machine. (It is not necessary that an infinite tiling represents a valid infinite run of the machine, but every infinite run can be represented by such a structure.)
We would like to ``detect'' this situation.
Our way to do this is by the observation that induction on the number of rows, or execution steps, must hold when the number of rows is finite (but unbounded), as in \Cref{thm:undecidability-finite}, but does not necessarily hold when there may be an infinite number of rows.
This idea is implemented by a relation $P$ with the invariant that it is preserved under successive board locations. In an infinite structure this does not imply that $P$ is true for all locations. A flag $f$ is used to express a transition that is aware of $P$ not being globally true.
\commentout{
	A technical complication arises because to make sure that the infinite number of rows truely represents an infinite number of computation steps, we also need to make sure that the number of columns is infinite (representing the entire tape).
	This is done using a relation $P$ that ``detects'' the infinite number of columns similarly to $H$.
	We are assured in the existence of an infinite execution when both $H$ and $P$ ``detect'' infinity of the tiling, and we turn the flag $f$ to $\false$ to express this.
}

Another technical detail is the lower-triangular formulation of the tiling problem, which is used to construct the infinite computation of the transition system by placing a single tile in each step.

Returning to the proof, we describe the reduction and highlight its differences from the reduction in \Cref{thm:undecidability-finite}.
Following the lower-triangular formulation of the tiling problem, we restrict the board order to the lower-triangular part (locations $(i,j)$ such that $i \leq j$)  and ignore other locations when considering successor in the board order.

\ultpara{Vocabulary.}
We add % \commentout{a relation $H$ over the row indices (elements of the vertical order),}
a relation $P$ over board locations, and a Boolean flag (nullary predicate)~$f$.

\ultpara{Transitions.}
%In addition, we hold a relation $H$ with the invariant that it is preserved under successor on the board for all active locations, i.e.: If $(i_1,j_1),(i_2,j_2)$ are active locations and $(i_2,j_2)$ is the successor of $(i_1,j_1)$ w.r.t.\ the board order, then if $H$ holds for $(i_1,j_1)$ it must also hold for $(i_2,j_2)$.
In each step the transition system places a valid tile in the next board location, similar to the proof of \Cref{thm:undecidability-finite}.
The difference is that the criterion for moving to place tiles in the next tile is when the current location $(i,j)$ has $i=j$ (whereas in \Cref{thm:undecidability-finite} the criterion was $j=\maxcol$).
\commentout{
	To maintain the invariant that $H$ is preserved under successor of active rows (see below), when we place a new tile, if $H$ holds for the current line (the element of the vertical order of the maximal active location) after placing the tile, set $H$ to true for the row after (next element of the vertical order) as well.
	(For ease of expression, in this proof we let the transition system place the first tile itself.)
}

To maintain the invariant that $P$ is preserved under successor of active tiles in the board, when we place a new tile, if $P$ holds for the maximal location before the step, set $P$ to true for the new maximal location.
%To separate between rows, if we are placing a tile in the beginning of a row ($0$ of the vertical order), set $P$ to true regardless of the previous maximal line.

%If $H$ does not hold for the current row and $P$ does not hold for the new maximal location, turn $f$ to $\false$.
If $P$ does not hold for the new maximal location, turn $f$ to $\false$.
%To maintain this, whenever we place a tile on the board we set $H$ to true for that location. (Initially $H$ is true on the initial location only.)

%We also add a flag $f$ initialized to $\true$.
%When we make a transition, if $H$ does not hold on the maximal location, turn $f$ to $\false$.

\ultpara{Initial state.}
$P$ is true for the first location $(0,0)$ only, and $f$ is $\true$.
In this proof, initially the board is empty.

\ultpara{Safety property.}
The safety property now asserts that $f$ is $\true$.

\ultpara{Invariant.}
As before, the invariant states that the active board represents a valid partial tiling, i.e.\ every active tile except for the maximal one has a valid successor. %In addition, the invariant states that $H$ holds for all elements of the active domain (this is reminiscent of induction).

\commentout{In addition, the invariant states that $H$ is preserved under successor of active board rows, i.e.: If $i_1,i_2$ are active rows --- meaning $i_1,i_2 \leq i$ where $M(i,j)$ --- and $i_2$ is the successor of $i_1$ w.r.t.\ the vertical order, then if $H$ holds for $i_1$ it must also hold for $i_2$.
}

The invariant also states that $P$ is preserved under successor of board\commentout{ locations in any line}, i.e., if $(i_1,j_1)$ and $(i_2,j_2)$ are successive active board locations w.r.t.\ the board order, then if $P$ holds for $(i_1,j_1)$ is must also hold for $(i_2,j_2)$.
We also conjoin the safety property to the invariant.

\ultpara{Reduction argument.}
%If there is an infinite run of the Turing machine we can construct a structure encoding the corresponding tiling.
The invariant holds for the initial state, and trivially implies the safety property.

%Assume that the machine halts, and show that the invariant is inductive. The reasoning is as follows:
Assume that there is no solution to the lower-triangular infinite-tiling problem, and show that the invariant is inductive.
The reasoning is as follows:
A state satisfying the invariant induces a partial valid tiling --- either finite or infinite --- over the active area of the board.
%If the machine halts, this corresponds to either a partial run of the machine, or $T_{\text{halt}}$ appears on the board.
%A partial valid tiling corresponds to a prefix of the machine run, so if the machine halts it must be finite, and so the active area is finite.
Since there is no valid partial tiling with an infinite number of rows, the number of active locations must be finite (the number of columns in the active domain is bounded by the number of rows, since we are discussing lower-triangular tilings).
%If the machine halts, this corresponds to either a partial run of the machine, or $T_{\text{halt}}$ appears on the board.
%In any case the active area must be finite.
%Thus, because $H$ is preserved under successor for active rows, by induction on the number of rows in the active area, $H$ must hold for the current row.
%If the number of active rows is finite, because $H$ is preserved under successor for active rows, we have that $H$ must hold for the current row, by induction on the number of rows in the active area.
%If the number of columns is finite,
Because $P$ is preserved under successor of the board order\commentout{in every line}, by induction on the number of locations\commentout{(in the current line)}, $P$ must hold for the maximal location.
\commentout{Either way, a}After a transition is taken, $f$ remains $\true$.
Since the transition system always places a tile that respects the horizontal and vertical ajdacency relations and sets $P$ to true for the new maximal location, it is easy to see that the rest of the invariant is preserved by a transition as well.
%If the machine halts, a partially valid tiling corresponds to either a partial run of the machine, or $T_{\text{halt}}$ appears on the board. No tile can come after $T_{\text{halt}}$. Therefore the invariant implies that in this case the partial tiling of the board must be finite.
%Thus, by induction on the number of tiles in the active area, we can never turn $f$ to $\false$.

%If the machine does not halt then there is an infinite structure encoding an infinite run.
For the converse direction, if there is a solution to the lower-triangular infinite-tiling problem, then there is an infinite structure encoding this tiling.
%(We take $\maxcol$ to be an infinite ordinal of the horizontal order, to allow an infinite number of columns.)
The transition begins with the infinite valid tiling, with a new additional row \emph{after} this infinite sequence of tiled rows.
(Recall that the board dimensions are axiomatized using a total order; the additional row index corresponds to ordinal $\omega$ of vertical order.)
We place some tile in the first column of this row as in some valid row in the tiling.
Note that when placing tiles in this row we need not worry about vertical constraints, because they were expressed in a forward fashion, and this row is not a successor of any other row.
The first leftmost location in the new row is set to be the maximal active one, and we set $P$ to be $\false$ for this location.
Note that this does not violate the invariant: $P$ is preserved under successor of the board location, but nonetheless does not hold for the location in the additional row (it is not the successor of any location).
The transition will now place a new tile and turn $f$ to $\false$, $P$ does not hold for the current maximal location, thereby violating the invariant.

\commentout{
	The transition begins with the infinite valid tiling, with a new valid row in an additional row \emph{after} this infinite sequence of tiled rows.
	(Recall that the board dimensions are axiomatized to be total orders; such a tile is placed in a location corresponding to ordinal $\omega$ of the vertical order.)
	We can tile this row exactly the as some other arbitrary line in the tiling and need not worry about vertical constraints, because they were expressed in a forward fashion, and this row is not a successor of any other row.
	We set $H$ to be $\false$ for this additional row.

	We also add two successive columns (successive elements of the horizontal order) \emph{after} the infinite tiling (such elements correspond to ordinals $\omega$ and $\omega + 1$ of the horizontal order). Since the constraints were expressed in a forward fashion, there are no horizontal constraints on the first of these columns, as it is not the successor of any column. Therefore we use two successive tiled columns from the valid tiling and use the same tiling for the new columns.
	We set $P$ to be $\false$ for the location in these columns in the new row.

	The location in the newly added row and the first new column (i.e. $(\omega, \omega)$) is set to be the maximal active one.
	Note that this does not violate the invariant: $H$ is preserved under successor of active rows, but nonetheless does not hold for all active rows (in essence, the induction fails).
	Similarly, $P$ is preserved under successor of columns in the last line, but nonetheless does not hold for all columns.
	The transition will now place a new tile and turn $f$ to $\false$, because $H$ does not hold for the current line and $P$ does not hold for the current maximal location, thereby violating the invariant.
}

%Thus the invariant is inductive iff the Turing machine halts.
Thus the invariant is inductive iff the infinite tiling problem does not have a solution.
\end{proof}
%%% Local Variables:
%%% mode: latex
%%% TeX-master: "main.tex"
%%% End:

\end{document}